\newcounter{Ccounter}
\newcommand{\Clast}{\ensuremath{C_{\theCcounter}}}
\newcommand{\C}{
	{\refstepcounter{Ccounter}\Clast}
}
\newcommand{\Cp}[1]{
	\refstepcounter{Ccounter}
	\label{CC#1}
}
\newcommand{\Cl}[1]{
	\refstepcounter{Ccounter}
	\label{CC#1}
	\Clast
}
	\newcommand{\Cr}[1]{\ensuremath{C_{\ref*{CC#1}}}}
	\newcommand{\Cr}[1]{\ensuremath{C_{\ref{CC#1}}}}
\newtheorem{theorem}{Theorem}
\newtheorem{lemma}[theorem]{Lemma}
\newtheorem{corollary}[theorem]{Corollary}
\newtheorem{remark}[theorem]{Remark}
\newtheorem{definition}[theorem]{Definition}
\numberwithin{equation}{section}
\newcommand{\bs}{\boldsymbol}
\newcommand{\wt}{\widetilde}
\newcommand{\ind}{\mathbbm{1}}
\DeclareMathOperator{\degree}{deg}
\DeclareMathOperator{\Pf}{Pf}
\DeclareMathOperator{\PPf}{PPf}
\DeclareMathOperator{\range}{range}
\DeclareMathOperator{\supp}{supp}
\DeclareMathOperator{\sign}{sign}
\newcommand{\wickord}[1]{:\mathrel{#1}:}
\newcommand{\floor}[1]{\left\lfloor {#1} \right\rfloor}
\newcommand{\intrange}[2]{{\llbracket#1,#2\rrbracket}}
\newcommand{\dd}{\,\text{\rm d}}             %
\renewcommand{\Re}{\operatorname{Re}}
\newcommand{\ul}[1]{{\ensuremath{\underline{#1}}}}
\newcommand{\lis}[1]{{\ensuremath{\overline{#1}}}}
\newcommand{\Dmin}{\ensuremath{D_\textup{min}} }
\newcommand{\GH}{\ensuremath{\textup{GH}} }
\newcommand{\hk}{\ensuremath{\textup{hk}}}
\newcommand{\inter}{\ensuremath{\textup{int}} }
\newcommand{\many}{\ensuremath{\textup{many}} }
\newcommand{\chiduo}{\ensuremath{\chi_{\textup{duo}}}}
\newcommand{\chimono}{\ensuremath{\chi_{\textup{mono}}}}
\newcommand{\chirel}{\ensuremath{\chi_{\textup{irrel}}}}
\newcommand{\chitree}{\ensuremath{\chi_{\textup{tree}}}}
\newcommand{\bC}{\ensuremath{\mathbb{C}}}
\newcommand{\bN}{\ensuremath{\mathbb{N}}}
\newcommand{\bR}{\ensuremath{\mathbb{R}}}
\newcommand{\bZ}{\ensuremath{\mathbb{Z}}}
\newcommand{\cC}{\ensuremath{\mathcal{C}}}
\newcommand{\cF}{\ensuremath{\mathcal{F}}}
\newcommand{\cH}{\ensuremath{\mathcal{H}}}
\newcommand{\cI}{\ensuremath{\mathcal{I}}}
\newcommand{\cL}{\ensuremath{\mathcal{L}}}
\newcommand{\cO}{\ensuremath{\mathcal{O}}}
\newcommand{\cP}{\ensuremath{\mathcal{P}}}
\newcommand{\cQ}{\ensuremath{\mathcal{Q}}}
\newcommand{\cR}{\ensuremath{\mathcal{R}}}
\newcommand{\cS}{\ensuremath{\mathcal{S}}}
\newcommand{\cT}{\ensuremath{\mathcal{T}}}
\newcommand{\cV}{\ensuremath{\mathcal{V}}}
\newcommand{\cW}{\ensuremath{\mathcal{W}}}
\newcommand{\cY}{\ensuremath{\mathcal{Y}}}
\newcommand{\fA}{\ensuremath{\mathfrak{A}}}
\newcommand{\fD}{\ensuremath{\mathfrak{D}}}
\newcommand{\fH}{\ensuremath{\mathfrak{H}}}
\newcommand{\fI}{\ensuremath{\mathfrak{I}}}
\newcommand{\fL}{\ensuremath{\mathfrak{L}}}
\newcommand{\fP}{\ensuremath{\mathfrak{P}}}
\newcommand{\fT}{\ensuremath{\mathfrak{T}}}
\newcommand{\fW}{\ensuremath{\mathfrak{W}}}
\newcommand{\sB}{\ensuremath{\mathscr{B}}}
\newcommand{\sV}{\ensuremath{\mathscr{V}}}
\newcommand{\sY}{\ensuremath{\mathscr{Y}}}
\newcommand{\fin}{\textrm{finite}}
\newcommand{\diffy}{\ensuremath{\mathsf{diffy}}}
\newcommand{\diffyn}{\ensuremath{\diffy_n}}
\newcommand{\symm}{\ensuremath{\mathsf{symm}}}
\newcommand{\slow}{\ensuremath{\mathsf{slow}}}
\newcommand{\slown}{\ensuremath{\mathsf{slow}_n}}
\newcommand{\kernel}{\ensuremath{\mathsf{kernel}}}
\newcommand{\kerneln}{\ensuremath{\mathsf{kernel}_n}}
\newcommand{\kernelfin}{\ensuremath{\kernel_\fin}}
\newcommand{\mi}{\ensuremath{\mathsf{mi}}}
\newcommand{\pot}{\ensuremath{\mathsf{pot}}}
\newcommand{\potfin}{\ensuremath{\pot_\fin}}
\newcommand{\test}{\ensuremath{\mathsf{test}}}
\newcommand{\testn}{\ensuremath{\mathsf{test}_n}}
\newcommand{\tn}{\ensuremath{\mathsf{tn}}}
\newcommand{\Grassmann}{\ensuremath{\mathsf{Grassmann}}}
\newcommand{\Grassmannn}{\ensuremath{\mathsf{Grassmann}_n}}
\newcommand{\leg}{\ensuremath{\mathsf{leg}}}
\newcommand{\edge}{\ensuremath{\mathsf{edge}}}
\newcommand{\el}{\ensuremath{\mathsf{el}}}
\newcommand{\EL}{\ensuremath{\mathsf{EL}}}
\newcommand{\IG}{\ensuremath{\mathsf{IG}}}
\newcommand{\LL}{\ensuremath{\mathsf{LL}}}
\newcommand{\TL}{\ensuremath{\mathsf{TL}}}
\title{Constructing a weakly-interacting fixed point of the Fermionic Polchinski equation}
\author{Rafael Leon Greenblatt\thanks{E-mail address: \href{mailto:greenblatt@mat.uniroma2.it}{greenblatt@mat.uniroma2.it}} %
\\ Dipartimento di Matematica \\ Universit\`a degli Studi di Roma ``Tor Vergata'' \\ Rome, Italy.}  
\date{\today}
\begin{document}

\maketitle

\begin{abstract}
	I rigorously prove the existence of a nontrivial fixed point of a family of continuous renormalization group flows corresponding to certain weakly interacting Fermionic quantum field theories with a parameter in the propagator allowing the scaling dimension to be tuned in a manner analogous to dimensional regularization.
\end{abstract}

\tableofcontents

\section{Introduction}
\label{sec:intro}

The (Fermionic)\footnote{\Cref{eq:flow} also has a better known Bosonic version where a complex-valued function $\phi$ takes the place of $\psi$ and $\cV_\Lambda$ is an ordinary functional; this was the main subject of \cite{Pol}, where the equation was first introduced.} Polchinski equation is 
\begin{equation}
	\frac{\partial}{\partial \Lambda} \cV_\Lambda(\psi)
	=
	\frac{1}{2} \left\langle \frac{\delta}{\delta \psi}, G_\Lambda \frac{\delta}{\delta\psi} \right\rangle \cV_\Lambda (\psi)
	-
	\frac12 \left\langle \frac{\delta}{\delta \psi} \cV_\Lambda(\psi), G_\Lambda \frac{\delta}{\delta \psi} \cV_{\Lambda}(\psi) \right\rangle
	\label{eq:flow}
\end{equation}
where $\Lambda$ is a positive real parameter, 
$\cV_\cdot (\psi)$ is a map from $(0,\infty)$ to a Grassmann algebra (so the product is noncommutative) with a set of generators $\left\{ \psi_a (x) \right\}$ indexed by a position $x$ and a discrete index $a$ taking $N$ allowed values (see \cref{sec:Grassmann} for a more precise definition); the variational derivative is a Grassmann derivative, $G_\Lambda$ is an integral operator (I will also use the same symbol for its kernel) and $\left\langle \cdot,\cdot \right\rangle$ is an $L^2$ inner product contracting the arguments of the derivatives.  
It is often convenient to ignore the constant or C-number term of $\cV_\Lambda$ (that is, the part in the subspace of the Grassmann algebra spanned by its unit), since it may be ill-defined and in any case it does not appear on the right hand side of \cref{eq:flow} since all of its Grassmann derivatives vanish; I will do so implicitly for now, and explicitly in \cref{sec:flow_formal} below when I rewrite \cref{eq:flow} in the notation I use for the construction of a fixed point.

This equation was introduced by Polchinski \cite{Pol} as an alternative, simpler formulation of the continuous-parameter Renormalization Group (RG) introduced by Wilson \cite{Wilson1}.  In physical terms, $\cV_\Lambda$ is an effective interaction, which together with the cutoff propagator 
$P_\Lambda := \int_0^{\Lambda} G_\Theta \dd \Theta$
defines an effective description of a quantum field theory depending on the scale parameter $\Lambda$.  \Cref{eq:flow} was obtained by comparing different expressions for the same functional (path) integral, so that the different effective descriptions corresponding to a single solution of the equation evaluated at different values of $\Lambda$ describe the same physical system, and are consistent insofar as they describe the same phenomena at long distance scales, in a sense that depends somewhat on the details of how the propagator depends on $\Lambda$.
In this framework, a \emph{(quantum field) theory} or \emph{model} corresponds to a specification of a set of possible interactions (usually involving symmetry properties) and the full propagator $\int_0^\infty G_{\Lambda} \dd \Lambda$, while the specific form of $G_\Lambda$ corresponds to the choice of an \emph{(ultraviolet) cutoff}.  
Typically there are a variety of reasonable cutoffs for any given model which are expected to produce the same physical outcome.

One quirk of this interpretation is that \cref{eq:flow} is usually thought of as describing an evolution from initial conditions (the ``bare action'') at a large value of $\Lambda$ (or in the ``ultraviolet'' limit $\Lambda \to \infty$) to smaller values, contrary to the usual convention for dynamical systems.  On closer examination, however, many of the interesting problems (including the one which the equation was originally presented to study \cite{Pol}) are mixed boundary value problems with a combination of conditions on the bare action and conditions at some finite $\Lambda$ (``renormalization conditions'') or on the asymptotic behavior for $\Lambda \to 0$ (the ``infrared limit'').

Since I wish to stress that the question of the existence of solutions of the Polchinski equation is conceptually and logically separate from the question of its applicability to physical problems or its relationship to other mathematical descriptions (such as path integrals) I will not say anything further about them here.  Instead I take \cref{eq:flow} as a starting point (which is also why I mostly treat $G_\Lambda$ as a more basic object than $P_\Lambda$).

Solutions of \cref{eq:flow} of the following special type, known as scaling solutions or as fixed points of the renormalization group flow, play a central role in many applications, in particular to critical phenomena.
Suppose that $G_\Lambda$ has the self-similarity property
\begin{equation}
	G_\Lambda(x,y) = \Lambda^{2[\psi]-1} G ( \Lambda x,  \Lambda y)
	\label{eq:propagator_scaling}
\end{equation}
(understanding this as an equality between $N\times N$ matrices, so that the discrete indices are unaffected) for some $[\psi] > 0$, which is known as the scaling dimension of the field $\psi$.
Introducing $\wt D_{\Lambda}(\psi)(x) := \psi( \Lambda x)$ and $[\fD_\Lambda \cV] (\psi) := \cV \left( \Lambda^{-[\psi]} \wt D_{ \Lambda} (\psi) \right)$ (more carefully defined in \cref{eq:dilation_maindef} below), a solution of the form
\begin{equation}
	\cV_\Lambda = \fD_\Lambda \cV^*
	\label{eq:scaling_solution}
\end{equation}
is called a scaling solution.
For such a solution, \cref{eq:flow} becomes
\begin{equation}
	\frac{\partial}{\partial \Lambda} \left[ \fD_\Lambda \cV^* \right]
	=
	\begin{aligned}[t]
		&
		\frac{1}{2} \left\langle \frac{\delta}{\delta \psi}, G_\Lambda \frac{\delta}{\delta\psi} \right\rangle \left[ \fD_\Lambda \cV^* \right] (\psi)
		\\ & \quad
		-
		\frac12 \left\langle \frac{\delta}{\delta \psi} \left[ \fD_\Lambda \cV^* \right](\psi), G_\Lambda \frac{\delta}{\delta \psi} \left[ \fD_\Lambda \cV^* \right](\psi) \right\rangle
		.
	\end{aligned}
	\label{eq:stationary_Grassmann}
\end{equation}
note that
when $G_\Lambda$ satisfies \cref{eq:propagator_scaling}, the definition of $\fD_\Lambda$ is exactly such that this condition is actually independent of $\Lambda$, i.e.\ if it holds for $\Lambda=1$ then it also holds for all $\Lambda \in (0,\infty)$.
This scale independence can be made more prominent as follows: considering the reparameterization
\begin{equation}
	\cV_\Lambda(\psi) 
	=
	[\fD_\Lambda \cW_{\log \Lambda}] (\psi)
	,
	\label{eq:dilation_cW}
\end{equation}
(note the logarithmic scale for $\cW$)
and letting $\Delta$ denote the generator of the one-dimensional semigroup $D_{e^\eta}$,
if $\cW_\eta$ satisfies the equation
\begin{equation}
	\frac{\partial}{\partial \eta}
	\cW_\eta (\psi)
	=
	\frac{1}{2} \left\langle \frac{\delta}{\delta \psi}, G \frac{\delta}{\delta\psi} \right\rangle \cW_\eta (\psi)
	-
	\Delta \cW_\eta (\psi)
	-
	\frac12 \left\langle \frac{\delta}{\delta \psi} \cW_\eta(\psi), G \frac{\delta}{\delta \psi} \cW_\eta(\psi) \right\rangle
	\label{eq:pol_functional}
\end{equation}
this implies that $\cV_\Lambda$ satisfies \cref{eq:flow}.
\Cref{eq:pol_functional}, unlike \eqref{eq:flow} is autonomous (i.e.\ there is no explicit dependence on the scale parameter), and so it may well have fixed points, i.e.\ solutions of the form $\cW_\eta \equiv \cV^*$.
Comparing \cref{eq:stationary_Grassmann,eq:dilation_cW}, fixed points of \cref{eq:pol_functional} (that is, fixed points of the RG flow) correspond to scaling solutions of \cref{eq:flow} and vice versa.  I will sometimes find it useful to useful to pass back and forth between the two formulations.

Such fixed points are associated with self-similar behavior and the presence of scaling exponents in translation-invariant models, though of course there are further generalizations (for example, for disordered systems one can consider solutions which are only fixed points in distribution, which seems to be necessary to accommodate infinite-order phase transitions \cite{Fisher.PRB,GG.critical,CGG.continuum}) and other mechanisms for producing such behavior.

Actually, I will make a further manipulation which simplifies the fixed-point problem somewhat.   Considering the Wick ordering of the interaction with respect to the UV-cutoff propagator $P_\Lambda $, defined by
\begin{equation}
	\wickord{\cV(\psi)}_\Lambda
	\ 
	:=
	\sum_{m=0}^\infty
	\frac{1}{m!}
	\left[ -\frac{1}{2} \left\langle \frac{\delta}{\delta\psi} , P_\Lambda \frac{\delta}{\delta \psi} \right\rangle \right]^m \cV(\psi)
	,
	\label{eq:Wick_order_formal}
\end{equation}
then at least formally
\begin{equation}
	\begin{split}
		\frac{\partial}{\partial \Lambda} 
		\wickord{\cV(\psi)}_\Lambda
		&=
		-\frac{1}{2}
		\left\langle \frac{\delta}{\delta\psi} , G_\Lambda \frac{\delta}{\delta \psi} \right\rangle  \wickord{\cV(\psi)}_\Lambda
	\end{split}
	\label{eq:Wick_deriv_formal}
\end{equation}
which combines with \cref{eq:flow} to give
\begin{equation}
	\begin{aligned}
		\frac{\partial}{\partial \Lambda} \wickord{\cV_\Lambda(\psi)}_\Lambda
		& =
		-
		\frac12 \wickord{\left\langle \frac{\delta}{\delta \psi} \cV_\Lambda(\psi), G_\Lambda \frac{\delta}{\delta \psi} \cV_{\Lambda}(\psi) \right\rangle}_\Lambda
		\\ 
		& = 
		-
		\frac12 
		\sum_{m=0}^\infty
		\frac{(-1)^m}{m!}
		\left\langle \frac{\delta}{\delta\psi} , P_\Lambda \frac{\delta}{\delta \varphi} \right\rangle^m
		\left\langle \frac{\delta}{\delta \psi} \wickord{\cV_\Lambda(\psi)}_\Lambda, G_\Lambda \frac{\delta}{\delta \varphi}\wickord{\cV_\Lambda(\varphi)}_\Lambda \right\rangle
		\Big|_{\varphi=\psi}
	\end{aligned}
	\label{eq:flow_Wick_formal}
\end{equation}
that is, a differential equation for $\wickord{\cV_\Lambda(\psi)}_\Lambda$, in which the right hand side is a quadratic form with no linear part.
Graphically, it corresponds to taking two vertices and making an arbitrary nonzero number of contractions between the two (no self contractions), assigning the value $-G_\Lambda$ to one of the lines and $-P_\Lambda$ to the others.  

\medskip

In this work I will present some techniques for rigorously constructing weakly interacting fixed points in this sense, in particular for which the various equivalences mentioned above can be verified.
In particular I give an alternative construction of a nontrivial fixed point 
of the model studied in \cite{GMR}\footnote{An almost identical proof also holds for the model considered in \cite{GK85}, with the main change being that \cref{eq:beta_with_error} should be replaced by a rearranged version of \cite[Equations~(2.20-2.21)]{GK85}; I do not spell this out because I only became aware of \cite{GK85} after this article had been submitted for publication and one example seemed sufficient.}, which I present in more detail in \cref{sec:example}.  
Compared to \cite{GMR} who study a discrete RG transformation, I study a continuous flow equation, which has the advantage that the resulting fixed point is evidently invariant (or covariant) under arbitrary rescaling.  
I would like to think that the proof in this paper is also somewhat more straightforward, but it is not as direct as one might hope: 
I solve the fixed point problem by an ansatz in which the irrelevant parts are given by a function of the local relevant part, defined via a tree expansion.  Once this is done, however, the possible choices of the relevant part form a finite-dimensional space, and the ansatz reduces the fixed point condition to that of finding the zeroes of a vector field on this space, which admits a very well-behaved expansion in for small values of a parameter $\varepsilon$ in the definition of the propagator.
Like \cite{GMR}, this construction is ``nonperturbative'' in the sense that it corresponds to a solution of the exact renormalization group transformation and does not assume the validity of perturbation theory, but ``perturbative'' in the sense that the fixed point can be approximated well by the solution of a truncated problem.

In \cref{sec:Grassmann} I begin the more technical part of the present work by defining the Grassmann algebra \Grassmann\ and associated operations used in the statement of \cref{eq:flow} and the related expressions above.
The Grassmann algebra I introduce is extremely large; for example it allows arbitrary finite sums of terms of the form
\begin{equation}
	\int
	V(x_1,\dots,x_n)
	\frac{\partial^{m_{11}}}{\partial (x_1)_1^{m_{11}}}\dots\psi_{a_1}(x_1)
	\dots
	\frac{\partial^{m_{n d}}}{\partial (x_n)_d^{m_{n d}}}\psi_{a_n}(x_n)
	\dd^d x_1 \dots \dd^d x_n
	,
\end{equation}
with derivatives of any order and for any $V:(\bR^d)^n \to \bC$, which need not be globally bounded, as well as similar terms where some or all of the position arguments are repeated, and some infinite sums.
The only possible further generalization I am aware of with any significant applications would be replacing $\bR^d$ with another Euclidean manifold, which is entirely straightforward at this level of abstraction.  In the context of this space,   we can speak of solutions of the Polchinski equation (for a given choice of the ``parameter'' $ G_\cdot$) in the same sense as any other ordinary differential equation.  

In fact this space is much too large to be of any practical use in constructing solutions of \cref{eq:flow}, and so in \cref{sec:formalism} I introduce a much smaller space and a much finer topology, along with the associated formalism used in the rest of the paper, and prove some basic estimates and other properties that I will use later in the proof. 
The scope of this section will probably be more recognizable than those of \cref{sec:Grassmann} to readers familiar with some of the recent works I have cited above, but there are some important differences.
Previous works on constructive Fermionic renormalization group have described the interaction by a collection of functions or measures with added indices indicating when derivatives should be applied to fields or propagators, using norms which treat the derivative indices like other indices of the fields (see for example \cite{AGG_part1,AGG_part2,GMR,Duch} and citations therein); this has the result that there are several different functions which are physically indistinguishable and correspond to the same element of the original Grassmann algebra (this is discussed in \cite{GMR} in terms of null fields and in \cite{AGG_part1,AGG_part2} in terms of an {\it ad hoc} equivalency relationship).
Instead of doing so, I introduce a family of seminorms describing the interaction as a sequence of distributions, and which consequently do not involve explicit derivative indices.
This defines a locally convex space of ``potentials'' \pot\footnote{It would also be possible to use a Banach space as in \cite{GMR} or a family of Banach spaces as in \cite{Duch}, but this involves further restricting the space in a way which depends on a parameter which I find more natural to introduce later in the demonstration.}, which consists of a subset of the elements of \Grassmann\ with a finer topology than that of the original space.  
I then rewrite the flow equation in a form which is more adapted to the structure of the space \pot, and prove some preliminary estimates about the elements of the resulting expression.  
I also set up the flow equation for the Wick ordered interaction implementing \cref{eq:flow_Wick_formal} in this context.

In \cref{sec:relevant_parts} I introduce my notion of local parts, and prove scaling estimates which verify that this gives an appropriate decomposition into (local) relevant and irrelevant components\footnote{As in \cite{Pol} I will not distinguish marginal terms from relevant ones, so strictly speaking ``relevant'' should be read ``relevant or marginal'' here and later.}.  
Well-informed readers may worry that by doing away with derivative indices I lose track of the fact that derivatives change the scaling dimension of fields/propagators, which is the reason that the remainder after removing the local relevant part of the interaction is dimensionally irrelevant.  In fact, the usual ``interpolation'' estimates can be easily adapted to this new framework, as shown in \cref{lem:f_cR_scaling,lem:fI_few_scaling}.
Furthermore, since I do not introduce derivative indices the irrelevant part is equal to the full interaction minus the local part (rather than a physically equivalent but unequal element of the abstract space of interactions) a trivial bound, \cref{eq:trivial_interpolation_bound}, avoids an apparent divergence in these estimates which was one of the obstacles to taking a continuum limit for the RG flow in \cite{GMR} and required a special treatment of so-called ``dangerous 1-particle irreducible'' graphs in \cite{DR}, complicating the construction and leading to stronger restrictions on the form of $G_\Lambda$ than are actually necessary.

In \cref{sec:fp_construction} I present the central part of the construction of the fixed point.  This is based on a tree expansion adapted from \cite{DR}, which seems to have been the first rigorous construction of a nontrivial continuous RG flow (albeit one with only a trivial fixed point in the region under examination), and perhaps the only one apart from a few subsequent works by the same authors \cite{DR.2D.letter,DR.2D.1,DR.2D.2} until very recently (the preprints \cite{Duch,KMS.localSolution} appeared shortly after the first version of the present manuscript).  This tree expansion (considered together with scale labels) combines features of the Battle-Brydges-Federbush expansion with an expansion in Gallavotti-Nicolò trees into a single expansion; as well as being a more compact formulation this allows better combinatorial bounds which are important to avoiding the additional obstacle to taking a continuum limit in \cite[Appendix~J]{GMR}.

Finally in \cref{sec:conclusions} I conclude the construction of the fixed point and discuss some conclusions and prospects for future work.

\bigskip

Before I proceed let me pause to make some further comparisons with the existing literature.  
As mentioned above, similar scaling solutions were constructed in \cite{GMR} for the same model with a discrete renormalization group flow.  Modulo some technical issues, these correspond to nontrivial solutions of \cref{eq:flow} with the property that
\begin{equation}
	\cV_\Lambda = (\fD_\gamma)^n \cV^*
	, \quad \forall n \in \bZ
	\label{eq:discrete_scaling}
\end{equation}
for some $\gamma \in \bR$.  The construction in question holds for a range of $\gamma$, so these solutions should in fact have the form in \cref{eq:scaling_solution}, but this is not clear because it is not clear that this construction gives the same formulation for different values of $\gamma$, due to complications arising from the treatment of the explicit derivative indices \cite[Remark~6.4]{GMR}.
Unfortunately the proof I give here -- using the tree expansion as an ansatz -- cannot be guaranteed to give all solutions characterized by any more explicit criterion, so as things stand it has not been rigorously established that the fixed points I construct here are equivalent to the ones constructed in \cite{GMR}.  
It should be possible to  overcome this problem by reformulating the construction in \cite{GMR} in terms of the spaces and norms defined in \cref{sec:Grassmann,sec:formalism}.  Since the construction of these solutions was based on the solution of an implicit function problem, the outcome will be the unique solution with a certain set of properties, which should then imply that the solutions of the original discrete problem with different $\gamma$ all correspond to the same element of the space $\Grassmann$ defined in \cref{sec:Grassmann} below, as does the solution I construct.

\medskip

The first rigorous result on the existence of nontrivial solutions to the Polchinski equation was \cite{BK87}, which showed the existence of local solutions (that is, solutions on a compact interval of scales) for Bosonic models with a bounded initial datum (such as the Sine-Gordon model, but not the $\phi^4$ model; see in particular \cite[Equation~(2.6)]{BK87}).  This was followed by an attempt to prove a similar result for the Fermionic case where the relevant boundedness assumption is much less restrictive \cite{BW.local} which was initially unsuccessful \cite{BW.erratum} but has recently been completed by a different set of authors \cite{KMS.localSolution}.  It is probably possible to combine the latter result with constructions of discrete-time renormalization group flows to generate global solutions of the Polchinski equation (although this can probably also be done by simply taking the derivative of the discrete RG transformation with respect to the step size).  Doing so with the fixed point of \cite{GMR} could plausibly lead to an alternative proof of \cref{thm:main_FP}; this involves some subtleties related to the distinction between equality and equivalence there but at worst this should be resolvable using the techniques of \cref{sec:formalism} below.

In 2000 Disertori and Rivasseau \cite{DR}, as mentioned above, presented a global solution for the two-dimensional Gross-Neveu model using a tree expansion strategy \cite[Section~IV.2]{DR} to control the irrelevant parts of the effective action (which, as so often in constructive renormalization group work, is where most of the complicating details of the proof are found).  That strategy is based on a single tree expansion (in the sense of the Battle-Brydges-Federbush formula), rather than repeated tree expansions as in \cite{AGG_part1,AGG_part2,BM01,GMR,GMRS.correlations}, among many others.  
I use a version of this expansion, presented in \cref{sec:tree_def} below, however it plays a different role in the current proof than in its original context, and reusing it in this way is probably the main element which made some form of the main result possible.
The subject of \cite{DR} is the construction of the two-dimensional Gross-Neveu model via a continuous renormalization group procedure, which gives a family of solutions of the Polchinski equation for $\Lambda \in [1,\infty)$ (an ``ultraviolet limit''); so \cite{DR} starts with a regularized path integral, uses a forest formula \cite{AR.botany} to derive a tree expansion for the vertex functions (that is, for the effective action) which is then seen to be absolutely convergent, uniformly in the ultraviolet limit, and to satisfy the Polchinski equation.
I instead introduce the  tree expansion as part of an ansatz for the irrelevant part of a scaling solution of the Polchinski equation.  

The most important feature of this expansion is that makes it possible to substantially reduce the number of terms obtained when further expanding the resulting expression to obtain a sum over products of Pfaffians, 
corresponding to the combinatorial factors discussed which were one of the obstacles (along with the apparent divergence of the ``interpolation'' estimates) to taking a continuum limit of the discrete flow defined in \cite{GMR}.
As discussed in \cite[Appendix~J]{GMR}, carefully controlling the number of terms in this expansion makes it possible to construct a fixed point for $\gamma$ arbitrarily close to $1$, but with estimates that deteriorate in such a way to make it impossible to control the limit $\gamma \to 1^+$.
The expansion I write in \cref{eq:C_ell_tau_explicit} initially contains a similar problematically large number of terms; 
as I detail in \cref{sec:regroup}, regrouping these terms in a way which preserves the Pfaffian structure is nontrivial because of the need to extract the relevant part of the effective interaction at all scales, but it can be done following the procedure of \cite[Section~IV.2]{DR}.
Using the seminorms introduced in \cref{sec:formalism}, I am also able to radically simplify the treatment the ``interpolated'' part of the effective action compared to \cite{DR}, so that \cref{eq:C_sum_integral_bound_basic}, the counterpart of \cite[Equation~IV.98]{DR}, follows from the decomposition along with the bounds in \cref{sec:relevant_parts} as soon as the number of terms is controlled, corresponding to eliminating nearly all of \cite[Section~IV.3]{DR}.
The first part of this simplification is a result of bounding the effect of interpolation (corresponding to removal of the relevant part here, and as cancellation with a part of the counterterm in \cite{DR}) iteratively in terms of a weighted norm, which eliminates the need for a careful accounting of the length of tree lines (parallel the simplification of the comparable treatment in \cite{AGG_part2,GMR} compared to \cite[Section~3.3]{BM01} in the case of a discrete renormalization group transform).
In fact this makes it possible to handle the interpolated part of the effective action outside of the tree expansion (in \cref{thm:interpolated_stationary}), which allows some modest simplifications to the definition of the decomposition in \cref{sec:regroup} compared to \cite{DR}.

Additionally and more consequentially for the outcome of the analysis, considering the interactions as distributions rather than functions allows for some simplifications which also weaken the assumptions which would otherwise be needed. 
Using \cref{lem:fI_few_scaling} (and in particular the trivial bound which follows from \cref{eq:trivial_interpolation_bound}) corresponds to a version of \cite[Lemma~6]{DR} with a constant bound rather than a divergent one, and similarly a version of \cite[Lemma~8]{DR} without the first product on the right hand side of \cite[Equation~(IV.84)]{DR}; as a result the factors with negative exponents in \cite[Equation~(IV.93)]{DR} are absent and there is no need for \cite[Lemma~9]{DR}.  
This is particularly helpful because the degree of divergence in the terms mentioned above depends on the decay of $G_\Lambda (x)$ at long distances, which in \cite{DR} was taken to be exponential, corresponding to the Fourier transform of an analytic function, which is quite restrictive since analytic functions cannot have compact support (it excludes the cutoffs used in \cite{GMR}, although the compact support property appears to have only played a minor role there).  A more careful analysis in \cite[Section~III.3.2]{DR.2D.2} found that an equivalent of \cite[Lemma~9]{DR} holds with a decay like $\exp(-|\Lambda x|^{1/s})$ (corresponding to the Fourier transform of a function of Gevrey class $s$) for $s < 3/2$, which would include some of the cutoffs considered in \cite{GMR} but is still an annoying restriction.  
On top of this, the lemmas referred to above depend on some details of the analysis being carried out, essentially via the integer $R$ introduced in \cref{sec:formalism}.  Admittedly the equivalent of $R$ can be taken to be $2$ in all of the specific examples I am aware of in the constructive renormalization group literature, but relying too much on this could create a pitfall for future work.

\medskip

In a recent preprint \cite{Duch}, Paweł Duch revisited the problem studied in \cite{DR} and provided an alternate proof without the use of tree formulas or cluster expansions.  
This is very promising for future developments, however for now it seems to have led to a work which is still complicated even by the standards of the field.  
Part of this complication appears to be due to the fact that (in the currently cited version of \cite{Duch}, at least) although Duch considers interactions as distributions (apparently in order to include Dirac deltas in a less {\it ad hoc} fashion than, for example, \cite{GMR}), he includes a derivative index leading to a topology which includes unphysical distinctions between equivalent representations of the same interaction, as in \cite{GMR}, also leading to a restriction on the cutoff similar to that discussed above for \cite{DR.2D.2}.
Although Duch does not study a fixed point in the RG sense\footnote{The fixed point discussed in \cite[Section~13]{Duch} is the solution of an implicit function problem using the Banach fixed point theorem, as in the construction of solutions of ordinary Cauchy problems by Picard iteration (\cref{thm:interpolated_stationary} below is also based on a similar argument, as is \cite[Theorem~6.1]{GMR}, where however the result does actually correspond fairly directly to a fixed point in the RG sense).  This has the same basic structure as the construction of the running coupling constants in \cite[Section~4]{BM01} and many subsequent works where it arguably occupies a similar position as in the proof of the main result of \cite{Duch}.}
it seems likely that a combination of Duch's techniques (especially the introduction of the norm of the space $\mathscr{W}_{\tau,\varepsilon}^{\alpha,\beta;\gamma}$ in an unnumbered equation on \cite[21]{Duch}, which unlike the other norms involved does not seem to me to have any analogue in \cite{GMR} or the present work) with those I present here can produce proofs which are significantly simpler and more flexible than either.

\section{The model under consideration}
\label{sec:example}
As in \cite{GMR}, I consider a model of an $N$-component Fermionic field in $d \in \{1,2,3\}$ dimensions, intended as a renormalized version of the one defined by the action
\begin{equation}
	\int
	\left[ 
		(\psi, \Omega \otimes (- \mathscr L)^{\tfrac14 d + \tfrac12 \varepsilon} \psi) (x)
		+ 
		\nu (\psi, \Omega \psi) (x)
		+
		\lambda (\psi, \Omega \psi)^2 (x)
	\right]
	\dd^d x
	,
	\label{eq:bare_action}
\end{equation}
where $\psi(x)$ is an $N$-component vector which transforms as a scalar under rotations, $\mathscr L$ is the Laplacian\footnote{I use this symbol to avoid confusion with $\Delta$, the generator of continuous dilations introduced above} on $\bR^d$, $\Omega$ is the antisymmetric $N \times N$ matrix
\begin{equation}
	\Omega
	:=
	\begin{pmatrix}
		0 & 1 & 0 & 0 & \dots 
		\\
		-1 & 0 & 0 & 0 & \dots
		\\
		0 & 0 & 0 & 1 & \dots 
		\\
		0 & 0 & -1 & 0 & \dots
		\\
		\vdots & \vdots & \vdots & \vdots & \ddots
	\end{pmatrix}
	\label{eq:Omega_matrix}
\end{equation}
(hence $N$ is even, and for reasons that will become clear later we also take $N \neq 8$) and $\varepsilon \in [0,d/6)$ is a nonnegative real parameter; see \cite{FP.FractionalLaplacian} for a discussion of the fractional Laplacian in the context of quantum field theory.  
A quadratic part with a variable power (originally introduced by Gawędzki and Kupiainen to define a version of the Gross-Neveu model with a nontrivial fixed point, and simultaneously by Felder \cite{Felder} in the context of Bosonic fields\footnote{In \cite{Felder}, like \cite{GMR}, the theory is introduced without reference to the fractional Laplacian, which however appears in \cite{GK85}.}) gives a continuously varying scaling dimension for the appropriate field as in dimensional regularization, without the associated problem of defining spaces whose dimension is an arbitrary real number.
As we shall see shortly, for this particular choice of the power of the fractional Laplacian the scaling dimension of the field (denoted $[\psi]$, see \cref{eq:propagator_scaling}) takes the value $\tfrac14 d - \tfrac12 \varepsilon $, 
so that the quartic term transforms under dilations (see \cref{eq:dilation_cW}) as
\begin{equation}
	\fD_\Lambda 
	\int
	(\psi, \Omega \psi)^2 (x)
	\dd^d x
	=
	\Lambda^{2 \varepsilon}
	\int
	(\psi, \Omega \psi)^2 (x)
	\dd^d x
	\label{eq:quartic_scaling}
\end{equation}
(note $d - 4[\psi] = 2 \varepsilon$)
and thus is stable or growing (\emph{(dimensionally) relevant} in the usual terminology of the renormalization group) for $\varepsilon \ge 0$.

Note that this action is translation and rotation invariant, and has an internal symplectic symmetry; if we restrict to $\varepsilon < d/6$ the only local polynomials in $\psi$ and its gradients with these symmetries transforming as in \cref{eq:quartic_scaling} with a nonnegative exponent are the ones already present in \cref{eq:bare_action}; for example for the sextic term
\begin{equation}
	\fD_\Lambda 
	\int
	(\psi, \Omega \psi)^3 (x)
	\dd^d x
	=
	\Lambda^{3 \varepsilon - \tfrac{1}{2} d}
	\int
	(\psi, \Omega \psi)^3 (x)
	\dd^d x
	,
	\label{eq:sextic_scaling}
\end{equation}
so that the exponent (or scaling dimension) is negative and this term is \emph{irrelevant} for $\varepsilon < d/6$ (hence the restriction to this range of values of $\varepsilon$).

\medskip

To define the flow I need to specify a choice of $G_\Lambda$, or in other words choose a cutoff.
The simplest choice I am aware of with the properties I will later need closely resembles that of \cite{DR,GK85,Felder}. 
It is closely related to the heat kernel, which makes it possible to extend to a wide variety of domains and boundary conditions, separating contributions geometrically along the same lines as \cite{Greenblatt.zeta}; I will return to this point in a subsequent work.

Let
\begin{equation}
	p^\hk(x)
	:=
	\int \frac{\dd^d k}{(2 \pi)^d} \hat p^\hk (k) \ e^{i \; k \cdot x}
	, \quad
	\hat p^\hk(k)
	:=
	\frac{1}{\Gamma\left( 1+\tfrac{d}4 + \tfrac{\varepsilon}2 \right)}
	\int_1^\infty
	t^{\tfrac14 d + \tfrac12 \varepsilon} 
	k^2 e^{-t k^2}
	\dd t
	.
	\label{eq:P_def}
\end{equation}
Noting that
\begin{equation}
	\int_0^\infty
	t^{\frac14 d + \frac12 \varepsilon} 
	k^2 e^{-t k^2}
	\dd t
	=
	\frac{\Gamma\left( 1+\tfrac{d}4 + \tfrac{\varepsilon}2 \right)}{(k^2)^{\frac{d}4 + \frac{\varepsilon}2}}
\end{equation}
for $k^2 > 0$,
$\hat p^\hk(k)$ (like $\hat P(k)$ of \cite[Equation~(2.1)]{GMR}) is a version of $\left|k\right|^{-\frac12 d - \varepsilon}$ (the Fourier transform of the inverse fractional Laplacian) with an ultraviolet cutoff; note that \cref{eq:P_def} makes $\hat p^\hk$ analytic for $\Re k^2 > 0$. 

Since
\begin{equation}
	\int k^2 e^{-t k^2} \dd^d k
	=
	\frac{d}{2} \frac{\pi^{d/2}}{t^{1+d/2}}, \quad t > 0
\end{equation}
for $\varepsilon < \tfrac12 d$ we can interchange the $t$ and $k$ integrals in \cref{eq:P_def}.  Changing the variable of integration $t=1/\Lambda^2$ to have a momentum scale,
\begin{equation}
	p^\hk(x) = 
	\int_0^1 g^\hk_\Lambda(x) \dd \Lambda
	, \qquad
	g^\hk_\Lambda(x) 
	:=
	\int \frac{\dd^d k}{(2 \pi)^d} 
	\hat g^\hk_\Lambda(k)
	e^{i \; k \cdot x}
	,
	\label{eq:gt_def}
\end{equation}
with
\begin{equation}
	\hat g^\hk_\Lambda(k)
	:=
	\frac{2 \Lambda^{-1 - \frac{1}{2} d - \varepsilon}}{\Gamma\left( 1+\tfrac{d}4 + \tfrac{\varepsilon}2 \right)}
	\left( \frac{k}{\Lambda} \right)^2
	e^{-(k/\Lambda)^2}
	,
	\label{eq:ghat_t_def}
\end{equation}
these have the property that 
\begin{equation}
	g^\hk_\Lambda (x)
	=
	\Lambda^{-1 + \tfrac12 d - \varepsilon}
	g^\hk (\Lambda x),
	\label{eq:g_scaling_check}
\end{equation}
where $g^\hk := g^\hk_1$,
cf.\ \cref{eq:propagator_scaling}; note that the exponent indeed corresponds to $[\psi] = \tfrac14 d - \tfrac12 \varepsilon$, which is positive for $\varepsilon < \tfrac12 d$.

Since 
\begin{equation}
	\begin{split}
		\int \frac{\dd^d k}{(2 \pi)^d} 
		k^2 e^{- k^2}
		e^{i \; k \cdot x}
		&
		=
		\left.
		\frac{\partial}{\partial t}
		\int \frac{\dd^d k}{(2 \pi)^d} 
		e^{- k^2}
		e^{i \; k \cdot x}
		\right|_{t=1}
		\\ &
		=
		\left.
			\frac{\partial}{\partial t}
			\frac{
				e^{-x^2/4 t}
			}{
				(4 \pi t)^{d/2}
			}
		\right|_{t=1}
		=
		\frac{x^2 - 2 d}{2^{d+2} \pi^{d/2}}
		e^{-x^2/4 }
		,
		\end{split}
\end{equation}
we also have the more explicit formula
\begin{equation}
	g^\hk (x)
	=
	\frac{x^2 - 2d }{2^{d+1} \pi^{d/2}\Gamma\left( 1+\tfrac{d}4 + \tfrac{\varepsilon}2 \right)}
	e^{-x^2/4 }
\end{equation}
and
\Cp{gt}
\begin{equation}
	g^\hk_\Lambda(x)
	=
	\frac{
		\Lambda^{-1 + \tfrac12 d - \varepsilon} (\Lambda^2 x^2 - 2d  )
}{2^{d+1} \pi^{d/2}\Gamma\left( 1+\tfrac{d}4 + \tfrac{\varepsilon}2 \right)}
	e^{-\Lambda^2 x^2/4 }
	=:
	\Cr{gt} 
	\Lambda^{-1 + \tfrac12 d - \varepsilon} 
	(\Lambda^2 x^2 - 2d  )
	e^{-x^2/4 t}
	;
	\label{eq:gt_explicit}
\end{equation}
note the (super)exponential decay
\Cp{g_decay}
\begin{equation}
	\left|g^\hk_\Lambda(x)\right|
	\le
	\Cr{g_decay}
	\Lambda^{-1 + \tfrac12 d - \varepsilon} 
	e^{-\Lambda^2 x^2/5}
	.
	\label{eq:gt_decay}
\end{equation}
The propagator involves an internal index taking $N$ values, via the antisymmetric $N \times N$ matrix $\Omega$ defined in \cref{eq:Omega_matrix}.
The full infinitesimal propagator (including the internal indices) is $G^\hk_\Lambda := \Omega g^\hk_\Lambda$, and the full UV-cutoff propagator is $P^\hk_\Lambda := \int_0^\Lambda P_\Theta \dd \Theta = \Omega_{ab} p^\hk_\Lambda$.

Among the alternatives is $G^{\chi}$ corresponding to replacing $\hat p^\hk(k)$ in \cref{eq:P_def} with $\hat P(k)$ of \cite{GMR}, that is
\begin{equation}
	\hat g^{\chi}_\Lambda (k)
	:=
	\frac{\dd}{\dd \Lambda}\left[ \frac{\chi(|k|/\Lambda)}{\left|k\right|^{d/2+\varepsilon}} \right]
	=
	-\frac{|k|}{\Lambda^2}
	\,
	\frac{\chi'(|k|/\Lambda)}{\left|k\right|^{d/2+\varepsilon}}
	=:
	\frac{1}{\Lambda}
	\frac{\tilde \chi(|k|/\Lambda)}{\left|k\right|^{d/2+\varepsilon}}
	\label{eq:ghat_chi_def}
\end{equation}
where $\chi:\bR \to \bR$ is an even function in the Gevrey class $s \in (1,\infty)$%
\footnote{A function $f:\bR \to \bR$ of compact support is in the Gevrey class $s$ (or ``is Gevrey-$s$'') iff there is a constant $C=C(f)$ such that
	\begin{equation}
		|f^{(n)}(x)|
		\le 
		C^{n+1}
		(n!)^s
		\quad
		\forall x \in \bR, n=0,1,2,\dots
		,
		\label{eq:Gevray_def}
	\end{equation}
	and a function $\bR^d \to \bR$ is if the same holds for all partial derivatives.
	Of course all such functions are $C^\infty$, and any derivative of $f$ is always as member of the same Gevrey class.
	
	Examples of Gevrey-$s$ functions with the properties required for $\chi$ are given in \cite[Appendix~A.1]{GMR}.
}
with $\chi([0,\tfrac12])=\{1\}$ and $\chi([1,\infty))=\{0\}$; then $\tilde \chi$ is also a Gevrey-$s$ function with compact support, and furthermore it vanishes on $[0,\tfrac12]$ so that $g^{\chi}_\Lambda (k)$ is also Gevrey-$s$ with compact support.  Then defining $g^\chi_\Lambda$ by the inverse Fourier transform as before gives the same scaling property as in \cref{eq:g_scaling_check} and proceeding as in \cite[Appendix~A.2.2]{GMR} there are are $\chi$-dependent constants $\Cr{g_chi_pre}, \Cr{g_chi_rate}$ such that
\Cp{g_chi_pre} \Cp{g_chi_rate} 
\begin{equation}
	\left|g^\chi_\Lambda(x)\right|
	\le
	\Cr{g_chi_pre}
	\Lambda^{-1 + \tfrac12 d - \varepsilon} 
	e^{-\Cr{g_chi_rate} |\Lambda x|^{1/s}}
	,
	\label{g_chi_decay}
\end{equation}
and similar for their derivatives of given order.

\medskip

I will show the following:
\begin{theorem}
	For $d=1,2,3$ and for $G_\Lambda = G^\hk_\Lambda$ or $G^\chi_\Lambda$ with any $\chi$ of the form discussed above, there exists $\varepsilon_0 = \varepsilon_0 (d,G_\cdot)$ such that for each $\varepsilon \in (0,\varepsilon_0]$ there exists $\cV^* \neq 0$ such that $\Lambda \mapsto \fD_\Lambda \cV^*$ is a solution of the Polchinski equation~\eqref{eq:flow} on $(0,\infty)$.
	\label{thm:main_FP}
\end{theorem}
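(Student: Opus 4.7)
The plan is to work throughout in the autonomous Wick-ordered formulation~\eqref{eq:flow_Wick_formal}, so that a scaling solution of~\eqref{eq:flow} corresponds to a scale-independent $\cV^*$ satisfying
\[
0 = -\Delta \cV^*(\psi) - \tfrac12 \wickord{\left\langle \tfrac{\delta \cV^*}{\delta \psi}(\psi),\; G\, \tfrac{\delta \cV^*}{\delta \psi}(\psi) \right\rangle}_1,
\]
a condition which is purely quadratic in $\cV^*$ with a linear term coming only from the dilation generator $\Delta$. Using the decomposition of \cref{sec:relevant_parts}, I write $\cV^* = \cR^* + \cI^*$ with $\cR^*$ the local relevant part and $\cI^*$ the irrelevant remainder. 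In the present model, translation, rotation and symplectic invariance together with the restriction $\varepsilon < d/6$ (cf.~\cref{eq:quartic_scaling,eq:sextic_scaling}) force $\cR^*$ to lie in the two-dimensional subspace spanned by $\int (\psi,\Omega\psi)(x)\,\dd^d x$ and $\int (\psi,\Omega\psi)^2(x)\,\dd^d x$, parametrised by two real numbers $(\nu^*, \lambda^*)$.

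Second, for each sufficiently small $\cR^*$ I construct an irrelevant part $\cI^*[\cR^*]$ by an ansatz based on the tree expansion of \cref{sec:fp_construction}: the nonlinear term on the right-hand side of the fixed point equation is iterated, producing a formal sum of trees whose vertices carry copies of $\cR^*$ and whose lines carry $G_\Lambda$ or $P_\Lambda$ propagators integrated over internal scale variables. Combining the seminorm bounds of \cref{sec:formalism} with the scaling estimates \cref{lem:f_cR_scaling,lem:fI_few_scaling}, each scale-labelled tree is bounded by a product of factors with strictly positive exponents (coming from the irrelevance of each tree output, which persists down to $\varepsilon = 0^+$ thanks to the trivial bound~\eqref{eq:trivial_interpolation_bound}), so that the scale integrations converge geometrically. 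The Pfaffian/combinatorial explosion is tamed by the regrouping procedure of \cref{sec:regroup}. The resulting map $\cR^* \mapsto \cI^*[\cR^*]$ is smooth from a neighbourhood of $0$ in the two-dimensional relevant subspace into $\pot$.

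Third, substituting $\cI^* = \cI^*[\cR^*]$ back into the fixed point equation and projecting onto the relevant subspace reduces it to a finite-dimensional equation $\beta(\nu^*, \lambda^*; \varepsilon) = 0$, with $\beta \colon \bR^2 \times [0, \varepsilon_0] \to \bR^2$ smooth near the origin. The dilation generator $-\Delta$ contributes the linear terms $d\, \nu^*$ and $2\varepsilon\, \lambda^*$ in the two components, whereas the leading bilinear contribution from the quadratic term is the one-loop bubble, so schematically
\[
\beta(\nu^*, \lambda^*;\varepsilon) = \bigl(d\,\nu^* + a\, \lambda^* + \text{h.o.t.},\; 2\varepsilon\, \lambda^* + c_N (\lambda^*)^2 + \text{h.o.t.}\bigr),
\]
with explicit coefficients $a$ and $c_N$ computable from integrals of $G$ and $P$. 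The coefficient $c_N$ is nonzero except at $N = 8$, which is the reason for excluding that value in the hypotheses.

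Finally, I locate a nontrivial solution of $\beta = 0$ by the rescaling $\lambda^* = \varepsilon\, \tilde{\lambda}$, $\nu^* = \varepsilon\, \tilde{\nu}$: the system becomes regular at $\varepsilon = 0$ with the nontrivial solution $\tilde{\lambda}_0 = -2/c_N \neq 0$, $\tilde{\nu}_0 = -a\, \tilde{\lambda}_0 / d$, and with nondegenerate Jacobian. The standard implicit function theorem then produces the desired $(\nu^*(\varepsilon), \lambda^*(\varepsilon))$ for all $\varepsilon \in (0, \varepsilon_0]$ provided $\varepsilon_0$ is taken small enough, and hence the nontrivial $\cV^* = \cR^*(\varepsilon) + \cI^*[\cR^*(\varepsilon)] \neq 0$ claimed by the theorem. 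The main obstacle lies entirely in the second step: establishing convergence of the tree expansion \emph{uniformly down to $\varepsilon = 0^+$} (so that the last step has room to manoeuvre) and \emph{smoothly in $\cR^*$} (so that the implicit function theorem applies). Both aspects hinge on the interplay between the scaling bounds of \cref{sec:relevant_parts}, the non-divergent form of the interpolation estimate inherited from \cref{eq:trivial_interpolation_bound}, and the careful combinatorial bookkeeping of \cref{sec:regroup}.
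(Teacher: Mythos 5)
Your proposal follows essentially the same route as the paper: pass to the autonomous Wick-ordered form, split $\cV^* = \cR^* + \cI^*$ into local relevant part plus irrelevant remainder, construct the irrelevant part by a tree-expansion ansatz controlled by the scaling bounds of \cref{sec:relevant_parts} and the regrouping of \cref{sec:regroup}, and reduce to a two-dimensional fixed-point equation solved by rescaling in $\varepsilon$. Two points merit comment.

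First, you describe building \emph{all} of the irrelevant part by tree expansion, which is closer to the strategy of \cite{DR}. The paper instead splits $\fI = \fI_\inter + \fI_\many$: the high-degree tail $\cI_\many$ is the tree sum $\cS(\cdot)$, but the low-degree interpolated part $\cI_\inter$ is constructed separately as the fixed point of a contraction (\cref{thm:interpolated_stationary}), using \cref{lem:fI_few_integral_bound}. This is deliberate -- as the introduction notes, keeping the interpolated part outside the tree expansion simplifies the regrouping in \cref{sec:regroup}, since one need not track Taylor-remainder structure through the tree. Your version would need to re-incorporate the mechanism from \cite[Section~IV.2--3]{DR} to handle interpolation at every internal scale, which the paper argues is avoidable.

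Second, your schematic
\begin{equation*}
\beta = \bigl(d\,\nu^* + a\,\lambda^* + \text{h.o.t.},\ 2\varepsilon\,\lambda^* + c_N(\lambda^*)^2 + \text{h.o.t.}\bigr)
\end{equation*}
does not match the Wick-ordered beta function you say you are using. In the Wick-ordered variables $(\wt\nu,\lambda)$ the paper's \cref{eq:beta_with_error} has \emph{no} linear-in-$\lambda$ term in the $\wt\nu$-equation: the tadpole that would produce your $a\,\lambda^*$ has been absorbed into the Wick ordering, and the leading contribution there is $\cO(\lambda^2)$, giving $\wt\nu = \cO(\varepsilon^2)$ rather than $\cO(\varepsilon)$ as your rescaling $\nu^* = \varepsilon\,\wt\nu$ presupposes. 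Your schematic is what one would get in the \emph{un}-Wick-ordered variable $\nu = \wt\nu + (2N+2)p(0)\lambda$. This does not break the argument -- the $\lambda$-equation, which is what carries the nontriviality, has the same structure in either picture and the $\nu$ (or $\wt\nu$) equation is solved by substitution -- but the two bookkeepings should not be mixed. Also, the coefficient of $\wt\nu$ is $d - 2[\psi]$, not $d$; this too is immaterial since it is bounded away from zero for small $\varepsilon$, but it is worth getting right if one wants the explicit asymptotics \eqref{eq:rel_FP_asymp}.

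With those caveats the overall strategy is sound and matches the paper's.
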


A number of other properties of this fixed point follow from the construction; for example it has the same symplectic, rotation, and translation symmetries as \cref{eq:bare_action}, and as a result its relevant part has the same form.  
A fully precise version of these statements is cumbersome, so I postpone it to the conclusion (\cref{sec:conclusions}).

The culmination of the proof is the construction and solution of a set of exact equations for the relevant part, which admits an asymptotic treatment which can be viewed as a kind of epsilon expansion \cite{WilsonKogut} (cf.\ \cite{GK85,GMR}).
Let me introduce this at first with the following approximate version of the fixed point problem.
If I restrict to interactions with the same symmetries as \cref{eq:bare_action}, as noted above the only local relevant terms are $\int \sum_{a,b}\Omega_{ab} \psi_a (x) \psi_b(x) \dd x$ and $\int [\sum_{a,b}\Omega_{ab}\psi_a(x) \psi_b(x)]^2$.  Denoting their coefficients in the Wick-ordered interaction by $\tilde \nu$ and $\lambda$ respectively, discarding all irrelevant terms the fixed point version of \cref{eq:flow_Wick_formal} becomes a pair of coupled equations for $\tilde \nu,\lambda$ in a second order approximation, which can be written
\begin{equation}
	\begin{cases}
		(d - 2 [\psi]) \tilde \nu
		= 
		-48 (N+2) \lambda^2 \int g(x) p^2 (x) \dd^d x
		, 
		\\
		(d-4[\psi]) \lambda
		=
		8(8-N) \lambda^2 \int g(x) p(x) \dd^d x
	\end{cases}
	\label{eq:truncated_FP_prelim}
\end{equation}
after some slightly tedious calculations about the factors of $\Omega$; note the presence of factors of $p$ associated with the Wick ordering.
Since we have assumed $N \neq 8$, and since
\begin{equation}
	\int g(x) p(x) \dd^d x
	=
	\int \hat g(k) \hat p(-k) \dd^d k
	> 0
	,
\end{equation}
discarding the trivial solution $\lambda = \tilde \nu =0$, \cref{eq:truncated_FP_prelim} becomes
\begin{equation}
	\begin{cases}
		\tilde \nu
		 = 
		-\frac{96}{d+2\varepsilon} (N+2)  \int g(x) p^2 (x) \dd^d x
		\ \lambda^2
		,
		\\
		\lambda
		=
		4 \varepsilon \ \big/\left[ (8-N)\int g(x) p(x) \dd^d x \right] 
		.
	\end{cases}
	\label{eq:truncated_FP}
\end{equation}
Note that the integrals depend on $\varepsilon$ via $p$, but the dependence is smooth for $\varepsilon < d/2$, and with nonzero limits for $\varepsilon \to 0$.  
Compared to the discussion in \cite[Section~3]{GMR}, note that integral in the expression for $\lambda$ includes a part corresponding to the ``semilocal'' contribution in \cite[Equations~(3.6-7)]{GMR}.  Also, although $\tilde \nu = O(\varepsilon^2)$, this is because I am considering the Wick ordered interaction; taking this into account, the full quadratic coefficient is
\begin{equation}
	\nu = \tilde \nu
	+ (2 N +2) p(0) \lambda
	.
\end{equation}

I will construct functions $B_{\chi} (\lambda,\wt \nu)$, $\chi = \wt \nu, \lambda$ (see \cref{eq:B_implicit_def}) such that
\begin{equation}
	\begin{cases}
		(d - 2 [\psi]) \tilde \nu
		= 
		-48 (N+2) \lambda^2 \int g(x) p^2 (x) \dd^d x
		+
		B_{\wt \nu} (\wt \nu, \lambda)
		, 
		\\
		(d-4[\psi]) \lambda
		=
		8(8-N) \lambda^2 \int g(x) p(x) \dd^d x
		+
		B_{\lambda} (\wt \nu, \lambda)
	\end{cases}
	\label{eq:beta_with_error}
\end{equation}
is a sufficient condition for $\wt \nu,\lambda$ to correspond to the relevant part of a fixed point of the full RG flow, with the irrelevant part given by a convergent power series in $\lambda$ and $\tilde \nu$; 
and furthermore 
\begin{align}
	\left| B_\chi (\wt \nu,\lambda) \right|& \le \C (|\lambda| + |\tilde \nu|)^3
	\label{eq:B_small}
	,
	\\[1ex]
	\left| B_\chi (\wt \nu,\lambda)-  B_\chi (\wt \nu',\lambda')\right|& \le \C (|\lambda| + |\tilde \nu|)^2 (|\wt \nu - \wt \nu'|+ |\lambda - \lambda'|)
	\label{eq:B_Lipschitz}
\end{align}
uniformly for small $\varepsilon$,
from which it is easy to see that \cref{eq:beta_with_error} has a solution close to \cref{eq:truncated_FP}, or more precisely
\begin{equation}
	\tilde \nu = \cO(\varepsilon^2)
	,
	\quad 
	\lambda 
	=
	\left. \left( 4  \ \Big/\left[ (8-N)\int g(x) p(x) \dd^d x \right] \right) \right|_{\varepsilon = 0} \varepsilon
	+ \cO( \varepsilon^2)
	\label{eq:rel_FP_asymp}
\end{equation}
for $\varepsilon \to 0^+$; in particular this is a nontrivial fixed point (with $\lambda \neq 0$) for small strictly positive $\varepsilon$.

\section{The Grassmann algebra}
\label{sec:Grassmann}
In this section I formally define the Grassmann algebra alluded to in \cref{eq:flow}, which is also a topological vector space, so that derivatives with respect to a real parameter have a natural meaning.  The other operations on the right hand side of \cref{eq:flow} are also well defined, as is the dilation operation, so this gives a clear notion of solutions of the Polchinski equation (and scaling solutions, in particular) without the additional structure introduced in later sections, for example without reference to the norms which I will later use to construct a specific solution.

This Grassmann algebra is intended to be extremely large; in particular it corresponds to an extremely weak assumption on the locality and regularity of the interaction.  Correspondingly, the notion of convergence (and so of differentiability) is extremely weak.

Note that this section along with \cref{sec:formalism,sec:fp_construction,sec:relevant_parts} below are stated for a generic model apart from a few comments, both in the hope that they may be useful elsewhere in the future and to make it easier for the reader to compare with other techniques.

Let $\intrange{a}{b} := [a,b] \cap \bZ$ denote the set of integers from $a$ to $b$.
For each positive integer $n$ let $\testn$ be the set of smooth, compactly supported functions $\intrange1N^n \times (\bR^d)^n \to \bC$, with the locally convex topology\footnote{In the sense of \cite[Chapter~V]{ReedSimon1}, that is the topology defined by convergence of each of the seminorms, without any uniformity.} defined by the seminorms 
\begin{equation*}
	\tn_{n,K,\ul m, \bs a}[f]
	:=
	\sup_{x \in K}
	\left|
		\partial^{\ul m}
		f_{\bs a} (\bs x)
	\right|
\end{equation*}
for all $\bs a \in \intrange1N^n $, $K \subset  (\bR^d)^n$ compact, and all $\ul m \colon \intrange1n \times \intrange1d \to \bN$ (which I will call a multi-index; note that I use the convention $0 \in \bN$), where I use the notation
\begin{equation}
	\partial^{\ul m}
	f_{\bs a} (\bs x)
	:=
	\frac{\partial^{\ul m}}{\partial \bs x^{\ul m}}
	f_{\bs a} ( \bs x)
	:=
	\frac{\partial^{m_{11}}}{\partial x_{11}}
	\dots
	\frac{\partial^{m_{nd}}}{\partial x_{nd}}
	f_{\bs a} ( \bs x)
	\label{eq:bf_continuity}
\end{equation}
for the partial derivatives.  Apart from the presence of the index $\bs a$, this is the usual topology of test functions used in the definition of distributions

Let $\testn^*$ be the (continuous) dual of $\testn$, i.e.\ the set of continuous  linear maps $\testn \to \bC$, with the topology of weak convergence defined by stipulating that $\lim_{k \to \infty} V_k = V$ means that $V_k [f] \to V[f]$ for all $f \in \testn$.

\begin{definition}
	Adopting the convention that any permutation $\pi \in \Pi_n$ acts on all $n$-tuples as $\pi (\bs x) = (x_{\pi(1)},\dots,x_{\pi(n)})$, and on $\testn$ as $\pi f_{\bs a} (\bs x) := f_{\pi(\bs a)}(\pi(\bs x))$, I say that $W \in \testn^*$ is \emph{antisymmetric under perturbations} (\emph{antisymmetric} for short) if $W[\pi f_{\bs a}] = \sign(\pi) W[ f_{\bs a}]$ for all permutations $\pi$.

	\label{def:antisymm}
\end{definition}

\begin{definition}
	For $n \ge 1$, let \Grassmannn\ denote the set of antisymmetric elements of $\testn^*$, and let $\Grassmann_0 := \bC$.

	Let $\Grassmann$ denote the topological product of $\Grassmann_n$ over $n \in \bN$, with the natural vector space structure and the product defined by 
\begin{equation}
		(\cV \wedge \cW)_n [f]
		:=
		\sum_{k=0}^n
		\sum_{\pi \in \Pi_n}
		\frac{\sign(\pi)}{n!}
		V_k\left[ \bs a, \bs x \mapsto W_{n-k}[\bs b, \bs y \mapsto (\pi f)_{\bs a, \bs b}(\bs x , \bs y)] \right]
		,
		\label{eq:wedge_def}
	\end{equation}
	with the obvious shorthand in the arguments of $f \in \testn$.
	\label{def:Grassmann}
\end{definition}
(For the benefit of the reader who finds this expression to be overly terse, the argument of $V_k$ in \cref{eq:wedge_def} is the function obtained by acting $W_{n-k}$ on $\pi f$ understood as a function of only $\bs b, \bs y$, with the result then acted on by $V_k$ as a function of the remaining parts $\bs a, \bs x$ of the argument.)

This means that each $\cV \in \Grassmann$ is an infinite sequence $(V_0,V_1,\dots)$, and that convergence $\cV^{j} \to \cV$ is defined by $V^j_n [f] \to V_n[f]$ for each $f \in \testn$, $n \in \bN_+:= \bN \setminus \{0\}$, and $V^j_0 \to V_0$ as complex numbers.
Each $\cV \in \Grassmann$ can also be identified with the map $\bC \cup \bigcup_{n=1}^{\infty} \testn \to \bC$, $f \mapsto V_n[f]$ (understood as $c \mapsto V_0 c$ for $n=0$).

\Grassmann\ is indeed a Grassmann algebra; 
the only subtlety is that associativity of the product requires that there is no issue with changing the order of operations in the right hand side, or in other words
\begin{equation}
	V_k\left[ \bs a, \bs x \mapsto W_{n-k}[\bs b, \bs y \mapsto f_{\bs a, \bs b}(\bs x , \bs y)] \right]
	\equiv
	W_{n-k}\left[ \bs b, \bs y \mapsto V_k[\bs a, \bs x \mapsto f_{\bs a, \bs b}(\bs x , \bs y)] \right]
	\label{eq:distro_product}
\end{equation}
which can be seen by noting first that it follows for modified polynomials of the form
\begin{equation}
	g_{\bs a } (\bs x)
	=
	\sum_{\ul m \in M}
	\gamma_{\bs a, \ul m}
	\prod_{j=1}^n \prod_{k=1}^d
	\left[ \chi_K(x_{jk}) \right]^{m_{jk}}
	\label{eq:mod_poly}
\end{equation}
for $M$ a finite set and $\chi_K$ a smooth function with support $K$, and then noting that these are dense in \testn\ using the Stone-Weierstrass theorem and a diagonal trick along with the observation that all elements of $\testn^*$ are by assumption continuous. 

Since \Grassmann\ is a topological vector space, derivatives and Riemann integrals with respect to real parameters are defined as usual via algebraic operations and limits.  Singling out the elements $\left\{ \psi_a (x) \right\}_{a \in \intrange1N, x \in \bR^d}$ defined by
\begin{equation}
	\psi_a (x) [f]
	=
	\begin{cases}
		f_a (x), & f \in \test_1
		\\
		0,
		& \text{otherwise},
	\end{cases}
	\label{eq:psi_def}
\end{equation}
it is possible to define elements of \Grassmann\ using conventional expressions which might be expected to be only formal; for example 
$ \int  (\psi, \Omega \psi) (x)\dd^d x$ 
(with the conventions of \cref{eq:bare_action}) is an element of $\Grassmann$, characterized equivalently as the map
\begin{equation}
	f
	\mapsto
	\begin{cases}
		\sum_{a,b \in \intrange1N}
		\Omega_{ab}
		\int f_{ab}(x,x)
		\dd^d x
		,& 
		f \in \test_2
		\\
		0
		, &
		\text{otherwise}
		.
	\end{cases}
	\label{eq:mass_term_as_map}
\end{equation}
Finally, identifying the variational derivative $\frac{\delta}{\delta \psi}$ with the linear map $\Grassmann_n \to (\test_1 \times \test_{n-1})^*$ characterized by 
\begin{equation}
	V_n
	\mapsto
	\big[
		g,f \mapsto
		n V_n (g \otimes f)
	\big]
	, \quad
	V_n \in \Grassmann_n, \ n \ge 1
\end{equation}
(taking advantage of antisymmetry) and $V_0 \mapsto 0$, and likewise for repeated derivatives, the right hand side of \eqref{eq:flow} is well defined on at least a nontrivial subspace of $\Grassmann$. 

Finally, the dilation operations introduced in \cref{eq:dilation_cW} generalize as follows.
Let $\wt D_\Lambda: \testn \to \testn$, $f \mapsto (\bs a , \bs x \mapsto f_{\bs a} ( \bs x / \Lambda ))$ and 
\begin{equation}
	\fD_\Lambda \cV : f \mapsto
	\begin{cases}
		V_0 f
		, &
		f \in \test_0
		\\
		D_\Lambda V_n [f] 
		:=
		\Lambda^{-[\psi]n}
		V_n [\wt D_\Lambda f]
		, &
		f \in \testn, n \ge 1
		.
	\end{cases}
	\label{eq:dilation_maindef}
\end{equation}
Recall that $\Delta$ was introduced in \cref{sec:intro} as the generator the one-parameter semigroup obtained by reparameterizing this family of operators as $\Lambda = e^\eta$, which is equivalent to
\begin{equation}
	\Delta \cV 
	:=
	\left.
	\Lambda \frac{\dd}{\dd \Lambda} \fD_\Lambda \cV
	\right|_{\Lambda=1}
	;
	\label{eq:Delta_explicit}
\end{equation}
the derivative can be expressed in terms of the action of $\cV$ on $f$ and its first derivatives with respect to position arguments, and so $\Delta$ is well defined on all of $\Grassmann$. 

\section{Norms, formalism, and estimates used to construct the fixed point}
\label{sec:formalism}

Having introduced the concepts which were left implicit in the statement of \cref{thm:main_FP}, I now turn to the elements that will be needed for the proof.  
I start by making more precise statements of the features of the model and cutoff introduced in \cref{sec:example} which play a role in the proof, then in \cref{sec:potentials} I introduce  a collection of seminorms on a subset \kernel\ of \Grassmann\ which are more suitable for checking convergence, and then introduce a subspace \pot\ of \kernel\ characterized by symmetry properties.
In \cref{sec:flow_formal,sec:Wick} I translate the other key formulae from \cref{sec:intro} into a new notation associated with this space.  In the course of this I prove a number of basic estimates which will play a role later on.

Let me stress that \pot\ is a (vector) subspace of \Grassmann, so that each element of \pot\ is also an element of \Grassmann.  There are no inequivalent representations involving derivative indices (in other words there is no jet extension \cite{Duch}). 
The seminorms give $\pot$ a different topology from that of \Grassmann; but to be precise it is a finer topology (one with a strictly stronger criterion for convergence), so that differentiability in \pot\ implies differentiability in \Grassmann, cf.\ \cref{rem:derivability_kernel}.

\subsection{Preliminaries}
\label{eq:setup_prelim}

Let $R$ be a fixed integer such that $R \in \bN$, $2[\psi] + R - d > 0$ (in the model under consideration $R=2$).  

For $n \in \bN_+$, let \diffyn\ be the set of functions $f \colon \intrange1N^n \times (\bR^d)^n \to \bC$ 
such that for any multi-index $\ul m \colon \intrange1n \times \intrange1d \to \bN$ with $\degree \ul m := \max_{j=1,\dots,n} \sum_{k=1,\dots,d} m_{jk} \le R$, the derivative  
exists and is continuous as a function of $\bs x$.
Denote this set of multiindices by $\mi_{n;R}$.
For $f \in \diffyn$, let 
\begin{equation}
	D_{\bs a}[ f] (\bs x)
	:=
	\max_{\ul m \in \mi_{n,R}}
	\left|
	\frac{\partial^{\ul m}}{\partial \bs x^{\ul m}}
	f_{\bs a} ( \bs x)
	\right|
	.
	\label{eq:D_ell_def}
\end{equation}
Note that the Cartesian product $(f \otimes g)_{\bs a , \bs b} (\bs x , \bs y) := f_{\bs a}(\bs x) g_{\bs b} (\bs y)$ takes $\diffy_\ell,\diffy_m$ to $\diffy_{\ell+m}$, with
\begin{equation}
	D_{\bs a, \bs b} [f \otimes g] (\bs x, \bs y)
	=
	D_{\bs a}[f](\bs x) 
	D_{\bs b}[g](\bs y)
	.
	\label{eq:D_ell_product}
\end{equation}

In addition to the scaling property \eqref{eq:propagator_scaling}, I assume that the propagator $G$ is in $\diffy_2$ and antisymmetric with the matrix indices taken into account, and is translation invariant.  I also assume that there is an increasing, subadditive function 
$\eta : [0,\infty) \to [0,\infty)$ such that 
\begin{equation}
	\lim _{x \to \infty} x e^{-c \eta(x)} = 0
	, \quad
	\forall c >0
	\label{eq:eta_growth}
\end{equation}
for which the propagator satisfies
\begin{equation}
	| G|_+
	:=
	\sup_{\substack{ y \in \bR^d }}
	\int \dd^d x
	\max_{\bs a \in \intrange1{N}^2}
	e^{\eta(|x - y|)}
	D_{\bs a}[G] (x,y)
	<
	\infty
	;
	\label{eq:plus_norm}
\end{equation}
(for $G^\hk$ we can simply take $\eta(x)=x$, for $G^\chi$ instead $\eta(x) = \tfrac12 \Cr{g_chi_rate} x^{1/s}$ where $s$ is the Gevrey class of $\chi$, and $\Cr{g_chi_rate}$ is the same constant appearing in \cref{g_chi_decay}).
Furthermore, I assume that there is a constant $\Cl{gram_g}$ and a Hilbert space containing families of vectors $\gamma_{a,\ul{m}}(x), \ \tilde \gamma_{b,\ul n}(y)$ such that
\begin{equation}
	\partial_x^{\ul m} \partial_y^{\ul n} G_{ab}(x,y)
	=
	\left\langle \tilde \gamma_{a,\ul m}(x),\gamma_{b,\ul n}(y) \right\rangle
\end{equation}
and
\begin{equation}
|\tilde \gamma_{a,m}(x)|^2, \ |\gamma_{b,n}(y))|^2
	\le 
	\Cr{gram_g}
	.
\end{equation}
This is known as a Gram decomposition; as reviewed in \cref{sec:Pfaff} it gives a way of estimating Pfaffians of matrices whose elements are given in terms of propagators, and in the model under consideration (as for most translation-invariant systems) it can be obtained using the Fourier transform.  As an example of the bounds obtained, letting
\begin{equation}
	P(x,y)
	:=
	\int_{0}^1
	G_\Lambda (x,y)
	\dd \Lambda
	=
	\int_0^1
	\Lambda^{2 [\psi] -1}
	G(\Lambda x, \Lambda y)
	\dd \Lambda
	,
	\label{eq:_proper_def}
\end{equation}
evidently $P \in \diffy_2$ and $P$ is antisymmetric; for such functions, let 
\begin{equation}
	\begin{aligned}
		\PPf_n [P] \colon & \intrange1{N}^{2n} \times (\bR^d)^{2n} \to \bC
		, 
		\\ &
		\bs a , \bs x
		\mapsto
		\frac{1}{2^n n!}
		\sum_{\pi \in \Pi_{2n}} (-1)^{\pi}
		\prod_{j=1}^{n}
		P_{a_{\pi(2j-1)},a_{\pi(2j)}} (x_{\pi(2j-1)},x_{\pi(2j)})
	\end{aligned}
	\label{eq:PPf_def}
\end{equation}
(here writing the internal indices explicitly)
so that $\PPf_n[P]$ is a function in $\diffy_{2n}$ whose value is the Pfaffian of a matrix with entries given by values of $P$.
Then letting $C_\GH := \Cr{gram_g}/2[\psi]$,
using \cref{lem:gram_Lambda_integral,thm:gram} in \cref{sec:Pfaff}
\begin{equation}
	\max_{\bs a \in \intrange1{N}^{2n}}
	\sup_{x \in (\bR^d)^{2n}}
	D_{\bs a} [ \PPf_n [P] ]
	\le
	C_\GH^n 
	.
	\label{eq:good_Pfaff}
\end{equation}

Finally, I assume that I am given a collection $\symm$ of functions $S$ with the properties that
\begin{enumerate}
	\item $S$ is an automorphism of each $\diffyn$.
	\item For every $f \in \diffy_m, g \in \diffy_n$ ($m,n =1,2,\dots$), $S[f \otimes g] = S[f] \otimes S[g]$.
	\item The propagator satisfies $S[G_\Lambda] = G_\Lambda$.
	\item $S$ commutes with dilations, $S[\widetilde D_\Lambda f] = \widetilde D_\Lambda S[f]$.
\end{enumerate}
For the current case, I consider \symm\ to consist of rotations, acting as $ (S[f])_{\bs a} (\bs x) = f_{\bs a} (\mathbf Rx_1,\dots,\mathbf R x_n)$ for  $\mathbf R \in SO(3)$
(corresponding to $\psi$ being a scalar in \cref{sec:example}) and the symplectic transformations acting as $ (S[f])_{\bs a} (\bs x) = \sum_{b_1,\dots,b_n} M_{b_1,a_1} \dots M_{b_n,a_n} f_{\bs b} (\bs x)$ for some some symplectic matrix $M$ (that is, satisfying $M^T \Omega M = \Omega$).
I could also include translations in \symm, but I prefer to consider consider translation invariance separately as it plays a more important role.
Until \cref{sec:concluding_ansatz} everything that follows would make sense taking a smaller set as \symm\ (even $\symm = \emptyset$), which would lead to considering a set of interactions characterized by fewer symmetries, and hence a larger (but still finite-dimensional) space of local relevant terms.

\subsection{Seminorms and the space of interactions used in the construction of the fixed point}
\label{sec:potentials}

I now turn to the definition of the subspaces of \Grassmann\ used in the rest of this paper.  The first subspace, \kernel, is still defined as a dual space, but to obtain a finer topology I introduce it as the dual of a larger space of test functions defined as follows.

For $f \in \diffyn$, let
\begin{equation}
	\left| f \right|
	:=
	\int \dd^d y
	\max_{\bs a \in \intrange1{N}^n}
	\sup_{\substack{\bs x \in (\bR^d)^n: \\ x_n = y}}
	e^{- \eta(\cT (\bs x))}
	D_{\bs a}[f] (\bs x)
	\label{eq:bf_norm}
\end{equation}
with $\cT(\bs x)$ the tree (Steiner) diameter of $x_1,\dots,x_n$, see \cref{sec:Steiner}. 
The peculiar form of this norm compared to those of e.g.\ \cite{GMR,AGG_part2} is because this is only an auxiliary norm on test functions used to introduce a more important norm below by duality; its main features can be understood by noting that it can be used to bound both $\cV[f]$ for translation invariant ``local'' terms like $\cV = \int (\psi,\Omega \psi)(x) \dd^d x$ (cf.\ \cref{eq:mass_term_as_map}), and also integrals of the form
\begin{equation*}
	\int V_{\bs a, \ul m}(x_1 - x_n, x_2-x_n,\dots,x_{n-1}-x_n) \partial^{\ul m} f(\bs x)
	\dd^{n \, d}\bs x
\end{equation*}
with $V$ decaying somewhat faster than $e^{-\eta(\cT(\bs x))}$.
For example, the fact that the integral in \eqref{eq:bf_norm} diverges if $f$ is translation invariant and nonzero corresponds to the fact that, for such an $f$, $\cV[f]$ diverges for many $\cV$ of these forms.
Correspondingly, note that the supremum and integral appear in the opposite order in \cref{eq:plus_norm} compared to \cref{eq:bf_norm}, so that $| G |_+$ may be finite for a translation invariant $G$.
For $|f|$ to be finite also requires (loosely speaking) that the function not grow too quickly in the distance between the points, in contrast to $|\cdot|_+$ in \cref{eq:eta_growth} which imposes a complementary decay condition.

In the degenerate case $n=0$, $\slow_0 = \bC$ and $| \cdot |$ is the absolute value.
In the case $n=1$ (which will not be particularly important) \cref{eq:bf_norm} simplifies to 
\begin{equation}
	|f|
	=
	\int \dd^d x
	\max_{a \in \intrange1{N}}
	D_{a}[f] (x)
	.
\end{equation}

\begin{remark}
	Singling out one of the position arguments is very helpful for some manipulations later; the specific choice won't play a particular role because all of the important applications involve antisymmetric functions.
\end{remark}

With the continuity assumption in the definition of $\diffy_n$, this defines a normed vector space which I shall call \slown.  
Comparing this to the previous section, note that $\testn \subset \slown$
Although there is no direct relationship between this norm and the seminorms $\tn$ in general, for any sequence $f_m \to f$ in \testn, it is easy to see that $f_m \to f $ in \slown, since these are necessarily compactly supported; in other words the topology of \slown\ (restricted to \testn) is coarser than that of \testn.  This implies that any continuous map on \slown\ is also continuous on \testn.

Let $\slow_n^*$ be the dual space of \slown, that is the space of linear functions $H \colon \slown \to \bC$ bounded with respect to the norm
\begin{equation}
	\left\| H \right\|
	:=
	\sup_{f \in \slown} \frac{|H[ f]|}{| f|}
	.
	\label{eq:teststar_norm}
\end{equation}
Note that $\slown^*$ is automatically a Banach space, and that $\slown^* \subset \testn^*$ (i.e.\ boundedness on \slown\ implies continuity on \testn).
Convergence $H^m \to H$ in $\slown^*$ (i.e.\ in norm) implies $H^m [f] \to H[f]$ for all $f \in \testn$, i.e.\ convergence in the topology of $\testn^*$.

As a notational matter, we can identify an $H \in \slown^*$ with a collection of distributions by
\begin{equation}
	H[ f]
	=
	\sum_{\bs a \in \intrange1{N}^n}
	\int \dd^{n \, d} (\bs x)
	H_{\bs a} (\bs x)
	f_{\bs a}(\bs x)
	.
	\label{eq:H_a_def}
\end{equation}
This is useful for verifying the following estimate, which I will use repeatedly;
recall that $| \cdot |_+$ was defined in \cref{eq:plus_norm}, which differs from \cref{eq:bf_norm} in the sign of the exponent.

\begin{lemma}
	Given $G \in \diffy_2$ with $|G|_+ < \infty$,
	$f \in \slow_{\ell+m-2}$, and $H \in \slow^*_m$ for positive integers $\ell,m$, 
	\begin{equation}
		h_{a, \bs b} (x, \bs y)
		=
		\sum_{\bs a', \bs b'}
		\int \int
		H_{\bs b',a'}( \bs y',x')
		G_{a,a'}(x,x')
		f_{\bs b', \bs b}(\bs y', \bs y)
		\dd^d x' \dd^{(m-1)d} \bs y'
		\label{eq:contraction_test_function}
	\end{equation}
	defines an $h \in \slow_\ell$ with
	\begin{equation}
		\left|  h\right|
		\le 
		\| H \| \, |  f | \, | G|_+
		.
		\label{eq:contraction_bound_generic}
	\end{equation}
	\label{lem:contraction_bound_generic}
\end{lemma}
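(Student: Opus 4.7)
The plan is to recognize that for each fixed $(a, \bs b, x, \bs y)$, equation~\eqref{eq:contraction_test_function} expresses $h_{a,\bs b}(x,\bs y)$ as the value of $H$ on the test function
\[
\phi^{x,\bs y,a,\bs b}_{\bs b', a'}(\bs y', x') := G_{a,a'}(x,x') \, f_{\bs b', \bs b}(\bs y', \bs y),
\]
which I expect to lie in $\slow_m$, so that $|h_{a,\bs b}(x,\bs y)| \le \|H\| \cdot |\phi^{x,\bs y,a,\bs b}|$. This identification extends to derivatives: for any multi-index $\ul m$ with $\degree \ul m \le R$ on the $\ell$ positions of $h$, splitting $\ul m = (\ul m_1, \ul m_2)$ with $\ul m_1$ on $x$ and $\ul m_2$ on $\bs y$, these derivatives commute inside the integral onto $G$ and $f$ respectively. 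The tensor-product estimate~\eqref{eq:D_ell_product}, combined with the fact that $D$ already incorporates a maximum over multi-indices of degree at most $R$ per position, then yields $D_{\bs b', a'}[\phi^{\ul m_1,\ul m_2}](\bs y', x') \le D_{a,a'}[G](x,x') \, D_{\bs b', \bs b}[f](\bs y', \bs y)$ uniformly in $\ul m_1, \ul m_2$.

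The next step is to control the weight $e^{-\eta(\cT(\bs y', x'))}$ that appears in $|\phi^{\ul m_1, \ul m_2}|$, with $x'$ playing the role of the distinguished last position of $\phi$. The key inequality is $\cT(\bs y' \cup \bs y) \le \cT(\bs y' \cup \{x'\}) + \cT(\{x'\} \cup \bs y)$, connecting the two point sets through the auxiliary point $x'$, followed by $\cT(\{x'\} \cup \bs y) \le |x - x'| + \cT(\{x\} \cup \bs y)$. Applying subadditivity and monotonicity of $\eta$ then yields
\[
e^{-\eta(\cT(\bs y', x'))} \le e^{\eta(\cT(x, \bs y))} \, e^{\eta(|x - x'|)} \, e^{-\eta(\cT(\bs y', \bs y))}.
\]
Inserting this into the expression for $|\phi^{\ul m_1, \ul m_2}|$ produces a clean factorization: the piece $e^{\eta(|x-x'|)}$ pairs with $D[G]$, the piece $e^{-\eta(\cT(\bs y', \bs y))}$ pairs with $D[f]$, and the compensating $e^{\eta(\cT(x, \bs y))}$ cancels the weight $e^{-\eta(\cT(x, \bs y))}$ from the definition of $|h|$.

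To assemble~\eqref{eq:contraction_bound_generic}, I would take the maximum over $\ul m$ (which eliminates the $\ul m_1, \ul m_2$-dependence from the bound), the maximum over $a, \bs b$, the supremum over $(x, y_1, \ldots, y_{\ell-2})$ with $y_{\ell-1}$ fixed, and finally integrate over $y_{\ell-1}$. The resulting $x'$-integral is then at most $|G|_+$, after using the antisymmetry of $G$ to swap its two arguments and match the $\sup_y \int \dd^d x$ convention of~\eqref{eq:plus_norm}, while the remaining integral and supremum are exactly $|f|$ as in~\eqref{eq:bf_norm}. The main obstacle is the weight-management step: ensuring that the Steiner-diameter manipulation yields precisely the factorization above, so that the ``distinguished last position'' conventions in $|\cdot|$ and $|\cdot|_+$ align and no residual factors remain. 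Once this is in place, the remainder is bookkeeping using~\eqref{eq:D_ell_product}, the definitions, and the antisymmetry of $G$.
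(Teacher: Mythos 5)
Your proposal is correct and follows essentially the same route as the paper's proof: identify $\partial^{\ul m} h_{a,\bs b}(x,\bs y)$ as $H$ applied to a test function $\phi \in \slow_m$ whose distinguished last argument is $x'$, bound $D[\phi]$ by $D[G]\,D[f]$, use subadditivity and monotonicity of $\eta$ together with the Steiner-diameter triangle inequality to replace $e^{-\eta(\cT(x,\bs y))-\eta(\cT(\bs y',x'))}$ by $e^{\eta(|x-x'|)-\eta(\cT(\bs y,\bs y'))}$, and then factorize the sups and integrals into $|G|_+\,|f|$. The only cosmetic differences are that the paper derives the tree inequality via $\cT(x,\bs y)+\cT(x',\bs y')+\cT(x,x')\ge\cT(x,\bs y,x',\bs y')\ge\cT(\bs y,\bs y')$ rather than passing through the intermediate point $x'$ (the two derivations give the same inequality), and that you are slightly more explicit than the paper in noting that matching $\sup_x\int\dd^d x'$ to the $\sup_y\int\dd^d x$ convention of $|G|_+$ uses the symmetry under swapping the two arguments of $G$.
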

Note that the ordering of the arguments of $H$ and $f$ in \cref{eq:contraction_test_function} is deliberate.
\begin{proof}
	Writing out the definitions in \cref{eq:D_ell_def,eq:bf_norm},
	\begin{equation}
		\left|h \right|
		=
		\begin{aligned}[t]
			\int \dd^d z \,
			&
			\max_{a \in \intrange1N}
			\max_{\bs b \in \intrange1N^{\ell-1}}
			\sup_{x \in \bR^d}
			\sup_{\substack{\bs y \in (\bR^d)^{\ell-1} \\ y_{\ell-1} = z}}
			e^{- \eta(\cT (x, \bs y))}
			\\ & \times
			\max_{\ul k \in \mi_{1,R}}
			\max_{\ul m \in \mi_{\ell-1,R}}
			\left|
			\frac{\partial^{\ul k}}{\partial x^{\ul k}}
			\frac{\partial^{\ul m}}{\partial \bs y^{\ul m}}
			h_{ \bs b,a} (x, \bs y)
			\right|
			,
		\end{aligned}
	\end{equation}
	where I use the abbreviation $\cT(x,\bs y)$ for the Steiner diameter of the set obtained by combining $x$ and $\bs y$, and so on. 
	Taking the derivatives inside the integral defining $h$, this has the form
	\begin{equation}
		\begin{aligned}
			\left|h \right|
			&=
			\int \dd^d z \,
			\max_{a \in \intrange1N}
			\max_{\bs b \in \intrange1N^{\ell-1}}
			\sup_{x \in \bR^d}
			\sup_{\substack{\bs y \in (\bR^d)^{\ell-1} \\ y_{\ell-1} = z}}
			e^{- \eta(\cT (x, \bs y))}
			\max_{\ul k \in \mi_{1,R}}
			\max_{\ul m \in \mi_{\ell-1,R}}
			\left| H( \tilde h) \right|
			\\ & \le
			\| H \|
			\int \dd^d z \,
			\max_{a \in \intrange1N}
			\max_{\bs b \in \intrange1N^{\ell-1}}
			\sup_{x \in \bR^d}
			\sup_{\substack{\bs y \in (\bR^d)^{\ell-1} \\ y_{\ell-1} = z}}
			e^{- \eta(\cT (x, \bs y))}
			\max_{\ul k \in \mi_{1,R}}
			\max_{\ul m \in \mi_{\ell-1,R}}
			\left| \tilde h\right|
		\end{aligned}
	\end{equation}
	for $\tilde h \in \slow_m$,
	\begin{equation}
		\tilde h _{\bs b',a'} (\bs y',x')
		: =
		\frac{\partial^{\ul k}}{\partial x^{\ul k}}
		G_{a,a'}(x,x')
		\frac{\partial^{\ul m}}{\partial \bs y^{\ul m}}
		f_{\bs b', \bs b}(\bs y', \bs y)
		.
	\end{equation}
	Then writing out $|\tilde h|$ and noting that the last argument of $\tilde h$ (that is, the one which is integrated in the norm) is $y'_{m-1}$ which I rename $z$, 
	\begin{equation}
		\label{eq:contraction_h_norm}
		\begin{aligned}
			\left|h\right|
			&\le
			\begin{aligned}[t]
				\| H \|
				\int \dd^d z \,
				\max_{a \in \intrange1N}
				\max_{\bs b \in \intrange1N^{\ell-1}}
				\sup_{x \in \bR^d}
				\sup_{\substack{\bs y \in (\bR^d)^{\ell-1} \\ y_{\ell-1} = z}}
				\int \dd^d x' \,
				\max_{a' \in \intrange1N}
				\max_{\bs b' \in \intrange1N^{m-1}}
				\sup_{\substack{\bs y' \in (\bR^d)^{m-1}}}
				&
				\\ \qquad \times
				e^{- \eta(\cT (x, \bs y)) - \eta(\cT\left(x', \bs y' \right))}
				D_{a, a'}[G](x, x')
				D_{\bs b' , \bs b}[f](\bs y' , \bs y)
				&
			\end{aligned}
			\\ &
			\le
			\begin{aligned}[t]
				\| H \|
				\int \dd^d z \,
				\max_{\bs b \in \intrange1N^{\ell-1}}
				\sup_{x \in \bR^d}
				\sup_{\substack{\bs y \in (\bR^d)^{\ell-1} \\ y_{\ell-1} = z}}
				\int \dd^d x' \,
				\max_{a,a' \in \intrange1N}
				\max_{\bs b' \in \intrange1N^{m-1}}
				\sup_{\substack{\bs y' \in (\bR^d)^{m-1}}}
				&
				\\ \qquad \times
				e^{- \eta(\cT (x, \bs y)) - \eta(\cT\left(x', \bs y' \right))}
				D_{a, a'}[G](x, x')
				D_{\bs b' , \bs b}[f](\bs y' , \bs y)
				&
			\end{aligned}
		\end{aligned}
	\end{equation}

	Comparing the relevant trees, note that
	\begin{equation}
		\cT\left( x, \bs y  \right)
		+
		\cT\left( x', \bs y' \right)
		+
		\cT\left( x,x' \right)
		\ge
		\cT\left(  x , \bs y ,  x' , \bs y' \right)
		\ge
		\cT\left(   \bs y , \bs y' \right)
	\end{equation}
	and so using the subadditivity of $\eta$ this implies 
	$$e^{- \eta(\cT ( x, \bs y)) - \eta(\cT\left(  x' , \bs y' \right))} \le e^{\eta(\cT\left( x, x' \right)) - \eta(\cT\left( \bs y, \bs y ' \right))}; $$
	applying this to the right hand side of \cref{eq:contraction_h_norm} eliminates the factors depending on both $x,x'$ and $\bs y,\bs y'$, so that rearranging the result gives
	\begin{equation}
		\begin{aligned}
			\left|h\right|
			&\le
			\begin{aligned}[t]
				&
				\| H \|
				\sup_{\substack{ x \in \bR^d}}
				\int \dd^d x' \,
				\max_{a,a' \in \intrange1N}
				e^{\eta(\cT (x, x'))}
				D_{a, a'}[G](x, x')
				\\ &
				\times
				\int \dd^d z \,
				\max_{\bs b \in \intrange1N^{\ell-1}}
				\max_{\bs b' \in \intrange1N^{m-1}}
				\sup_{\substack{\bs y \in (\bR^d)^{\ell-1} }}
				\sup_{\substack{\bs y' \in (\bR^d)^{m-1}, \\ y'_{m-1}=z}}
				e^{- \eta(\cT\left(\bs y, \bs y' \right))}
				D_{\bs b , \bs b'}[f](\bs y , \bs y')
			\end{aligned}
			\\ &
			=
			\| H \| \, | G|_+\, |  f | 
			.
		\end{aligned}
	\end{equation}
\end{proof}

\medskip

\begin{definition}
	Let \kerneln\ be the set of $H \in \slown^*$ which are antisymmetric, in the sense that, letting $\pi f_{\bs a} (\bs x) := f_{\pi(\bs a)}(\pi(\bs x))$,
			$W(\pi f_{\bs a}) \equiv \sign(\pi) W( f_{\bs a})$.

			Let \kernel\ be the topological product of \kerneln over $n \in \bN$.
	\label{def:kernel}
\end{definition}
This means that $\cV \in \kernel$ is an infinite sequence $V_0,V_1,\dots$, and that convergence $\cV_{j} \to \cV$ is defined by $\lim_{j \to \infty} \|V^j_n - V_n \| = 0$ for each $n$, with $\| \cdot \|$ the norm of $\slown^*$ defined in \cref{eq:teststar_norm}, or $\lim_{j \to \infty} \| \cV^j - \cV\|_n =0\ \forall n \in \bN$ with the shorthand $\|\cV\|_n := \| V_n \|$.
This is a locally convex vector space, and it is complete in the sense of \cite[Chapter~V]{ReedSimon1}, as a result of the completeness of each of the spaces $\kernel_n$, which follows from the closedness of $\slow^*_n$.  

\begin{remark}
	In light of the relationship discussed above between $\testn^*$ and $\slown^*$, $\kernel \subset \Grassmann$, and convergence in $\kernel$ implies convergence in $\Grassmann$.  Since both spaces have the same notion of addition and scalar multiplication, this also means that any differentiable function $\bR \to \kernel$ is also a differentiable function $\bR \to \Grassmann$, with the same derivative.

	At the risk of belaboring the obvious, each element of \kernel\ is a (unique) element of \Grassmann.  This is in contrast to the space of trimmed sequences defined in \cite[Section~4]{GMR} or the family of spaces $\sV^{\alpha,\beta;\gamma}$ of \cite{Duch}, where many different elements correspond to the same element of \Grassmann.
	\label{rem:derivability_kernel}
\end{remark}

Antisymmetry leads to another important bound, using \cref{eq:good_Pfaff}.
For $V_\ell \in \kernel_\ell$ and $2n < \ell$, using the antisymmetry of $V_\ell$
\begin{equation}
	W_\ell [P^{\otimes n} \otimes f]
	=
	\frac{2^n n!}{(2n)!}
	W_\ell [\PPf_n [P] \otimes f]
	\label{eq:cont_to_Pf}
\end{equation}
where $\PPf$ is the function defined in \cref{eq:PPf_def},
and so 
\begin{equation}
	\left| 
	W_\ell [P^{\otimes n} \otimes f]
	\right|
	\le
	\frac{2^n n!}{(2n)!}
	C_\GH^n  \| W_\ell \| |f|
	.
	\label{eq:mult_Pfaff_norms}
\end{equation}

\medskip

Finally I introduce a subspace of \kernel\ with some further restrictions related to symmetry, where I will look for the actual solution in question.

\begin{definition}
	Let \pot\ be the set of $\cV \in \kernel$ which are 
	\begin{enumerate}
		\item Even order: $V_n = 0$ unless $n$ is a positive even integer 
		\item Translation invariant: defining $T_y \bs f$ by $f_{\bs a}(\bs x + y)$ with $(\bs x + y)_j = x_j + y$, $V_n[T_x  f] \equiv V_n[ f]$.
		\item Invariant under the other symmetries:
			$V_n[S[f]] = V_n[f]$ for all ${f \in \diffyn}$, ${S \in \symm}$.
	\end{enumerate}
	\label{def:pot}
\end{definition}
Note that $\pot$ is a (topological vector) subspace of $\kernel$, in particular it also complete.

\subsection{Reformulation of the flow equation}
\label{sec:flow_formal}
Although \kernel\ is a vector subspace of \Grassmann, it is not a subalgebra, in that the antisymmetric product of two elements of \kernel\ can very easily be unbounded in the norms introduced above, essentially because the locality property enforced by these norms is lost.
For example 
$\cV = \int  (\psi, \Omega \psi) (x)\dd^d x \in \kernel$ 
(see \cref{eq:mass_term_as_map}) but 
\begin{equation}
	\cV \wedge \cV
	: f \mapsto
	\begin{cases}
		\sum_{\bs a \in \intrange1N^4}
		\Omega_{a_1a_2}
		\Omega_{a_3a_4}
		\int f_{\bs a}(x,x,y,y)
		\dd^d x
		\dd^d y
		,& 
		f \in \test_4
		\\
		0
		, &
		\text{otherwise}
	\end{cases}
	\label{eq:bad_wedge}
\end{equation}
is an element of $\Grassmann$ but not of $\kernel$, since $(\cV \wedge \cV)[f]$ diverges for e.g.\ $f_{\bs a}(\bs x) = \exp (- |x_4|^2)$, which is an element of $\slow_4$ (but not $\test_4$).
Consequently the elementary operations used on the right hand side of \cref{eq:flow} are not necessarily well defined on \pot, so I will rewrite the full expressions in a more tractable fashion.

Let $\fA^{(\Lambda)} $ be the linear operator on \Grassmann\ defined by 
\begin{equation}
	\fA^{(\Lambda)} \cV (\psi)
	=
	(1 - \cP_0)
	\frac{1}{2} \left\langle \frac{\delta}{\delta \psi}, G_\Lambda \frac{\delta}{\delta\psi} \right\rangle \cV_\Lambda (\psi)
	,
	\label{eq:fA_in_Grassmann}
\end{equation}
where $\cP_0$ is the projection onto $\Grassmann_0 \simeq \bC$, which was left implicit in \cref{eq:flow} and elsewhere in \cref{sec:intro}.  This is equivalent to  
\begin{equation}
	( \fA^{(\Lambda)} \cV)_\ell [f] 
	=
	\frac{(\ell + 1)(\ell+2)}{2}
	V_{\ell+2} [G_\Lambda \otimes f ]
	\label{eq:Lflow_def}
\end{equation}
for $\ell>0$, and $ ( \fA^{(\Lambda)} V)_0 = 0$.  With this last stipulation, $\fA^{(\Lambda)}$ is defined on all of \kernel\ and maps \pot\ into itself.  Since
\begin{equation}
	\begin{aligned}
		\left| G_\Lambda \otimes f \right|
		&
		=
		\int \dd^d z
		\max_{\bs a \in \intrange1N^2} \sup_{\bs b \in \intrange1N^\ell}
		\max_{\bs x \in (\bR^d)^2}
		\sup_{\substack{\bs y \in (\bR^d)^{\ell}, \\ y_{\ell}=z}}
		e^{- \eta(\cT (\bs x, \bs y))}
		D_{\bs a}[G_\Lambda] (\bs x)
		D_{\bs b}[f] (\bs y)
		\\ & 
		\le \Lambda^{2 [\psi] - 1}C_{\GH} |f|
		,
	\end{aligned}
\end{equation}
it is easy to see that each $\| (\fA^{(\Lambda)} \cV)_\ell\|$ is finite, and so $\fA^{(\Lambda)}$ maps $\kernel \to \kernel$ and $\pot \to \pot$.

Similarly, the bilinear map $\Grassmann^2 \to \Grassmann$ defined by 
\begin{equation}
	\cQ^{(\Lambda)} (\cV,\cW)
	:=
	(1 - \cP_0)
	\left\langle \frac{\delta}{\delta \psi} \cV_\Lambda(\psi), G_\Lambda \frac{\delta}{\delta \psi} \cW_{\Lambda}(\psi) \right\rangle
	\label{eq:cQ_Grassman_def}
\end{equation}
which (taking advantage of antisymmetry) is equivalent to
\begin{equation}
	Q^{(\Lambda)}_\ell (\cV,\cW) [f ]
	=
	\sum_{\substack{j,k \in \bN + 1 \\ j+k=\ell+2}}
	j k 
	\left( V_j \owedge W_k \right)[ G_\Lambda \otimes f ]
	\label{eq:Qflow_def}
\end{equation}
where $\owedge$ is a partially antisymmetrized product, defined in distributional notation by
\begin{equation}
	\left[ V_j \owedge W_k \right] (z_1,\dots,z_{j+k})
	=
	\frac{1}{(j+k-2)!}
	\begin{aligned}[t]
		\sum_{\substack{\pi \in \Pi_{j+k}: \\ \pi(1) =1, \pi(j+1) = 2}}
		&
		\sigma(\pi)
		V_j (z_1,z_{\pi(2)},\dots,z_{\pi(j+1)})
		\\ & \times
		W_k (z_2,z_{\pi(j+2)},\dots,z_{\pi(j+k)})
		,
	\end{aligned}
\end{equation}
where $z_j = (a_j,x_j)$.

It follows from \cref{lem:contraction_bound_generic} that 
\begin{equation}
	\left\| Q^{(\Lambda)}_\ell (\cV,\cW)\right\|
	\le
	| G_\Lambda |_+
	\sum_{\substack{j,k \in \bN_+ \\ j+k=\ell+2}}
	j k
	\| V_j \|
	\ 
	\| W_k \|
	\label{eq:Q_Lam_norm}
\end{equation}
and in particular $\cQ^{(\Lambda)}(\kernel,\kernel) \subset \kernel$, since the sum is finite.  Examining the parity and symmetry properties, also $\cQ^{(\Lambda)}(\pot,\pot) \subset \pot$.

With this notation, we can rewrite \cref{eq:flow} as
\begin{equation}
	\frac{\dd}{ \dd \Lambda} \cV
	=
	\fA^{(\Lambda)} \cV
	-
	\frac12
	\cQ^{(\Lambda)} (\cV,\cV)
	=:
	\cC_\Lambda (\cV)
	,
	\label{eq:flow_kernel}
\end{equation}
which also makes sense as a differential equation on \pot, and this is how I will approach it.

Recall the definition of $\fD_\Lambda$ and the related maps in \cref{eq:dilation_maindef}, which are equivalent to $\wt D_\Lambda: \slown \to \slown$, $f \mapsto (\bs a , \bs x \mapsto f_{\bs a} ( \bs x / \Lambda ))$ and $\fD_\Lambda \cV = \left( D_\Lambda V_j \right)_{j \in \bN}$, where
\begin{equation}
	D_\Lambda V_j [f] 
	:=
	\Lambda ^{-j [\psi]} 
	V_j \left[ \wt D_\Lambda f \right]
	.
	\label{eq:DLambda_kernel_def}
\end{equation}
This definition is designed to make $D_\Lambda V_{2n} [P_\Lambda^{\otimes n}]$ independent of $\Lambda$.
Evidently $\fD_\Lambda $ is a jointly-continuous one-parameter semigroup on \kernel\ up to a reparameterization $\Lambda = e^\eta$, and also $\fD_\Lambda (\pot)=\pot$.

Then the problem of finding a scaling solution can be restated as
\begin{equation}
	\frac{\dd}{\dd \Lambda} [ \fD_\Lambda \cV^*]
	=
	\cC_\Lambda (\fD_\Lambda \cV^*)
	;
	\label{eq:kernel_FP}
\end{equation}
by the semigroup property of $\fD$, this holds either for all $\Lambda \in (0,\infty)$ or not at all.

I now list some other helpful formulae about the relationship of $\fD_\Lambda$ with the other objects defined above.
\cref{eq:propagator_scaling} can be rephrased as
\begin{equation}
	\wt D_\Lambda G_\Lambda 
	=
	\Lambda^{-1 + 2 [\psi]}G
	,
\end{equation}
and as a consequence 
\begin{equation}
	\begin{split}
		D_\Lambda Q_\ell (\cV, \cW) [f]
		& =
		\Lambda^{-\ell [\psi]}
		\sum_{\substack{j,k \in \bN_+ \\ j+k=\ell+2}}
		j k 
		\left( V_j \owedge W_k \right)[ G \otimes \wt D_\Lambda f ]
		\\ & =
		\Lambda^{1-(\ell+2) [\psi]}
		\sum_{\substack{j,k \in \bN_+ \\ j+k=\ell+2}}
		j k 
		\left( V_j \owedge W_k \right)[ \wt D_\Lambda G_\Lambda \otimes \wt D_\Lambda f ]
		\\ & =
		\Lambda
		\sum_{\substack{j,k \in \bN_+ \\ j+k=\ell+2}}
		j k 
		\left( D_\Lambda V_j \owedge D_\Lambda W_k \right)[  G_\Lambda \otimes  f ]
	\end{split}
\end{equation}
(as with the propagator, I use the abbreviation $\cQ = \cQ^{(1)}$),
implying
\begin{equation}
	\cQ^{(\Lambda)}(\fD_\Lambda \cV , \fD_\Lambda \cW)
	=
	\Lambda^{-1}
	\fD_\Lambda
	\cQ(\cV,\cW)
	.
	\label{eq:Q_rescaling}
\end{equation}
Similarly (abbreviating $\fA = \fA^{(1)}$)
\begin{equation}
	(\fA^{(\Lambda)} \fD_\Lambda \cV)_\ell [f]
	=
	\Lambda^{-1}
	D_\Lambda
	(\fA \cV)_\ell [f]
	\label{eq:fA_rescaling}
\end{equation}
and so
\begin{equation}
	\cC_\Lambda (\fD_\Lambda \cV)
	\equiv
	\Lambda^{-1}
	\fD_\Lambda \cC(\cV)
	\label{eq:F_rescaling}
\end{equation}
with $\cC := \cC_1$.

\subsection{Wick ordering}
\label{sec:Wick}
$\fA$ has an important relationship to Wick ordered polynomials, corresponding to \cref{eq:Wick_deriv_formal}: it acts on them as an infinitesimal change in the propagator which defines the Wick ordering.

Wick ordering is a linear operation which, using antisymmetry, acts on $\kernel_\ell$ as
\begin{equation}
	\wickord{V_\ell}_\Lambda[ f ]
	=
	(-1)^m w_{\ell m}
	\left( V_\ell, P_\Lambda^{\otimes m} \otimes f \right)
	, \quad
	f \in \slow_{\ell - 2m}
	\label{eq:kernel_Wick}
\end{equation}
where 
\begin{equation}
	w_{\ell m} 
	:=
	\binom{\ell }{ 2m}
	\frac{(2m)!}{2^m m!}
	=
	\frac{\ell!}{(\ell-2m)! 2^m m!}
	\label{eq:w_lm_def}
\end{equation}
is the number of ways of choosing $m$ unordered pairs from a set with $\ell$ elements, cf.\ \cite[Equation~(1.5.14)]{GJ}; recall that $P_\Lambda := \int_0^\Lambda G_\Sigma \dd \Sigma$ is the propagator with a UV cutoff.
It seems unlikely that this is well-defined on all of \kernel, however (and especially that \cref{eq:kernel_Wick} defines a continuous map), so I proceed cautiously as follows.
Let us first consider the definition on elements of \kernel\ which are sequences with only a finite number of nonzero entries, which I call \kernelfin; this is dense in \kernel, bearing in mind the topology specified above.  
Then
\begin{equation}
	\left( \fW_\Lambda \cV \right)_\ell [f]
	:=
	\sum_{m=0}^\infty
	(-1)^m
	w_{\ell + 2m, m}
	V_{\ell + 2m} \left[ P_\Lambda^{\otimes m} \otimes f \right]
	\label{eq:fW_def}
\end{equation}
defines a linear operator coinciding with Wick ordering on \kernelfin, preserving $\potfin := \kernelfin \cap \pot$, and it is invertible with inverse
\begin{equation}
	\left( \fW^{-1}_\Lambda \cV \right)_\ell [f]
	:=
	\sum_{m=0}^\infty
	w_{\ell + 2m, m}
	V_{\ell + 2m} \left[ P_\Lambda^{\otimes m} \otimes f \right]
	\label{eq:fW_inverse}
\end{equation}
as can easily be verified by direct calculation.
I will consider $\fW_\Lambda^{\pm 1}$ to be defined on those $\cV$ for which 
\begin{equation}
	\sum_{m=0}^\infty
	w_{\ell + 2m, m}
	\| \fW_\Lambda V_{\ell + 2m} \|
	<
	\infty
	,
	\ \forall \ell \in \bN + 1
	\label{eq:fW_convergence}
\end{equation}
on some open set of $\Lambda$,
which implies that sums the right hand sides of \cref{eq:fW_def,eq:fW_inverse} are absolutely convergent for all $f$.  On this domain $\fW_\Lambda^{\pm 1} \cV$ is jointly continuous in $\Lambda$ and $\cV$.

Since $\frac{\dd }{\dd \Lambda} P_\Lambda =  G_\Lambda$, using antisymmetry
\begin{equation}
	\begin{split}
		\frac{\dd}{\dd \Lambda}
		\left( \fW_\Lambda^{\pm 1} \cV \right)_\ell [f]
		& =
		\sum_{m=1}^\infty
		(\mp 1)^{m} 
		m w_{\ell + 2m, m}
		V_{\ell + 2m} \left[ P_\Lambda^{\otimes m-1} \otimes G_\Lambda \otimes f \right]
		\\ &=
		\mp
		\frac{(\ell + 2)(\ell + 1)}{2}
		\sum_{m=0}^\infty
		(\mp 1)^{m} 
		w_{\ell + 2m, m}
		V_{\ell + 2m+2} \left[ P_\Lambda^{\otimes m} \otimes G_\Lambda \otimes f \right]
		\\ & =
		\mp
		\frac{(\ell + 2)(\ell + 1)}{2}
		(\fW^{\pm 1}_\Lambda \cV)_{\ell+2} [G_\Lambda \otimes f]
	\end{split}
\end{equation}
(all of the sums are absolutely convergent for $\cV$ in the domain of $\fW^{\pm 1}_\Lambda$)
and also using
\begin{equation}
	m w_{\ell + 2m, m}
	=
	\frac{(\ell + 2m)!}{\ell! 2^m (m-1)!}
	=
	\frac{(\ell + 2) (\ell + 1 )}{2} w_{\ell + 2m, m-1}
\end{equation}
this combines with \cref{eq:Lflow_def} to give
\begin{equation}
	\frac{\dd }{\dd \Lambda} 
	\fW^{\pm 1}_\Lambda
	\cV
	=
	\mp
	\fA^{(\Lambda)} 
	\fW^{\pm 1}_\Lambda
	\cV
	\label{eq:L_flow_Wick}
	.
\end{equation}
Alternatively, 
\begin{equation}
	\begin{split}
		m w_{\ell + 2m, m}
		&=
		\frac{(\ell + 2m)!}{\ell! 2^m (m-1)!}
		=
		\frac{(\ell+2m)(\ell+2m-1)}{2}
		\frac{(\ell+2m-2)!}{\ell! 2^{m-1} (m-1)!}
		\\ & =
		\frac{(\ell+2m)(\ell+2m-1)}{2}
		w_{\ell+2m-2,m-1}
	\end{split}
\end{equation}
and so
\begin{equation}
	\begin{split}
		\frac{\dd}{\dd \Lambda}
		\left( \fW_\Lambda^{\pm 1} \cV \right)_\ell [f]
		& =
		\sum_{m=1}^\infty
		(\mp 1)^{m} 
		m w_{\ell + 2m, m}
		V_{\ell + 2m} \left[ P_\Lambda^{\otimes m-1} \otimes G_\Lambda \otimes f \right]
		\\ &=
		\mp
		\sum_{n=0}^\infty
		\begin{aligned}[t]
			\frac{(\ell + 2n+ 2)(\ell  +2n +1)}{2}
			(\mp 1)^{n} 
			w_{\ell + 2n, n}
			&
			\\  \times 
			\vphantom{\sum}
			V_{\ell + 2n+2} \left[G_\Lambda \otimes P_\Lambda^{\otimes n} \otimes  f \right]
			&
		\end{aligned}
		\\ & =
		\mp
		(\fW^{\pm 1}_\Lambda\cV)_\ell [f]
	\end{split}
\end{equation}
(renumbering $n=m-1$) and so also
\begin{equation}
	\frac{\dd }{\dd \Lambda} 
	\fW^{\pm 1}_\Lambda
	\cV
	=
	\mp
	\fW^{\pm 1}_\Lambda
	\fA^{(\Lambda)} 
	\cV
	\label{eq:L_Wick_flow}
\end{equation}
(in fact it is also easy enough to check directly that $\fW^{\pm 1}_\Lambda$ and $\fA^{(\Lambda)}$ commute). 
As a result, if $\frac{\dd}{\dd \Lambda} \cV_\Lambda = \fA^{(\Lambda)} \cV_\Lambda - \cQ^{(\Lambda)}(\cV_\Lambda,\cV_\Lambda) $ (i.e.\ \cref{eq:flow_kernel}) and $\cW_\Lambda \equiv \fW_\Lambda \cV_\Lambda$, then 
\begin{equation}
	\begin{split}
		\frac{\dd \cW_\Lambda}{\dd \Lambda}
		&=
		-
		\frac12
		\fW_\Lambda
		\cQ^{(\Lambda)}(\cV_\Lambda,\cV_\Lambda)
		\\ &=
		-
		\frac12
		\fW_\Lambda
		\cQ^{(\Lambda)}(\fW^{-1}_\Lambda \cW_\Lambda,\fW^{-1}_\Lambda \cW_\Lambda)
		=:
		\frac12
		\cQ^{\fW,\Lambda} (\cW_\Lambda,\cW_\Lambda)
		.
	\end{split}
	\label{eq:Wick_flow}
\end{equation}
The reverse implication also follows using \cref{eq:L_flow_Wick}.
Note that this equivalence depends on the interactions involved being in the domain of $\fW_\Lambda^{-1}$, as defined above in terms of absolute convergence in norms.
On this domain, however, the fixed point problem is equivalent to
\begin{equation}
	\left.
	\frac{\dd}{\dd \Lambda} \left[ \fD_\Lambda \cW^* \right]
	\right|_{\Lambda = 1}
	=
	\frac12
	\cQ^{\fW} (\cW^*,\cW^*)
	\label{eq:Wick_stationary_formal}
\end{equation}
with the convention $\cQ^{\fW} := \cQ^{\fW,1}$, cf.\ \cref{eq:flow_Wick_formal}.
Note that when $\cV,\cW \in \potfin$, 
\begin{equation}
	\begin{aligned}
		\fW_\Lambda \cQ^{(\Lambda)}(\cV,\cW)(\psi)
		&
		=
		\sum_{m=0}^\infty
		\frac{(-1/2)^m}{m!}
		\left\langle \frac{\delta}{\delta\psi} , P_\Lambda \frac{\delta}{\delta \psi} \right\rangle^m
		\left\langle \frac{\delta}{\delta \psi} \cV(\psi), G_\Lambda \frac{\delta}{\delta \psi} \cW(\psi) \right\rangle
		\\
		& =
		\sum_{n=0}^\infty
		\frac{(-1)^n}{n!}
		\left\langle \frac{\delta}{\delta\psi} , P_\Lambda \frac{\delta}{\delta \varphi} \right\rangle^n
		\left\langle \frac{\delta}{\delta \psi} \fW_\Lambda\cV(\psi), G_\Lambda \frac{\delta}{\delta \varphi}\fW_\Lambda \cW(\varphi) \right\rangle
		\Big|_{\varphi=\psi}
	\end{aligned}
	\label{eq:flow_Wick_finite}
\end{equation}
(cf.\ \cref{eq:flow_Wick_formal}; this can also be verified directly using \cref{eq:fW_def} but the resulting calculation is more involved) since all of the sums involved are finite, and so recalling the definition of $\cQ^{(\Lambda)}$ in \cref{eq:Qflow_def}
\begin{equation}
	Q^{\fW,\Lambda}_\ell (\cV,\cW)[f]
	=
	\begin{aligned}[t]
		\sum_{n=0}^\infty
		&
		\frac{(-1)^n}{n!}
		\sum_{\substack{j,k \in 2\bN + 2 \\ j+k=\ell+2n+2\\ j,k \ge n+1}}
		(-1)^{\tfrac12j-1}
		\frac{j!}{(j-n-1)!}
		\frac{k!}{(k-n-1)!}
		\frac{1}{\ell!}
		\sum_{\pi \in \Pi_\ell}
		\sigma(\pi)
		\\ & 
		\times
		\int\int
		\begin{aligned}[t]
			G_\Lambda (x_1,y_1)
			\prod_{r=2}^{n+1}
			P_\Lambda (x_r,y_r)
			[\pi f](x_{n+2},\dots,x_j,y_{n+2},\dots,y_k)
			& \\
			\times
			V_j(\bs x)
			W_k (\bs y)
			\dd^{jd} \bs x
			\dd^{k d} \bs y
			&
		\end{aligned}
	\end{aligned}
	\label{eq:QW_integral_prelim}
\end{equation}
where I have used antisymmetry to rearrange the arguments, giving an additional sign;
also, for legibility I have omitted the discrete indices, which are arranged in the same fashion as the position arguments.
Using the antisymmetry of $W_k$ further, the product of $P_\Lambda$ factors can be replaced by
\begin{equation}
	\frac{1}{n!}
	\det \left[ P_\Lambda(x_{i+1},y_{j+1}) \right]_{i,j \in \intrange1n}
	=
	(-1)^{\tfrac12n-1}
	\frac{1}{n!}
	\Pf \check \fP_{n}(x_2,..,x_{n+1},y_2,\dots,y_{n+1})
\end{equation}
where $\check \fP_n$ is the $2n \times 2n$ matrix with elements
\begin{equation}
	\check \fP_{ij} (\bs z)
	=
	\begin{cases}
		P_\Lambda (z_i,z_j)
		, &
		i \le n < j \text{ or } j \le n <i
		\\
		0
		, &
		\text{otherwise,}
	\end{cases}
	\label{eq:fP_tilde_pair}
\end{equation}
so
\begin{equation}
	Q^{\fW,\Lambda}_\ell (\cV,\cW)[f]
	=
	\begin{aligned}[t]
		\sum_{n=0}^\infty
		&
		(-1)^n
		\sum_{\substack{j,k \in 2\bN + 2 \\ j+k=\ell+2n+2\\ j,k \ge n+1}}
		(-1)^{(j+n)/2}
		j k
		\binom{j-1}n
		\binom{k-1}n
		\frac{1}{\ell!}
		\sum_{\pi \in \Pi_\ell}
		\sigma(\pi)
		\\ & 
		\times
		\int\int
		\begin{aligned}[t]
			&
			G_\Lambda (x_1,y_1)
			\Pf \check \fP_n(x_2,..,x_{n+1},y_2,\dots,y_{n+1})
			\\
			&
			\times
			[\pi f](x_{n+2},\dots,x_j,y_{n+2},\dots,y_k)
			V_j(\bs x)
			W_k (\bs y)
			\dd^{jd} \bs x
			\dd^{k d} \bs y
			.
		\end{aligned}
	\end{aligned}
	\label{eq:QW_integral}
\end{equation}
When using this formula later, I will find it helpful to note that 
$ j k
\binom{j-1}n
\binom{k-1}n$
is the number of ways of choosing which of the arguments of $V_j$ and $W_k$ appear as arguments of which other factors in the integral.

At $\Lambda=1$, using a Gram-Hadamard bound (see \cref{lem:gram_Lambda_integral,thm:gram} in \cref{sec:Pfaff}) to bound $| \Pf \check \fP_n| \le C_\GH^n$, and similarly for its derivatives, so $| \check \fP_n \otimes f|\le C_\GH^n |f|$ referring to the norm of $\slow_{\ell+2n}$; using this with \cref{lem:contraction_bound_generic}
\begin{equation}
	\begin{aligned}
		\left\| Q_\ell^\fW (\cV,\cW) \right\|
		&\le
		\sum_{\substack{j,k \in 2\bN + 2 \\ j+k=\ell+2n+2\\ j,k \ge n+1}}
		j k
		\binom{j-1}n
		\binom{k-1}n
		C_\GH^{n}
		| G |_+ 
		\| V_j \|
		\| W_k \|
		\\ & 
		\le
		C_\GH^{- ( \ell/2) - 1}
		|G|_+
		\sum_{\substack{j,k \in 2 \bN + 2}}
		3^{j} 3^{k}
		C_\GH^{j/2}
		C_\GH^{k/2}
		\| V_j \|
		\| W_k \|
	\end{aligned}
	\label{eq:QW_summable}
\end{equation}
and so $\cQ^\fW$ is well defined and continuous whenever the sum on the right hand side is finite.

\section{Relevant and irrelevant parts}
\label{sec:relevant_parts}

In this section, I introduce a linear operator $\fL$ on \pot\ with the properties
\begin{itemize}
	\item $\fL(\pot)$, i.e.\ the range of $\fL$ is finite dimensional, and
	\item letting $\fI = 1 - \fL$, there are constants $\Cl{irrel-pre}<\infty,\Cl{irrel-power}>0$ such that
		\begin{equation}
			\| \fD_\Lambda \fI \cV \|_\ell
			\le
			\Cr{irrel-pre} \Lambda^{-\Cr{irrel-power}}
			\| \cV\|_\ell
			\quad 
			\forall \Lambda \ge 1, \ \ell \in 2 \bN +2
			.
			\label{eq:irrel_basic}
		\end{equation}
\end{itemize}
Following the standard renormalization group terminology reviewed in \cref{sec:example}, $\fL \cV$ identifies the local relevant part of some $\cV \in \pot$, while \cref{eq:irrel_basic} is a precise way of stating that the remaining part $\fI \cV$ is irrelevant (which is how I will refer to it).

Actually it will be more convenient to decompose $\fI = \fI_\many + \fI_\inter$, where $\fI_\many$ is simply the projection onto sequences whose first few terms vanish (see \cref{eq:fI_many_def}), both since the proof of \cref{eq:irrel_basic} is quite different in the two corresponding cases (see \cref{eq:V_j_scaling,lem:fI_few_scaling}) and because I will treat the resulting parts differently in the construction of the fixed point.

\medskip

For $\ell \ge 2$, let $\mi_\ell$ be the set of multiindices $\intrange1\ell \times \intrange1d \to \bN$, and let $\mi_\ell^-$ be the set of $\ul m \in \mi_\ell$ with $m_{\ell,j}=0$ for $j=1,\dots,d$.  For each $f \in \slow_\ell$, I denote
\begin{equation}
	\cL f 
	=
	\sum_{\substack{\ul m \in \mi_\ell^- : \\ \ell [\psi] + |\ul m| \le d}}
	f_{\ul m}^\cL
	\label{eq:fcL_def}
\end{equation}
with
\begin{equation}
	f_{\ul m}^\cL
	\colon
	\bs a, \bs x
	\mapsto
	\frac{1}{\ul m !}
	(\bs x - \bs x_\cL)^{\ul m}
	\partial^{\ul m}
	f_{\bs a } (\bs x_\cL)
	,
	\label{eq:f_cL_m_def}
\end{equation}
recall $\bs x_\cL := (x_\ell,\dots,x_\ell)$.  Since we assumed at the beginning of \cref{sec:formalism} that $2 [\psi] + R - d > 0$, \cref{eq:fcL_def} involves only $\ul m \in \mi_{\ell,R}$. 

Since (recalling the definition of $D_{\bs a}$ in \cref{eq:D_ell_def})
\begin{equation}
	D_{\bs a}[f_{\ul m}^{\cL}](\bs x)
	=
	\max_{\ul n \le \ul m}
	\left|f_{\ul n}^\cL (\bs a, \bs x)\right|
	\le
	\max_{\ul n \le \ul m}
	\frac{1}{\ul n!}
	\left|(\bs x - \bs x_\cL)^{\ul n}\right| 
	D_{\bs a}[f](\bs x_\cL)
\end{equation}
and
$ \left| (\bs x - \bs x_\cL)^{ \ul n}\right| \le (\cT (\bs x))^{|\ul n|}$ for all $\bs x$ (note that this bound is saturated when $x_1 = x_2 = \dots x_{\ell-1}$), whence using \cref{eq:eta_growth} there is a  $\Cl{f_cL_pre}$ such that
\begin{equation}
	\sup_{\bs x}
	\left| (\bs x - \bs x_\cL)^{ \ul n}\right| 
	e^{- \eta(\cT (\bs x))}
	\le
	\sup_{y \in [0,\infty)} [\eta^{-1}(y)]^{|\ul n|} e^{-y}
	\le
	\Clast
	\label{eq:monomial_tree_bound}
\end{equation}
uniformly for $|\ul n| \le R$,
we have
\begin{equation}
	\begin{split}
		\left| f_{\ul m}^{\cL}\right|
		&
		\le
		\int 
		\left[ 
			\max_{\ul n \le \ul m}
			\frac{1}{\ul n!}
			\sup_{\substack{\bs x \in (\bR^d)^\ell:}}
			e^{-\eta(\cT(\bs x))}
			\left|(\bs x - \bs x_\cL)^{\ul n}\right| 
		\right]
		\max_{\bs a \in \intrange1N^\ell}
		D_{\bs a} [f] (y,\dots,y) \dd^d y
		\\ &
		\le
		\Clast
		\ |f|
		,
	\end{split}
\end{equation}
and so \Cp{f_cL}
\begin{equation}
	| \cL f | \le \Clast |f|
	,
	\label{eq:f_cL_norm}
\end{equation}
and in particular $\cL f \in \slow_\ell$.  Since the sum in \cref{eq:fcL_def} is empty for $\ell > d/[\psi]$, \Clast\ can be taken to be uniform in $\ell$.

Then let $\fL$ be the linear map on \kernel\ with
\begin{equation}
	\left( \fL \cV \right)_\ell [f]
	:=
	\frac{1}{\ell!}
	\sum_{\pi \in \Pi_\ell}
	\sign \pi
	V_\ell [\pi \cL f]
	;
	\label{eq:fR_V_def}
\end{equation}
since I have explicitly antisymmetrized and \cref{eq:f_cL_norm} implies
\begin{equation}
	\| \fL \cV \|_\ell
	\le 
	\Clast
	\| \cV \|_\ell
	\label{eq:fR_cV_norm}
\end{equation}
(by the definition of \kernel, $\pi V_\ell \equiv V_\ell$ and so $\|\pi V_\ell\| \equiv \|V_\ell\|$)
this does in fact give a map $\kernel \to \kernel$.
For $V \in \pot$, by translation invariance
\begin{equation}
	\begin{aligned}
		V_\ell [ f_{\ul m}^{\cL}]
		& =
		\int \dd^{\ell d} \bs x
		V_\ell (\bs x) (\bs x - \bs x_\cL)^{\ul m}
		\partial^{\ul m}
		f(\bs x_\cL)
		=
		\int \dd^{\ell d} \bs x
		V_\ell (\bs x-\bs x_\cL) (\bs x - \bs x_\cL)^{\ul m}
		\partial^{\ul m}
		f(\bs x_\cL)
		\\ & =
		\int \dd^d y_1 \cdots \int \dd^d y_{\ell-1}
		V_\ell (y_1,\dots,y_{\ell-1},0) (\ul y)^{\ul m}
		\int \dd^d x_\ell \partial^{\ul m} f(x_\ell,\dots,x_\ell)
		,
	\end{aligned}
\end{equation}
that is, $\fL \cV$ can always be expressed as a  linear combination of the finite set $\sY \subset \pot$ whose elements have the form
\begin{equation}
	\cY^\ell_{\ul m, \bs a}
	:=
	\int
	\frac{\partial^{m_1}}{\partial x^{m_1}}	\psi_{a_1} (x) 
	\dots 
	\frac{\partial^{m_\ell}}{\partial x^{m_\ell}}	\psi_{a_\ell} (x) 
	\dd^d x
	\colon
	f \mapsto
	\begin{cases}
		\int \partial^{\ul m} f_{\bs a}(x,\dots,x) \dd^d x
		, &
		f \in \slow_\ell
		\\
		0
		, &
		\text{otherwise}
	\end{cases}
	\label{eq:Y_def}
\end{equation}
with $\ell [\psi] + |\ul m| \le d$,
and so $\fL(\pot)$ is finite dimensional.

Recalling the expression in \cref{eq:Delta_explicit} for $\Delta$ (the generator of a one-parameter semigroup of dilations),
evidently $\sY$ is a collection of eigenvectors of $\Delta$:
\begin{equation}
	\fD_\Lambda 
	\cY^\ell_{\ul m, \bs a}
	=
	\Lambda^{d - \ell [\psi] - |\ul m|}
	\cY^\ell_{\ul m, \bs a}
	\implies
	\Delta
	\cY^\ell_{\ul m, \bs a}
	=
	(d - \ell [\psi] - |\ul m|)
	\cY^\ell_{\ul m, \bs a}
	\label{eq:Y_fL_eig}
\end{equation}

It is clear from \cref{eq:fcL_def,eq:f_cL_m_def} that $\wt D_\Lambda \cL f \equiv \cL \wt D_\Lambda f$, so $\fL$ indeed commutes with $\fD_\Lambda$.

Taking into account the other symmetries included in the definition of $\pot$ in \cref{def:pot}, $\fL(\pot)$ is generically smaller.  For the model under consideration (and especially for the specified parameter range $\varepsilon \in [0,d/6)$, where $\sY$ includes only elements with $\ell=2$, $|\ul m| \le 1$ and $\ell = 4$, $\ul m = 0$), the requirement of invariance under symplectic and rotational symmetries means that $\fL(\pot)$ is actually spanned by the two elements
\begin{equation}
	\sY^2_\Omega
	:=
	\sum_{a,b \in \intrange1N}
	\Omega_{ab} \sY^2_{(a,b),0}
	, \quad
	\sY^4_\Omega
	:=
	\sum_{\bs a \in \intrange1N^4}
	\Omega_{a_1,a_2}
	\Omega_{a_3,a_4}
	\sY^4_{\bs a,0}
	.
	\label{eq:fLpot_smallbasis}
\end{equation}

It is also evident that $\cL \cL f = \cL f$, and so $\fL^2 = \fL$ on \kernel.  Letting 
\begin{equation}
	(\fI_\inter \cV)_\ell
	:=
	\begin{cases}
		V_\ell - (\fL \cV)_\ell
		, & 
		\ell [\psi] \le d
		\\
		0
		, &
		\ell [\psi] > d
	\end{cases}
	,
	\label{eq:fI_few_def}
\end{equation}
evidently $\| (\fI_\inter \cV)_\ell \| \le (1+\Clast) \|V_\ell\|$ and $\fI_\inter^2 = \fI_\inter$.  Also $\fI_\inter$ commutes with $\fD_\Lambda$.
Note that the first case of \cref{eq:fI_few_def} can equivalently be expressed as 
\begin{equation}
	(\fI_\inter \cV)_\ell [f]
	=
	\sum_{\pi \in \Pi_\ell}
	\sign \pi
	V_\ell [\pi \cR f]
	\label{eq:fI_few_alt}
\end{equation}
where $\cR f := f - \cL f$ is the Taylor remainder of order $N_\ell := \floor{d - \ell [\psi]}$ of $f(x)$ around $x_\cL$.
Finally letting $\fI_\many := 1 - \fI_\inter - \fL$ or equivalently 
\begin{equation}
	(\fI_\many \cV)_\ell
	=
	\begin{cases}
		0
		, &
		\ell [\psi] \le d
		\\
		V_\ell
		, &
		\ell [\psi] > d
	\end{cases}
	\label{eq:fI_many_def}
\end{equation}
we also have $\fI_\many^2 = \fI_\many$, and $\fI_\many$ also commutes with $\fD_\Lambda$.

Recalling \cref{eq:bf_norm},
\begin{equation}
	\left| \wt D_\Lambda f \right|
	=
	\int \dd^d y
	\max_{\bs a \in \intrange1{N}^n}
	\sup_{\substack{\bs x \in (\bR^d)^\ell: \\ x_\ell = y}}
	e^{- \eta(\cT (\bs x))}
	D_{\bs a}[\wt D_\Lambda f] (\bs x)
\end{equation}
and for $\Lambda \ge 1$, also $D_{\bs a}[\wt D_\Lambda f ](\bs x) \le D_{\bs a} [f] (\bs x / \Lambda)$ and thus
\begin{equation}
	\begin{split}
		\left| \wt D_\Lambda f \right|
		& \le
		\int \dd^d y
		\max_{\bs a \in \intrange1{N}^n}
		\sup_{\substack{\bs x \in (\bR^d)^\ell: \\ x_\ell = y}}
		e^{- \eta(\cT (\bs x))}
		D_{\bs a}[ f] (\bs x / \Lambda)
		\\ &=
		\Lambda^{d}
		\int \dd^d y
		\max_{\bs a \in \intrange1{N}^n}
		\sup_{\substack{\bs x \in (\bR^d)^\ell: \\ x_\ell = y}}
		e^{- \eta(\Lambda \cT (\bs x))}
		D_{\bs a}[ f] (\bs x)
		\le
		\Lambda^{d} |f|
		,
	\end{split}
	\label{eq:f_scaling}
\end{equation}
and consequently 
\begin{equation}
	\left\| \fD_\Lambda \cV \right\|_j
	\le
	\Lambda^{d- j[\psi]}
	\| \cV \|_j
	,
	\quad \forall \Lambda \ge 1,
	\label{eq:V_j_scaling}
\end{equation}
giving scaling behavior similar to that of \cref{eq:sextic_scaling} as an upper bound also for nonlocal terms.  In particular $\fI_\many$ is defined in \cref{eq:fI_many_def} exactly so that only negative exponents appear in \cref{eq:V_j_scaling} for the nonvanishing part, corresponding to the idea that such terms are dimensionally irrelevant.

\medskip

It is less obvious that the operation $\fI_\inter$, defined above as the remainder of the $\ell [\psi] \le d$ part after subtracting the ``local relevant'' part defined by $\fL$, is also irrelevant in a similar sense.  As discussed in the introduction, this has usually been handled elsewhere by redefining $\fI_\inter$ as a map in a larger space involving derivative labels (sometimes introduced via a jet extension, as in \cite{Duch}) so that different components have different scaling properties, but this leads to a number of inconveniences, including a spurious divergence which sometimes unnecessarily restricts the choice of cutoff.  
Thinking of the interaction as a functional or distribution as I do here -- rather than a collection of coefficient functions -- makes it possible instead to reformulate the necessary estimate on the basis of the following lemma.

\begin{lemma}
	For any $d, [\psi] > 0$ there exists $\Cl{fI_few_scaling}$ such that
	\begin{equation}
		\left| \cR \wt D_\Lambda f \right|
		\le
		\Clast \Lambda^{d - N_\ell -1}
		|f|
		\label{eq:f_cR_scaling}
	\end{equation}
	for all $f \in \slow_\ell$ and $\Lambda \in [1,\infty)$.
	\label{lem:f_cR_scaling}
\end{lemma}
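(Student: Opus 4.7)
My approach combines the commutativity $\cR \wt D_\Lambda = \wt D_\Lambda \cR$ (immediate from $\cL \wt D_\Lambda = \wt D_\Lambda \cL$, which follows by direct substitution into \cref{eq:fcL_def,eq:f_cL_m_def}) with the integral form of Taylor's theorem with remainder. Writing
\begin{equation*}
	\cR f(\tilde \bs x)
	=
	\sum_{\substack{|\ul m|=N_\ell+1 \\ \ul m\in\mi_\ell^-}}
	\frac{(N_\ell+1)(\tilde \bs x - \tilde \bs x_\cL)^{\ul m}}{\ul m!}
	\int_0^1 (1-t)^{N_\ell}
	(\partial^{\ul m} f)(\tilde \bs x_\cL + t(\tilde \bs x - \tilde \bs x_\cL))\, dt
\end{equation*}
and substituting $\tilde \bs x = \bs x/\Lambda$, the rescaling $(\tilde \bs x - \tilde \bs x_\cL)^{\ul m} = \Lambda^{-(N_\ell+1)}(\bs x - \bs x_\cL)^{\ul m}$ extracts an overall prefactor $\Lambda^{-(N_\ell+1)}$, exhibiting the $(N_\ell+1)$-st order vanishing of $\cR f$ at $\bs x_\cL$ in the $\mi_\ell^-$ directions.

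For the derivatives appearing in $D_{\bs a}[\cR \wt D_\Lambda f]$, I use the identity $\partial^{\ul n}\cR\wt D_\Lambda f = \Lambda^{-|\ul n|}(\partial^{\ul n} \cR f)(\bs x/\Lambda)$ and distinguish three cases for $\ul n \in \mi_{\ell,R}$: (a) $\ul n \in \mi_\ell^-$ with $|\ul n| \le N_\ell$, where a direct computation shows that $\partial^{\ul n} \cR f$ equals the Taylor remainder of $\partial^{\ul n} f$ of reduced order $N_\ell - |\ul n|$, so the resulting integrand involves derivatives of $f$ of total order $N_\ell + 1$ with per-variable degree at most $N_\ell + 1 \le R$ (using $R > d - 2[\psi] \ge N_\ell$); (b) $\ul n \in \mi_\ell^-$ with $|\ul n| > N_\ell$, where $\partial^{\ul n}\cL f = 0$ outright, so $\partial^{\ul n}\cR\wt D_\Lambda f = \Lambda^{-|\ul n|}(\partial^{\ul n} f)(\bs x/\Lambda)$ with $\Lambda^{-|\ul n|} \le \Lambda^{-(N_\ell+1)}$; and (c) $\ul n$ involving $x_\ell$-derivatives, which I handle by passing to the coordinates $(\bs u, x_\ell) = (\bs x - \bs x_\cL, x_\ell)$ in which $\cL f$ depends polynomially on $\bs u$ of degree $\le N_\ell$, so that $\bs u$-derivatives of sufficient order annihilate $\cL$ and the remaining pieces reduce to cases (a) or (b).

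To complete the estimate I integrate in $y = x_\ell$: the polynomial factor $(\cT(\bs x))^{N_\ell+1}$ is absorbed into a fraction of the weight $e^{-\eta(\cT(\bs x))}$ via \cref{eq:eta_growth}; the monotonicity of $\eta$ and the identity $\cT(\bs z_t(\bs x)) = (t/\Lambda)\cT(\bs x) \le \cT(\bs x)$ for $\Lambda \ge 1, t \in [0,1]$ (with $\bs z_t(\bs x) = (\bs x_\cL + t(\bs x - \bs x_\cL))/\Lambda$) transfer the remaining weight onto the shifted argument $\bs z_t(\bs x)$; and the change of variables $\tilde y = y/\Lambda$ contributes the Jacobian $\Lambda^d$. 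Together with the factor $\Lambda^{-(N_\ell+1)}$ already extracted, this assembles the claimed bound $|\cR \wt D_\Lambda f| \le \Cr{fI_few_scaling}\, \Lambda^{d-N_\ell-1}\,|f|$.

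The main obstacle is case (c): a direct Leibniz expansion of $\partial^{\ul n}\cR\wt D_\Lambda f$ generates many intermediate terms in which derivatives of $f$ of per-variable degree exceeding $R$ appear, and these are not individually controlled by $|f|$; they can only cancel collectively, as may be verified in the simplest $\ell=2$ examples. The coordinate change to $(\bs u, x_\ell)$ is what makes the cancellation transparent, by identifying $\cL$ with a genuine Taylor polynomial in the single multi-variable $\bs u$ (with $x_\ell$ acting as a parameter) and thereby restoring the Taylor-remainder structure for $\partial^{\ul n}\cR f$.
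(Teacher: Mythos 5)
Your cases (a) and (b), restricted to $\ul n \in \mi_\ell^-$, essentially match the paper's two-case split on $|\ul n| \lessgtr N_\ell$; and you are right to single out case (c) as the genuinely nontrivial one. The paper's formula \eqref{eq:f_cR_easy_scaling} and the displayed Taylor remainder identity hold only for $\ul n \in \mi_\ell^-$: differentiating $\cL f$ in $x_\ell$ generates extra chain-rule contributions from the evaluation point $\bs x_\cL = (x_\ell,\dots,x_\ell)$ (and from $(\bs x - \bs x_\cL)^{\ul m}$). Already at $\ell = 2$, $N_\ell = 0$ one computes $\partial_{x_\ell}\cR f = \partial_\ell f(\bs x) - \sum_j\partial_j f(\bs x_\cL)\neq\partial_\ell f(\bs x)$, so the paper's ``$\partial^{\ul n}\cR f = \partial^{\ul n} f$'' does not hold verbatim for $\ul n\notin\mi_\ell^-$. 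Your observation that the difficulty lies in controlling $\ul n$ with $x_\ell$-derivatives is therefore well taken, and is where a careful proof must do additional work that the paper leaves implicit.

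The problem is that your proposed resolution of case (c) does not close this gap: the ``collective cancellation'' you invoke does not occur, and the $(\bs u, x_\ell)$ coordinate change locates the problematic terms but does not eliminate them. Concretely, take $\ell = 2$, $N_\ell = 1$, $R = 2$ (the relevant regime for the model at $d = 2,3$) and $\ul n$ corresponding to two $x_2$-derivatives. Direct computation gives
\begin{equation*}
\partial_{x_2}^2 \cR f(\bs x) = \partial_2^2 f(\bs x) - \partial_2^2 f(\bs x_\cL) + \partial_1^2 f(\bs x_\cL) - (x_1 - x_2)\bigl[\partial_1^3 + 2\partial_1^2\partial_2 + \partial_1\partial_2^2\bigr]f(\bs x_\cL),
\end{equation*}
and the term $(x_1-x_2)\partial_1^3 f(\bs x_\cL)$ has no cancellation partner, while $\partial_1^3$ has per-variable degree $3 > R$ and so is not controlled by $D_{\bs a}[f]$. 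In your $(\bs u, v)$ picture this is exactly the piece $-u\,\partial_u\partial_v^2 f(0\cdot,v)$ in the expanded form of the order-one remainder of $\partial_v^2 f$; converting $\partial_u\partial_v^2$ back to $\bs x$-derivatives genuinely produces a third derivative on $x_1$. The obstruction is structural whenever $N_\ell \ge 1$ and $\ul n$ carries $x_\ell$-derivatives of total order close to $R$, so ``reducing case (c) to cases (a) and (b)'' cannot be done in the way you sketch. Some further mechanism is required here --- e.g.\ a strengthened hypothesis on $R$ ensuring $N_\ell + R \le R$, or a modified norm in which $D_{\bs a}$ ranges only over $\mi_\ell^-$-type multi-indices (which would then need to be re-checked against \cref{lem:contraction_bound_generic} and the rest of the framework) --- and I encourage you to revisit this step rather than rely on an unverified cancellation.
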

\begin{proof}
	First, note that if $|\ul n| > N_\ell$, $\partial^{\ul n} \cR f_{\ul a} = \partial^{\ul n} f_{\ul a}$ and so
	\begin{equation}
		\partial^{\ul n} \cR \wt D_\Lambda f_{\ul a} (\bs x)
		= 
		\partial^{\ul n} \wt D_\Lambda f_{\ul a} (\bs x)
		=
		\Lambda^{-|\ul n|} \partial^{\ul n} f_{\ul a} (\Lambda^{-1} \bs x)
		.
		\label{eq:f_cR_easy_scaling}
	\end{equation}
	On the other hand if $\left|\ul n\right| \le N_\ell$, by the Taylor remainder formula
	\begin{equation}
		\partial^{\ul n} \cR f_{\ul a }(\bs x)
		=
		\sum_{\substack{\ul m \in \mi_\ell^- \\ |\ul m| + |\ul n | = N_\ell +1}}
		\frac{|\ul m|}{\ul m!}
		(\bs x - \bs x_\cL)^{\ul m}
		\int_0^1
		(1-t)^{|\ul m| - 1}
		\partial^{\ul m + \ul n}
		f_{\bs a} (\bs x_\cL + t[\bs x - \bs x_\cL])
		\dd t
	\end{equation}
	and so
	\begin{equation}
		\begin{split}
			\partial^{\ul n} \cR \wt D_\Lambda f_{\ul a }(\bs x)
			= &
			\Lambda^{-N_\ell-1}
			\sum_{\substack{\ul m \in \mi_\ell^- \\ |\ul m| + |\ul n | = N_\ell +1}}
			\frac{|\ul m|}{\ul m!}
			(\bs x - \bs x_\cL)^{\ul m}
			\\ & \times 
			\int_0^1
			(1-t)^{|\ul m| - 1}
			\partial^{\ul m + \ul n}
			f_{\bs a} (\Lambda^{-1} \bs x_\cL + t \Lambda^{-1}[\bs x - \bs x_\cL])
			\dd t
		\end{split}
	\end{equation}
	and thus
	\begin{equation}
		\begin{split}
			\left| 
			\partial^{\ul n} \cR \wt D_\Lambda f_{\ul a }(\bs x)
			\right|
			& \le
			\Lambda^{-N_\ell-1}
			\sum_{\substack{\ul m \in \mi_\ell^- \\ |\ul m| + |\ul n | = N_\ell +1}}
			\frac{1}{\ul m!}
			\left|
			(\bs x - \bs x_\cL)^{\ul m}
			\right|
			\sup_{\substack{\bs z \in (\bR^d)^\ell : \\ \Lambda \bs z_\cL =  \bs x_\cL, \\ \Lambda \cT(\bs z) \le  \cT(\bs x)}}
			\left|
			\partial^{\ul m + \ul n}
			f_{\bs a} (\bs z)
			\right|
			\\ & \vphantom{x}
			\\ & \le
			\Lambda^{-N_\ell-1}
			e^{(\ell-1)d}
			[\cT (\bs x)]^{N_\ell + 1 - |\ul n|}
			\sup_{\substack{\bs z \in (\bR^d)^\ell : \\ \Lambda \bs z_\cL =  \bs x_\cL, \\ \Lambda \cT(\bs z) \le  \cT(\bs x)}}
			D_{\bs a}[f](\bs z)
			.
		\end{split}
		\label{eq:cR_f_scaling_hard}
	\end{equation}
	For all $\Lambda \in [2,\infty)$ using the monotonicity and subadditivity of $\eta$
	\begin{equation}
		\eta(\cT(x))
		\ge
		\eta\left( \frac12 \cT(x) + \Lambda^{-1} \cT(x) \right)
		\ge
		\eta\left( \frac12 \cT(x)\right) + \eta\left(\Lambda^{-1} \cT(x) \right)
		.
	\end{equation}
	Consequently using \cref{eq:monomial_tree_bound} there is some $\C \ge 1$ (depending on $\ell,\ [\psi]$ and $d$) such that for all $\Lambda \in [2,\infty)$
	\begin{equation}
		e^{-\eta(\cT(\bs x))}
		\left| 
		\partial^{\ul n} \cR \wt D_\Lambda f_{\ul a }(\bs x)
		\right|
		\le
		\Clast
		\Lambda^{-N_\ell-1}
		\sup_{\substack{\bs z \in (\bR^d)^\ell : \\ \Lambda \bs z_\cL =  \bs x_\cL, \\ \Lambda \cT(\bs z) \le  \cT(\bs x)}}
		e^{- \eta(\cT (\bs z))}
		D_{\bs a}[f](\bs z)
	\end{equation}
	which also holds for larger $\ul n$ from \cref{eq:f_cR_easy_scaling}; then inserting this into \cref{eq:bf_norm} gives
	\begin{equation}
		\begin{aligned}
			\left| \cR \wt D_\Lambda f \right|
			&
			\le
			\Clast 
			\Lambda^{-N_\ell-1}
			\int \dd^d y
			\max_{\bs a \in \intrange1{N}^\ell}
			\sup_{\substack{\bs x \in (\bR^d)^\ell: \\ x_\ell = y}}
			\sup_{\substack{\bs z \in (\bR^d)^\ell : \\ \Lambda z_\ell =  x_\ell, \\ \Lambda \cT(\bs z) \le  \cT(\bs x)}}
			e^{- \eta(\cT (\bs z))}
			D_{\bs a}[f](\bs z)
			\\ & 
			=
			\Clast 
			\Lambda^{-N_\ell-1}
			\int \dd^d y
			\max_{\bs a \in \intrange1{N}^\ell}
			\sup_{\substack{\bs z \in (\bR^d)^\ell : \\ \Lambda z_\ell =  y}}
			e^{- \eta(\cT (\bs z))}
			D_{\bs a}[f](\bs z)
			\\ & 
			=
			\Clast
			\Lambda^{d-N_\ell-1}
			|f|
			,
		\end{aligned}
	\end{equation}
	again for all $\Lambda \in [2,\infty)$.
	Also \cref{eq:fR_cV_norm,eq:f_scaling} give
	\begin{equation}
		\left|\cR \wt D_\Lambda f\right|
		\le 
		(1+\Cr{f_cL})
		\left| \wt D_\Lambda f\right|
		\le
		(1+\Cr{f_cL} )
		\Lambda^d
		|f|
		,
		\label{eq:trivial_interpolation_bound}
	\end{equation}
	which trivially gives a similar bound for $\Lambda \in [1,2]$.
\end{proof}

Note that it is evident from \cref{eq:fcL_def,eq:f_cL_m_def} that ${\cL \wt D_\Lambda f \equiv \wt D_\Lambda \cL f}$, and so likewise for $\cR f$.
\begin{corollary}
	For the same $\Cr{fI_few_scaling}$, 
	\begin{equation}
		\| \fD_\Lambda \fI_\inter \cV \|_\ell 
		=
		\| \fI_\inter \fD_\Lambda  \cV \|_\ell 
		\le 
		\Cr{fI_few_scaling}
		\Lambda^{d - \ell [\psi] - N_\ell -1 }
		\| V_\ell \|
		\label{eq:fI_few_scaling}
	\end{equation}
	for all $\Lambda \in [1,\infty)$.
	\label{lem:fI_few_scaling}
\end{corollary}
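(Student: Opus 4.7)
The proof is essentially a packaging of \cref{lem:f_cR_scaling} together with the definitions of the operators involved, so I would expect it to be short.

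First, I would verify the equality $\fD_\Lambda \fI_\inter = \fI_\inter \fD_\Lambda$. The remark immediately preceding the statement notes that $\cL \wt D_\Lambda f \equiv \wt D_\Lambda \cL f$, hence $\cR \wt D_\Lambda f = \wt D_\Lambda \cR f$ as well. Combining this with the defining formula \eqref{eq:fI_few_alt} for $\fI_\inter$ and the definition \eqref{eq:DLambda_kernel_def} of $D_\Lambda$, one obtains for $\ell[\psi]\le d$ (the only case where $\fI_\inter$ acts nontrivially, and also the only case where $N_\ell\ge 0$ so that the bound is nontrivial)
\begin{equation}
    (\fD_\Lambda \fI_\inter V_\ell)[f]
    = \Lambda^{-\ell[\psi]} V_\ell[\cR \wt D_\Lambda f]
    = (\fI_\inter \fD_\Lambda V_\ell)[f].
\end{equation}

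Next, I would directly bound this quantity. Starting from the expression above and using the definition \eqref{eq:teststar_norm} of $\|\cdot\|$ on $\slow_\ell^*$,
\begin{equation}
    \bigl|(\fI_\inter \fD_\Lambda V_\ell)[f]\bigr|
    \le \Lambda^{-\ell[\psi]} \, \|V_\ell\| \, \bigl|\cR \wt D_\Lambda f\bigr|.
\end{equation}
Applying \cref{lem:f_cR_scaling} to the factor $|\cR \wt D_\Lambda f|$ yields, for $\Lambda\ge 1$,
\begin{equation}
    \bigl|(\fI_\inter \fD_\Lambda V_\ell)[f]\bigr|
    \le \Cr{fI_few_scaling}\, \Lambda^{d-\ell[\psi]-N_\ell -1}\, \|V_\ell\|\, |f|.
\end{equation}
Dividing by $|f|$ and taking the supremum over $f\in \slow_\ell$ gives the desired bound. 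In the remaining case $\ell [\psi] > d$, one has $\fI_\inter V_\ell = 0$ by definition, so the inequality holds trivially (with any constant).

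There is no real obstacle here: \cref{lem:f_cR_scaling} has done all of the work. The only thing to be slightly careful about is the commutation identity, which depends on the observation that $\cL$ (and hence $\cR$) acts ``diagonally'' with respect to the dilation, something already remarked upon just before the corollary.
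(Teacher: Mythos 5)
Your proof is correct and follows exactly the route the paper intends: the paper states the commutation $\cL\wt D_\Lambda f\equiv\wt D_\Lambda\cL f$ (hence $\cR\wt D_\Lambda f=\wt D_\Lambda\cR f$) just before the corollary and leaves the rest implicit, which is precisely the combination of $\fD_\Lambda V_\ell[f]=\Lambda^{-\ell[\psi]}V_\ell[\wt D_\Lambda f]$, the duality bound $|V_\ell[g]|\le\|V_\ell\|\,|g|$, and \cref{lem:f_cR_scaling} that you carry out. The only cosmetic point worth noting is that for $\ell[\psi]>d$ one has $N_\ell<0$, so your remark that $N_\ell\ge 0$ in the nontrivial case is the right way to see that the $\fI_\inter=0$ branch is the one disposing of that regime, not the sign of the exponent.
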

Recalling $N_\ell := \floor{d - \ell [\psi]}$, the power of $\Lambda$ appearing here is always negative.

\begin{corollary}
	There is a $\Cl{fI_few_integrate} < \infty$ such that
	\begin{equation}
		\left\| \int_1^\infty \fD_\Lambda \fI_\inter \cV \frac{\dd\Lambda}{\Lambda}\right\|_\ell
		\le
		\Clast \|V_\ell\|
		.
		\label{eq:fI_few_integral_bound}
	\end{equation}
	$\Clast$ can be taken bounded when 
	\begin{equation}
		\Dmin^\inter :=
		\min_{\ell=2,4,\dots,M}
		(\floor{d - \ell [\psi]} + 1 + \ell [\psi] - d)
		\label{eq:Dmin_inter_def}
	\end{equation}
	is bounded away from zero.
	\label{lem:fI_few_integral_bound}
\end{corollary}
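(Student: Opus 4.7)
The plan is to combine the pointwise (in $\Lambda$) bound from \cref{lem:fI_few_scaling} with the triangle inequality for Banach-space valued integrals and then evaluate an elementary $\Lambda$-integral.

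First I would check that the integrand is norm-integrable componentwise. For each even $\ell \ge 2$, \cref{eq:fI_few_scaling} gives
\begin{equation}
\| \fD_\Lambda \fI_\inter \cV \|_\ell \le \Cr{fI_few_scaling}\, \Lambda^{\alpha_\ell}\, \| V_\ell\|, \qquad \alpha_\ell := d - \ell [\psi] - N_\ell - 1,
\end{equation}
and because $N_\ell = \floor{d - \ell[\psi]}$, the fractional part $\{d-\ell[\psi]\}\in [0,1)$ gives $\alpha_\ell \in [-1,0)$. For $\ell$ with $\ell[\psi] > d$ the left-hand side vanishes by definition of $\fI_\inter$, so only finitely many $\ell$ contribute. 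In particular the scalar function $\Lambda \mapsto \| \fD_\Lambda \fI_\inter \cV \|_\ell$ is integrable against $\dd\Lambda/\Lambda$ on $[1,\infty)$.

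Next I would invoke that $\slow_\ell^*$ is a Banach space, so the integral on the left hand side is a well-defined Bochner integral in each component and the triangle inequality applies. This reduces the problem to a one-variable estimate:
\begin{equation}
\left\| \int_1^\infty \fD_\Lambda \fI_\inter \cV \frac{\dd\Lambda}{\Lambda}\right\|_\ell
\le
\Cr{fI_few_scaling}\, \|V_\ell\| \int_1^\infty \Lambda^{\alpha_\ell - 1} \dd\Lambda
=
\frac{\Cr{fI_few_scaling}}{|\alpha_\ell|}\, \|V_\ell\|.
\end{equation}
Taking $\Cr{fI_few_integrate} := \Cr{fI_few_scaling}/\min_{\ell}|\alpha_\ell|$, where the minimum ranges over the finitely many even $\ell$ with $\ell[\psi]\le d$, yields the claimed bound. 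If one prefers to avoid Bochner integrals entirely, the same bound follows by testing against $f \in \slow_\ell$ and applying Fubini to the scalar-valued integrand $\Lambda \mapsto (\fD_\Lambda \fI_\inter V)_\ell[f]$.

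For the uniformity statement, I would simply unpack $|\alpha_\ell| = \floor{d - \ell[\psi]} + 1 + \ell[\psi] - d$, which is exactly the $\ell$-th term in the minimum defining $\Dmin^\inter$ in \cref{eq:Dmin_inter_def}. Hence $\min_\ell |\alpha_\ell| = \Dmin^\inter$ and the constant remains bounded as long as $\Dmin^\inter$ stays away from zero. There is no real obstacle here: the only point to be alert to is that $|\alpha_\ell|$ can be arbitrarily small if $d-\ell[\psi]$ approaches an integer from below, which is precisely what the $\Dmin^\inter$ hypothesis rules out.
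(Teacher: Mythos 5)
Your proof is correct and supplies exactly the elementary argument that the paper leaves implicit (the paper states this corollary without proof, as an immediate consequence of \cref{lem:fI_few_scaling}). You correctly observe that $\alpha_\ell := d - \ell[\psi] - N_\ell - 1 = \{d - \ell[\psi]\} - 1 < 0$, that only the finitely many even $\ell$ with $\ell[\psi] \le d$ contribute because $\fI_\inter$ annihilates the rest, that the resulting $\Lambda$-integral is $1/|\alpha_\ell|$, and that $\min_\ell |\alpha_\ell| = \Dmin^\inter$, so the final constant $\Cr{fI_few_scaling}/\Dmin^\inter$ has precisely the claimed uniformity. Nothing more needs to be said.
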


\section{The ansatz for the irrelevant part of the fixed point}
\label{sec:fp_construction}

In this section, I construct $\pot$-valued functions $\cI_\inter, \cI_\many$ which give the irrelevant part of a fixed point action when the associated relevant part is chosen appropriately.
Formulating a suitable definition, and checking that it has a number of important properties (which involve introducing a norm on \pot, or rather on a subspace of \pot) will be the main technical element remaining for the proof of \cref{thm:main_FP}, which I conclude in the next section.

Abbreviating
\begin{equation}
	\cF_\Lambda(\cV) 
	:=
	\frac12
	\cQ^{\fW,\Lambda} (\fD_\Lambda \cV, \fD_\Lambda \cV)
	\label{eq:cF_def}
\end{equation}
and $\cF = \cF_1$, so that $\cF_\Lambda = \Lambda^{-1} \fD_\Lambda \circ \cF $ via \cref{eq:Q_rescaling}.
\begin{theorem}
	There exist linear maps 
	$\cI_\inter \colon \sB \to \fI_\inter(\pot) $,
	$\cI_\many \colon \sB \to \fI_\many(\pot) $,
	defined on some
	$\sB \subset \pot$,
	satisfying
	\begin{align}
		\frac{\dd}{\dd \Lambda}  \fD_\Lambda \cI_\inter(\cW)
		& =
		\fI_\inter \cF_\Lambda (\cW + \cI_\inter(\cW) + \cI_\many(\cW))
		,
		\label{eq:cI_inter_stationary}
		\\[0.5ex]
		\frac{\dd}{\dd \Lambda}  \fD_\Lambda \cI_\many(\cW)
		& =
		\fI_\many \cF_\Lambda (\cW + \cI_\inter(\cW) + \cI_\many(\cW))
		\label{eq:cI_many_stationary}
		,
	\end{align}
	and a function
	$\rho \colon \pot \to [0,\infty]$, such that (letting $\cI = \cI_\inter + \cI_\many$)
	\begin{enumerate}
		\item \label{it:rho_prenorm}
			The set of $\cV \in \pot$ with $\rho[\cV] < \infty$ is a Banach space with norm $\rho$; in particular $\rho[\cV] = 0$ iff $\cV=0$,
		\item \label{it:rho_local_finite}
			$\rho [\fL \cV]$ is always finite,
		\item \label{it:B_from_rho}
			There is some $B > 0$ such that
			$
			\sB \supset \left\{ \cV \in \pot : \, \rho[\cV] < B \right\}
			$
			and $\cI(\cW)$ is in the domain of $\fW^{-1}$ for all $\cW$ such that $\rho[\cW] < B$.
		\item \label{it:remainder_bounds}
			For some constants $\Cl{F_cubic},\Cl{rho_Lip}$
		\begin{align}
			\rho[ \fL \cQ^\fW (\cW ,  \cI(\cW)  )]
			, \ 
			\rho[ \fL \cQ^\fW (\cI(\cW)   ,  \cI(\cW) )]
			&
			\le
			\Cr{F_cubic}
			\rho[\cW]^3
			\label{eq:F_cubic}
			,
			\\ 
			\rho[\fL \cQ^\fW (\cI(\cV), \cI(\cV)) -\fL \cQ^\fW (\cI(\cW), \cI(\cW))]
			&
			\le
			\Cr{rho_Lip} (\rho[\cV]+\rho[\cW])^2 \rho[\cV - \cW]
			\label{eq:rho_Lip}
		\end{align}
		for all $\cV,\cW \in \pot$.
	\end{enumerate}
	\label{thm:FP_irrel}
\end{theorem}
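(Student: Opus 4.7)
The plan is to rewrite the two differential equations \cref{eq:cI_inter_stationary,eq:cI_many_stationary} in integral form and solve them simultaneously by Picard iteration, generating a tree expansion along the lines outlined in \cref{sec:intro}. Since $\fI_\inter$ and $\fI_\many$ commute with $\fD_\Lambda$, and since the maps $\fD_\Lambda \fI_\inter \cV$ and $\fD_\Lambda \fI_\many \cV$ decay at definite negative rates as $\Lambda \to \infty$ by \cref{lem:fI_few_scaling} and \cref{eq:V_j_scaling} respectively, I impose the asymptotic condition $\lim_{\Lambda\to\infty}\fD_\Lambda \cI_\chi(\cW) = 0$ for $\chi \in \{\inter,\many\}$ and integrate downward from $\Lambda = \infty$ to obtain
\begin{equation*}
    \cI_\chi(\cW) \;=\; -\int_1^\infty \fI_\chi\, \cF_\Sigma\bigl(\cW + \cI_\inter(\cW) + \cI_\many(\cW)\bigr)\,\dd\Sigma.
\end{equation*}
Iterating this coupled system starting from the zero map generates a sum over Gallavotti--Nicol\`o trees $\tau$ with scale labels at the vertices, produced by alternating applications of $\cQ^\fW$ (at each branching) with $\fI_\inter$ or $\fI_\many$ (at each internal scale), with leaves carrying copies of $\cW$. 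In order to preserve a clean Pfaffian structure in the Wick-type contractions appearing in \cref{eq:QW_integral}, I expect to need a nontrivial regrouping of the resulting terms in the spirit of \cite{DR}; this is the piece I intend to carry out in detail and to combine with a single Battle--Brydges--Federbush expansion into one tree expansion.

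For the norm $\rho$ I plan to use a weighted sum of the seminorms $\|V_\ell\|$ of \cref{sec:potentials} over even $\ell \ge 2$, with a geometric weight $r^\ell$ for some $r > 0$ to be chosen; properties \ref{it:rho_prenorm}--\ref{it:rho_local_finite} are then immediate, the second because $\fL(\pot)$ is finite dimensional so all of its elements have bounded norm. The heart of the estimate is the bound on a single tree amplitude: I expect to combine the bilinear estimate \cref{eq:QW_summable} at each branching with \cref{lem:fI_few_integral_bound} to integrate the $\fI_\inter$-contribution scale by scale (notably avoiding the logarithmic divergence that previously required special treatment), \cref{eq:V_j_scaling} to integrate the $\fI_\many$-contribution, and the Gram--Hadamard bound \cref{eq:good_Pfaff} to control the Pfaffian factors produced by the regrouping. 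This should give a tree bound of geometric type in $r$ times a product of $\rho[\cW]$ over the leaves; summing over trees yields absolute convergence on $\{\cW : \rho[\cW] < B\}$ for $B$ small enough, from which property \ref{it:B_from_rho} and membership in the domain of $\fW^{-1}$ (via \cref{eq:fW_convergence}) follow.

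The main obstacle I anticipate is controlling the number of trees of a given topological type: a naive Wick-style expansion at each branching produces factorially many terms, and this is precisely the obstruction to a continuum limit encountered in \cite{GMR} and overcome in \cite{DR} by a forest-formula reorganization. Once the tree count is under control, the cubic estimate \cref{eq:F_cubic} follows from the observation that $\cI(\cW)$ starts at order two in $\cW$ in the tree sum, so that $\cQ^\fW(\cW,\cI(\cW))$ and $\cQ^\fW(\cI(\cW),\cI(\cW))$ start at orders three and four respectively, and their $\fL$-parts are then bounded by a constant times $\rho[\cW]^3$; the Lipschitz estimate \cref{eq:rho_Lip} follows by a telescoping argument in which one leaf of each tree is replaced by $\cV - \cW$ while the remaining leaves are bounded by $\rho[\cV]$ or $\rho[\cW]$, yielding the factor $(\rho[\cV] + \rho[\cW])^2 \rho[\cV - \cW]$ uniformly in $\tau$ after summation.
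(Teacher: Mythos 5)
Your overall strategy is recognizably the one the paper uses, but there are two architectural differences worth flagging, and one genuine gap.

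First, you propose to solve the coupled system for $\cI_\inter$ and $\cI_\many$ \emph{simultaneously} by Picard iteration, so that both $\fI_\inter$ and $\fI_\many$ appear at internal branchings of the resulting tree. This is closer to \cite{DR} than to the paper, which deliberately \emph{removes} $\cI_\inter$ from the tree expansion: the tree function $\cS$ in \cref{eq:S_tau_Lambda_iter_def} applies only $\fI_\many$ at branchings, the irrelevant-but-low-degree part is then constructed \emph{outside} the tree by a Banach fixed point (\cref{thm:interpolated_stationary}), and $\cI_\many := \cS(\cW + \cI_\inter(\cW))$. This is not merely a presentational choice. If $\fI_\inter$ sits inside the tree, the proof of the scale-integral bound (\cref{lem:tree_integral_bound}) would have to track an interpolation gain $\Lambda^{-(N_\ell+1+\ell[\psi]-d)}$ at each $\fI_\inter$-branching, in addition to the dimensional gain encoded by $\TL(e)$ from $\fI_\many$; the paper avoids exactly this extra bookkeeping (which is what required the ``careful accounting of the length of tree lines'' in \cite{DR}) by absorbing the $\fI_\inter$ gain once and for all in \cref{lem:fI_few_integral_bound} and handling it via the fixed point argument. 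Your route can likely be made to work, but you would need to redo the inductive estimate of \cref{lem:tree_integral_bound} with a mixed $\TL/\fI_\inter$ accounting, and this is not as simple as combining \cref{lem:fI_few_integral_bound} with \cref{eq:V_j_scaling} at each node.

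Second, and more seriously, you explicitly defer the resummation of the loop-line assignment — ``I expect to need a nontrivial regrouping \ldots\ this is the piece I intend to carry out in detail.'' That regrouping (the $\omega$-map of \cref{sec:regroup}, its characterization in \cref{rem:omega_characterization}, the Gram form \cref{eq:P_tree_Gram_advanced}, and the counting bound \cref{lem:omega_count}) is precisely the technical heart of the theorem, without which the $n!$-level overcounting of $K$-assignments is uncontrolled and the sum over trees does not converge. Sketching that a regrouping is needed does not constitute a proof of this statement; as it stands this is a gap.

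Two smaller remarks. Your $\rho$ with a geometric weight $r^\ell$ is not obviously wrong (since the tree inputs lie in $\pot_{\le M}$ and $\fL$ truncates outputs to $\ell\le M$, the $\ell_v!$ factors from \cref{eq:C_sum_integral_bound_basic} are bounded by $M!$), but the paper's choice $\rho[\cV] = \Cr{tree_sum_pref}\sum_k \Cr{tree_sum_power}^k\, k!\,\|V_k\|$ in \cref{eq:convergence_requisite} is tuned so that the per-vertex factor emerging from \cref{eq:C_sum_integral_bound_basic} is \emph{exactly} $\rho[\cV]^n$, which makes the geometric summation over $n$ transparent; with your weight you would have to carry extra $M$-dependent constants. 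Also, you describe the iterates as ``Gallavotti--Nicol\`o trees with scale labels at the vertices''; the expansion the iteration actually produces is a single Battle--Brydges--Federbush-type tree with continuous scale labels $\ul\Lambda$ on the \emph{edges} and the ordering imposed by $\chiduo$, which is what makes the combinatorics tractable (this distinction is one of the paper's main points in comparing with \cite[Appendix~J]{GMR}).
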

The proof is fairly complex, and so I first present some preliminaries in \cref{sec:tree_def,sec:regroup,sec:tree_bounds} (in particular, $\rho$ is defined in \cref{eq:convergence_requisite}) before stating the definition of $\cI$ and completing the proof of \cref{thm:FP_irrel} in \cref{sec:concluding_ansatz}.

\subsection{The tree-expansion function $\cS$}
\label{sec:tree_def}

For a finite ordered set $V$, let $\fT_V$ be the set of trees on $V$; I denote $V(\tau) = V$ and $n(\tau) = \# V$ for $\tau \in \fT_V$, where $\#$ denotes the cardinality.
I call the edge set $\edge(\tau)$ and will abbreviate this as $\edge$ where it does not cause confusion.

I define a quantity by iterating on trees as follows.
When $\# V = 1$ (i.e.\ for a tree consisting of only a single vertex), for $\Lambda \in (0,\infty)$ and $\cV \in \pot_{\le M}:= \bigcup_{m=2}^{M} \pot_m$ let
\begin{equation}
	\cS[\tau,\Lambda,\cV]
	:=
	\fD_\Lambda 
	\cV
	\label{eq:S_tau_Lambda_trivial_case}
\end{equation}
and for other trees, for $\ul \Lambda = (\Lambda_{e_1},\dots) \in (0,\infty)^{\edge(\tau)}$ with no repeated elements, iteratively let 
\begin{equation}
	\cS[\tau, \ul \Lambda,\cV]
	:=
	\frac{n(\tau_1)! n(\tau_2)!}{ n(\tau)!}
	\fI_\many
	\cQ^{\fW,\Lambda_0}(\cS[\tau_1, \ul \Lambda_1,\cV], \cS[\tau_2,\ul \Lambda_2,\cV])
	\label{eq:S_tau_Lambda_iter_def}
\end{equation}
where $\Lambda_0$ is the minimum of $\Lambda_e$ over $e \in \edge(\tau)$, $\tau_1,\tau_2$ are the two subtrees of $\tau$ obtained by removing the corresponding edge, and $\ul \Lambda_1,\ul \Lambda_2$ are either the restriction of $\ul \Lambda$ if the associated subtree has more than one vertex or $\Lambda_0$ if it has only one.  Note that for any $\tau$ this gives an element of \potfin\ so all of the Wick orderings are well defined.

\medskip

Now for $\Lambda \in (0,\infty) $ let
\begin{equation}
	\cS^{(\Lambda)}[\tau,\cV]
	:=
	\int_{\ul \Lambda \in (\Lambda,\infty)^{\edge(\tau)}}
	\cS[\tau, \ul \Lambda,\cV]
	\dd \ul \Lambda
	;
	\label{eq:C_tau_def}
\end{equation}
note that $\fD_\lambda \cS^{(\Lambda)}[\tau,\cV] = \cS^{(\lambda\Lambda)}[\tau, \cV]$.
If all of the sums and integrals in \cref{eq:C_ell_tau_explicit,eq:C_tau_def} are absolutely convergent, this defines $\cS^{(\Lambda)}[\tau,\cV] \in \pot$ and for $n(\tau) \ge 2$
\begin{equation}
	\begin{split}
		\frac{\dd}{\dd \Lambda}
		\cS^{(\Lambda)}[\tau,\cV]
		&=
		\sum_{e \in \edge}
		\int_{\substack{\ul \Lambda \in [\Lambda,\infty)^\edge \\ \Lambda_e = \Lambda}}
		\cS [\tau, \ul \Lambda, \cV]
		\dd \ul \Lambda
		\\ &
		=
		\frac12
		\sum_{\tau_1}^*
		\binom{ n(\tau) }{ n(\tau_1) }^{-1}
		\fI_\many
		\cQ^{\fW,\Lambda} (
			\cS^{(\Lambda)}[\tau_1,\cV]
			,
			\cS^{(\Lambda)}[\tau_2,\cV]
		)
	\end{split}
	\label{eq:Cc_derivative}
\end{equation}
using \cref{eq:S_tau_Lambda_iter_def},
where the sum is over $\tau_1$ which are subtrees of $\tau$ such that there is a unique pair $v_1\in V(\tau_1), \ v_2 \in V(\tau) \setminus V(\tau_1)$ such that $\{v_1,v_2\}$ is an edge of $\tau$, and $\tau_2$ is the subtree made up of the remaining part of $\tau$; each suitable pair of subtrees appears twice in the sum, which is compensated by the factor of $1/2$.
Noting that
\begin{equation}
	\sum_{\tau \in \fT_V}
	\sum_{\tau_1}^*
	f(\tau,\tau_1)
	=
	\sum_{\substack{V_1 \subset V : \\ V_1, \ V \setminus V_1 \neq \emptyset}}
	\sum_{\tau_1 \in \fT_{V_1}}
	\sum_{\tau_2 \in \fT_{V \setminus V_1}}
	\sum_{v_1 \in V_1}
	\sum_{v_2 \in V \setminus V_1}
	f(\tau_1 \cup \tau_2 \cup \{\{ v_1,v_2 \}\}, \tau_1)
\end{equation}
for all functions $f$, and bearing in mind the independence of labelling of the vertices we have
\begin{equation}
	\sum_{\substack{\tau \in \fT \\ n(\tau) \ge 2}}
	\frac{\dd}{\dd \Lambda}
	\cS^{(\Lambda)}[\tau,\cV]
	=
	\frac12
	\sum_{\tau_1 \in \fT}
	\sum_{\tau_2 \in \fT}
	\fI_\many
	\cQ^{\fW,\Lambda} (
		\cS^{(\Lambda)}[\tau_1,\cV]
		,
		\cS^{(\Lambda)}[\tau_2,\cV]
	)
	\label{eq:dC_dLam_treesum}
\end{equation}
where $\fT_n$ denotes the set of trees with vertex set $\intrange1n = \left\{ 1,\dots,n \right\}$ and $\fT= \bigcup_{n=1}^\infty \fT_n$, again assuming that all the sums involved are absolutely convergent.
In the next section I give a series of estimates which make it possible to identify a set of $\cV$ on which this absolute convergence holds, and as a result letting
\begin{equation}
	\cS^{(\Lambda)}(\cV) 
	:=
	\sum_{\substack{\tau \in \fT \\ n(\tau) \ge 2}}
	\cS^{(\Lambda)}[\tau,\cV]
	\label{eq:cS_def}
\end{equation}
and $\cS(\cV) := \cS ^{(1)}(\cV)$,
this will satisfy 
\begin{equation}
	\fD_\Lambda \cS(\cV)
	=
	\sum_{\substack{\tau \in \fT \\ n(\tau) \ge 2}}
	\cS^{(\Lambda)}[\tau,\cV]
\end{equation}
and so also (recalling the definition of $\cF_\Lambda$ in \cref{eq:cF_def},
\begin{equation}
	\frac{\dd}{\dd \Lambda}
	\fD_\Lambda \cS(\cV)
	=
	\fI_\many
	\cF_\Lambda (\cV + \cS(\cV))
	\label{eq:cS_stationary}
\end{equation}
which, after introducing some other elements, I will use to construct $\cI_\many$ satisfying \cref{eq:cI_many_stationary}. 

\medskip

The first step in proving the aforementioned estimates will be to write out $\cS$ more explicitly as follows, starting with $\cS[\tau, \ul \Lambda, \cV]$.
To read and interpret what follows it may be helpful to associate the expression with a sum over Feynman diagrams, whose vertices are the vertices of $\tau$, although I prefer to validate the expression algebraically, so I will not specify all the details of such a representation.
Contractions are associated with factors of either $G_{\Lambda_e}$ (one for each edge $e$ of $\tau$, so these are the ``tree lines'' of \cite{DR}) or $P_{\Lambda(e)}$ (arising from the Pfaffians in \cref{eq:QW_integral}, the ``loop lines'' of \cite{DR}).
Given $\ul \Lambda$, each edge $e$ of $\tau$ is associated with the set of vertices, which I call a \emph{cluster} below, which are either incident to $e$ or connected to it by other edges $f$ with $\Lambda_f > \Lambda_e$ (see \cref{fig:cluster}); 
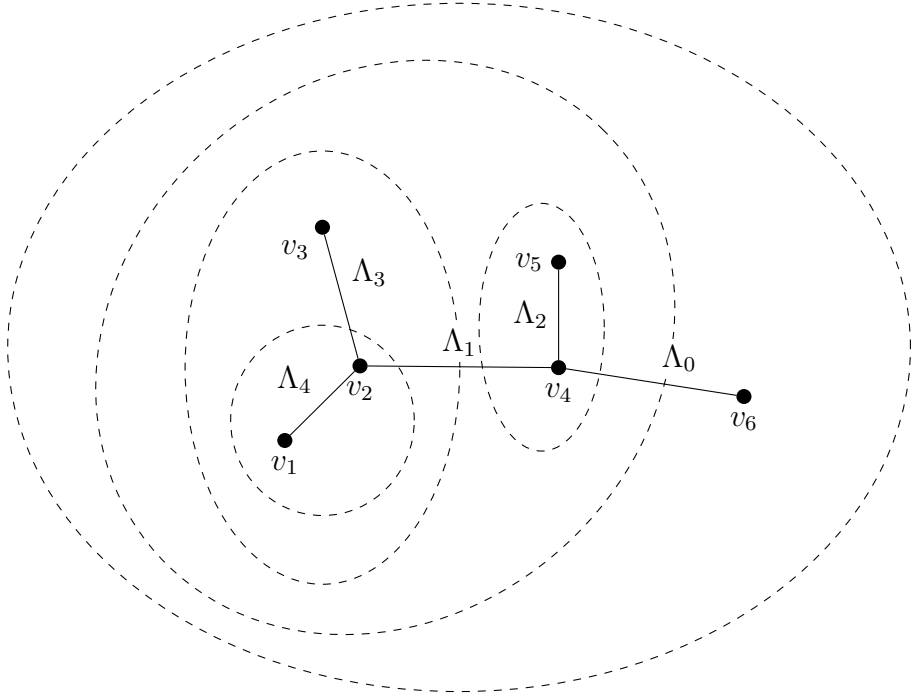
\begin{figure}[t]
	\centering
		\begin{tikzpicture}[node distance = 12mm,
				every node/.style={circle, inner sep = 2pt, outer sep = 0pt, fill=black},
				every label/.style={rectangle,fill=white, outer sep=1pt},
				every fit/.style={ellipse,draw,dashed, fill=none,on background layer,outer sep=0},
				edgeLabel/.style={rectangle,auto,fill=white,inner sep=2pt,outer sep=2pt}
			]
			\node (v1) [label={[name=label1] below:$v_1$}] {} ;
			\node (v2) [label={[name=label2] below:$v_2$},above right=of v1] {};
			\draw (v1) edge node[edgeLabel] {$\Lambda_4$} (v2);
			\node (c5) [fit = (v1) (label1) (v2) (label2)] {};
			\node (v3) [above=of c5,label={[name=label3] below left:$v_3$}] {};
			\draw (v3) edge node[edgeLabel] {$\Lambda_3$} (v2);
			\node (c4) [fit = (v3) (c5)] {};
			\node (v4) [right=of c4,label={[name=label4] below:$v_4$}] {};
			\draw (v2) edge node[edgeLabel] {$\Lambda_1$} (v4);
			\node (v5) [above=of v4,label={[name=label5] left:$v_5$}] {};
			\draw (v4) edge node[edgeLabel,name=L3] {$\Lambda_2$} (v5);
			\node (c3) [fit = (v4) (label4) (v5) (label5) (L3)] {};
			\node[rotate fit=45] (c2) [fit = (c4) (c3)] {};
			\node (v6) [below right=of c2,label={[name=label6] below:$v_6$}] {};
			\draw (v4) edge node[edgeLabel,name=L1] {$\Lambda_0$} (v6);
			\node[inner sep=10pt] (c1) [fit = (c2) (v6)] {};
		\end{tikzpicture}
	\caption{Example of a tree with $6$ vertices, a choice of $\ul \Lambda$, and the associated nontrivial clusters indicated by dashed curves.  The values of $\Lambda_e$ are written next to the associated edges, labelled such that $\Lambda_0 < \Lambda_1 <\dots < \Lambda_4$.}
	\label{fig:cluster}
\end{figure}
along with the trivial clusters consisting of a single vertex, these are the vertex sets of the subtrees appearing in \cref{eq:S_tau_Lambda_iter_def} when it is applied iteratively.  $\Lambda_e$ also assigns a scale to the cluster associated with $e$, and I will assign to trivial clusters the lowest scale of any incident edge.
Expanding the integral in each factor of $P$, each contraction is associated with a scale label in a way that respects the following: 
\begin{itemize}
	\item For each edge $e= \{v_1,v_2\}$ of $\tau$, one leg of $v_1$ and one leg of $v_2$ are contracted at scale $\Lambda_e$; I will sometimes call these \emph{tree legs}.
	\item The scale for any other contraction is less than $\Lambda_e$ for any $e$ associated with a cluster containing only one of the two legs involved; I will call these \emph{loop legs}.
	\item For each $e$ except for the one with the lowest scale, the number of legs in the associated cluster which are not contracted at scales $\ge \Lambda_e$ is more than $M$ (due the action of $\fI_\many$); I will say that these legs are {\emph external} to the cluster in question.  Note, however, that this does not mean that these legs are not contracted with another leg in the same cluster, only that the scale at which they are contracted is associated with another cluster.
\end{itemize}

In regrouping the diagrams, I will associate each leg $l$ with an edge $K(l)$, which in this context labels the largest cluster which contains $l$, such that $l$ is contracted at a scale $\Lambda_l \le \Lambda_e$.  Note that $K(l) = K(l')$ for all the pairs $l,l'$ which are contracted in a given diagram.
For completeness, for legs which are not contracted (external legs of the diagram) I let $K(l)$ be the lowest-scale edge, whose cluster is the entire vertex set. 
\medskip

I now give a more precise formulation implementing this idea.
For $\ul \ell: V(\tau) \to \{2,4,\dots,M\}$ (specifying the number of legs of each vertex), let
\begin{equation}
	\leg(\tau,\ul \ell):= \left\{ (v_1,1), (v_1,2),\dots,(v_1,\ell_1) , (v_2,1),\dots, (v_{n(\tau)},\ell_{n(\tau)}) \right\} 
\end{equation}
To begin with, explicitly iterating \cref{eq:S_tau_Lambda_iter_def} there is an action of $\fI_\many \cQ^{\fW,\Lambda}$ associated with each edge $e$, and writing these out using \cref{eq:QW_integral} and introducing indices $E(l) \in \edge(\tau)$ to indicate in which of these a particular argument appears in a factor of $G$ or $P$, 
\begin{equation}
	\begin{split}
		S_\ell[\tau,\ul \Lambda,\cV][f]
		&=
		\frac{1}{\ell!}
		\frac{1}{n(\tau)!}
		\sum_{\el : \intrange1\ell \rightarrowtail \leg}
		\sum_{\ul \ell : V(\tau) \to \{2,4,\dots,M\}}
		\sum_{T : \edge \to \leg^2}^*
		\sigma(\el,T)
		\\ & \qquad \times
		\sum_{E : \leg \to \edge  }
		\int
		f(y_{\el(1)},\dots,y_{\el(\ell)})
		\prod_{e \in \edge}
		G_{\Lambda_e} (y_{T_1(e)},y_{T_2(e)})
		\\ & \qquad \times
		\chitree (E,T)
		\chimono (\ul \Lambda, E)
		\chirel(\ul \Lambda, E)
		\Pf \check\fP_{\LL,E} (\bs y)
		\\ & \qquad \times
		\prod_{v \in V} \fD_{\Lambda(v)} V_{\ell(v)} (\bs y_v)
		\dd^{|\ul \ell| d} \bs y
		\label{eq:C_ell_tau_explicit_prelim}
	\end{split}
\end{equation}
for $\ell \in 2 \bN +2$, $f \in \slow_\ell$, $\tau \in \fT_V$,
where I have abbreviated $\leg = \leg (\tau, \ul \ell)$, $\edge = \edge(\tau)$ is the edge set of $\tau$,
\begin{itemize}
	\item The sum over $T$ is restricted to those maps with the property that $T_1(e) = (v_1,i)$, $T_2(e) = (v_2,j)$ with $e = \{v_1,v_2\}$, $v_1 < v_2$ (note that  $T$ indicates the pair of legs which produces the factor of $G_{\Lambda_e}$ associated with $e$) 
	\item $\sigma(\el,T)$ is the sign of the permutation taking the elements of $\range(\el)$ and $\bigcup_{e \in \edge} \left\{ T_1 (e),T_2(e)\right\}$ from their natural order into the order in which they appear in the product on the right hand side of \cref{eq:C_ell_tau_explicit}
	\item \LL\ is the set of $l \in \leg$ which do not appear in \el\ or $T$, and $\check \fP_{\LL,E} (\bs y)$ is the antisymmetric matrix with elements
		\begin{equation}
			\check \fP_{ij}(\bs y)
			=
			\begin{cases}
				P_{\Lambda(e)} (y_i,y_j)
				, &
				E(i) = E(j) = e, \ e \text{ separates } v(i),v(j)
				\\
				0,
				\textup{otherwise}
			\end{cases}
			\label{eq:fP_check_big}
		\end{equation}
		indexed by $i,j \in \LL$, where by ``separates'' I mean that $e$ is part of the unique path from $v(i)$ to $v(j)$ passes through $e$.  Up to reordering the indices, this matrix is block diagonal, with each block of the form $\check \fP_n$ in \cref{eq:fP_tilde_pair}.
	\item $\chitree(E,T)$ is the indicator function that $E(l) = e$ if $T_j(e) = l$ for some $j=1,2$, and if $l \in \range(\el)$ then  $E(l)=e^*$, where $e^* = e^*(\tau, \ul \Lambda)$ is the edge that minimizes $\Lambda_e$. 
	\item $\chimono(\ul \Lambda,E)$ is the indicator function that, for all $(v,k) \in \leg$, $\Lambda_{e_1} > \Lambda_{e_2} > \dots$ on the path from $v$ to $E(v,k)$ in $\tau$ (that is, the cluster with $E(v,k)$ as the edge of lowest scale includes $v$).
	\item $\chirel(\ul \Lambda,E)$ is the indicator function that $\EL(e) > M $ for each $e\in \edge$,
		where $\EL(e^*) = \ell$ for the edge $e^*$ minimizing $\Lambda_{e^*}$,
		and for the other $e \in \edge$, $\EL(e)$ is the number of $(v,k) \in \leg$ such that the path from $v$ to $K(v,k)$ passes through $e$ but does not end there (i.e.\ $K(v,k) \neq e)$; this corresponds to the action of $\fI_\many$ in \cref{eq:S_tau_Lambda_iter_def}.
	\item $\Lambda(v)$ is the maximum of $\Lambda_e$ over the edges $e$ incident to $v$.
\end{itemize}
Note that many of these quantities have names associated with the discussion of Feynman diagrams above to aid in identification (so \el\ corresponds to the choice of external legs, $\LL$ to the loop legs, and so on).

I further rewrite this by decomposing $\check \fP_{\LL,E}$ using the telescoping sum
\begin{equation}
	P_{\Lambda(e)}
	=
	(P_{\Lambda(e)} - P_{\Lambda(e')})
	+
	(P_{\Lambda(e')} - P_{\Lambda(e'')})
	+ \dots
	+
	(P_{\Lambda_0} - P_0)
	\label{eq:P_slices}
\end{equation}
where $e'$ is the edge maximizing $\Lambda_{e'} \equiv \Lambda'(e)$ among those with $e' \cap e \neq \emptyset$ and $\Lambda(e') < \Lambda_e$;
note that $P_0 = 0$.
Introducing a label $K(l)$ to indicate the edge appearing in the term associated with the leg $l$, and noting that in rewriting the term in \cref{eq:C_ell_tau_explicit_prelim} with a given $E$ gives a collection of $K$ such that $K(ell)$ is always on the path from $E(l)$ to $e^*$ in $\tau$, so that the indicator functions are unchanged, this gives
\begin{equation}
	\begin{split}
		S_\ell[\tau,\ul \Lambda,\cV][f]
		&=
		\frac{1}{\ell!}
		\frac{1}{n(\tau)!}
		\sum_{\el : \intrange1\ell \rightarrowtail \leg}
		\sum_{\ul \ell : V(\tau) \to \{2,4,\dots,M\}}
		\sum_{T : \edge \to \leg^2}^*
		\sigma(\el,T)
		\\ & \qquad \times
		\sum_{K : \leg \to \edge  }
		\int
		f(y_{\el(1)},\dots,y_{\el(\ell)})
		\prod_{e \in \edge}
		G_{\Lambda_e} (y_{T_1(e)},y_{T_2(e)})
		\\ & \qquad \times
		\chitree (K,T)
		\chimono (\ul \Lambda, K)
		\chirel(\ul \Lambda, K)
		\Pf \fP_{\LL,K} (\bs y)
		\\ & \qquad \times
		\prod_{v \in V} \fD_{\Lambda(v)} V_{\ell(v)} (\bs y_v)
		\dd^{|\ul \ell| d} \bs y
		,
		\label{eq:C_ell_tau_explicit}
	\end{split}
\end{equation}
where $\fP_{\LL,K}(\bs y)$ is the matrix with elements 
	\begin{equation}
		\fP_{ij}(\bs y)
		=
		\begin{cases}
			P_{\Lambda'(e)} (y_i,y_j)
			-
			P_{\Lambda(e)} (y_i,y_j)
			, &
			K(i)=K(j)=e
			\textup{ and } v(i) \neq v(j)
			\\
			0
			, &
			K(i) \neq K(j) \textup{ or } v(i) = v(j)
		\end{cases}
		\label{eq:fP_elems}
	\end{equation}
	indexed by $i,j \in \LL$,
	where $\Lambda'(e) = \max \left\{ \Lambda_{e'} \middle| e' \cap e \neq \emptyset, \Lambda_{e'} < \Lambda_e \right\} $ or zero if this maximum is over an empty set (note also $P_0 = 0$). 

\subsection{Resummation of the loop-leg assignment $K$}
\label{sec:regroup}

The main problem in using the above construction is that the number of terms in the sum over $E$ or $K$ is quite large, typically larger than $n!$, as is the number of $\tau \in \fT_n$.  I will using the same approach as \cite[Section~IV.2]{DR}, re-elaborated here for the current context.
The goal is to regroup the right hand side of \cref{eq:C_ell_tau_explicit} into a more manageable number of terms, while preserving the Pfaffian structure.
This would be straightforward without \chirel:
let
\begin{equation}
	\begin{split}
		\wt \fP_{ij} (\bs y)
		:=
		&
		\begin{cases}
			- P_{\Lambda(i,j)} (y_i,y_j)
			, &
			v(i) \neq v(j),
			\\
			0
			, &
			v(i) = v(j)
		\end{cases}
		\\ 
		= &
		\begin{cases}
			P_{\Lambda(i,j)} (y_i,y_j)
			, &
			v(i) = v(j)
			,\\
			0
			, &
			v(i) \neq v(j)
		\end{cases}
		-
		P_{\Lambda(i,j)} (y_i,y_j),
	\end{split}
	\label{eq:wt_fP_def}
\end{equation}
and let \chiduo\ be the indicator function of a set which is characterized equivalently by 
\begin{itemize}
	\item For any pair $v,w \in V(\tau)$, $\Lambda_{e_1} > \dots > \Lambda_{e_j} < \dots  <\Lambda_{e_m}$ for some $j$, where $e_1,\dots,e_m$ is the (unique) path in $\tau$ from $v_1$ to $v_2$, or
	\item $\Lambda_e$ has a unique minimum at some edge $e_*$, and is strictly increasing as one moves away from $e^*$ along $\tau$,
\end{itemize}
and 
with $\Lambda(i,j)$ the minimum of $\Lambda_e$ along the path from $v(i)$ to $v(j)$ if they are distinct, or the maximum over the incident edges if $v(i)=v(j)$.
Then writing out $\wt \fP$ as a sum over $K$ of $\fP_{\LL,K}$ defined in \cref{eq:fP_elems} $\fP$ and expanding the Pfaffian in rows and columns, 
\begin{equation}
	\chiduo (\ul \Lambda)
	\Pf \wt \fP_{\LL,\ul \Lambda} (\bs y)
	=
	\sum_{K : \leg \to \edge  }
	\chitree (K,T)
	\chimono (\ul \Lambda, K)
	\Pf \fP_{\LL,K} (\bs y)
\end{equation}
reproducing the correct sum over $K$.

The last expression in \cref{eq:wt_fP_def} is the difference of two terms each with a suitable Gram form:
concretely, let $\hat e_l (\Lambda)$ be one of the unit basis vectors of $\bR^{n(\tau)}$, chosen in such a way that $\hat e_k (\Lambda) = \hat e_l (\Lambda)$ iff $v(k)$ and $v(l)$ are part of the same connected component of the graph obtained from $\tau$ by deleting all edges with $\Lambda_e < \Lambda$; then
\begin{equation}
	\hat e_k(\Lambda) \cdot \hat e_l(\Lambda)
	=
	\ind_{[0,\Lambda(k,l)]} (\Lambda)
	, \quad
	\Lambda \in [0,\Lambda_k \wedge \Lambda_l]
\end{equation}
where $\Lambda_l=\Lambda_{v(l)}$ is the minimum of $\Lambda_e$ over the edges $e$ incident to $v(l)$.  Then
\begin{equation}
	\begin{split}
		P_{\Lambda(i,j)}
		& =
		\int_0^{\Lambda(i,j)}
		G_\Lambda
		\dd \Lambda
		=
		\int_0^{\min(\Lambda_i,\Lambda_j)}
		\hat e_i (\Lambda) \cdot \hat e_j (\Lambda)
		G_\Lambda 
		\dd \Lambda
		\\ &=
		\int_0^\infty
		\ind_{[0,\Lambda_i]}(\Lambda)
		\ind_{[0,\Lambda_j]}
		\hat e_i (\Lambda) \cdot \hat e_j (\Lambda)
		G_\Lambda 
		\dd \Lambda
		.
	\end{split}
	\label{eq:P_tree_Gram_trick_basic}
\end{equation}

\medskip

The presence of $\chirel$ (which, remember, corresponds to the presence of $\fI_\many$ in \cref{eq:S_tau_Lambda_iter_def}, and is needed for the integral in \cref{eq:C_tau_def} to be absolutely convergent) makes it impossible to regroup the sum into a single term without breaking the Pfaffian structure, but it is possible to allocate it as follows into a manageable number of terms each which can be so regrouped by the following procedure, which unfortunately requires more preliminary definitions.  

Given $\tau, \ul \Lambda, \ul l, \el, T$ for which $\chiduo(\ul \Lambda) =1$ and $\supp \chitree(T,\cdot) \neq \emptyset$, let $\IG$ be the set of $K$ for which $\chimono \chitree \chirel = 1$.
I define  $\omega : \IG \to (V \cup \edge)^\LL$ as follows.  As will become clearer, $\omega_l (K)$ can be thought of as specifying the last (lowest-scale) cluster to which $l$ is required to be an external leg in order for $K$ to be in the support of $\chirel$.

I define a partial order on $V \cup \edge$ which is the minimal one for which $v \succ e$ whenever $e$ is incident to $v$, and $e \prec e'$ whenever $e \cap e' \neq \emptyset$ and $\Lambda_e < \Lambda_{e'}$ (this corresponds to ordering by inclusion of the associated clusters, see \cref{fig:Hesse}).
	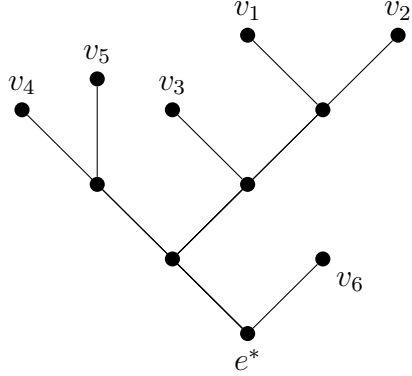
\begin{figure}[t]
		\centering
			\begin{tikzpicture}[node distance = 12mm,
				every node/.style={circle, inner sep = 2pt, outer sep = 0pt, fill=black},
				every label/.style={rectangle,fill=white, outer sep=1pt}
			]
			\node (e1) [label=below:$e^*$]{} ;
			\node (v6) [above right=of e1,label=below right:$v_6$] {};
			\draw (e1) edge (v6) ;
			\node (e2) [above left = of e1] {};
			\node (e3) [above left = of e2] {};
			\node (v4) [above left = of e3, label = above:$v_4$] {};
			\draw (e1) edge (e2) edge (e3) edge (v4);
			\node (v5) [above = of e3, label=above:$v_5$] {};
			\draw (e3) edge (v5);
			\node (e4) [above right= of e2] {};
			\node (v3) [above left =of e4,label=above:$v_3$] {};
			\draw (e4) edge (v3);
			\node (e5) [above right=of e4] {};
			\node (v1) [above left = of e5,label=above:$v_1$] {};
			\node (v2) [above right = of e5,label=above:$v_2$] {};
			\path (e2) edge (e4) edge (e5) edge (v2);
			\draw (e5) edge (v1);
		\end{tikzpicture}
		\caption{The Hesse diagram $\fH$ of the partial order $\succ$ associated with the tree and scale assignment in \cref{fig:cluster} (that is, the elements of $V \cup \edge$ are drawn as nodes, with edges such that $j \succ k$ iff there is an upwards path from $k$ to $j$).  The unlabelled nodes correspond to edges of the tree $\tau$, other than the minimal edge $e^*$ which is labelled.}
		\label{fig:Hesse}
	\end{figure}
I abbreviate $\omega_l = \omega_l (K)$.
For each $v$ let $\leg(v)$ be the set of $(w,k) \in \leg(\tau)$ such that $w \succeq v$.
Proceeding through $ V\cup \edge$ according to this partial order,
\begin{enumerate}
	\item For the edge minimizing $\Lambda_e$ (and so also the partial order) do nothing.
	\item For any other edge $e$,
		let $W_e$ be the set of $l \in \leg(e) = \bigcup_{v \succ e} \leg(v)$ for which either
		\begin{itemize}
			\item $l \in \range(\el)$,
			\item $l = T_j(e')$ with $j\in \{1,2\}$ and $e' \prec e$,
				or
			\item $\omega_l \prec e$ (note that this will already have been decided).
		\end{itemize} 
		Then if $\# W_e \le M$, let $\omega_l =e$ for the first $M + 1 - \# W_e$ elements (in the lexicographic order) $l$ of $(\leg(e) \cap \LL) \setminus W_e$ such that $K(l) \prec e$.

		Note that this is always possible on the support of \chirel.
	\item For a vertex $v$ (so that $\nexists k \in V \cup \edge : k \succ v$), let $\omega_l = v$ for any $l \in \LL \cap \leg(v)$ for which it has not yet been assigned.
\end{enumerate}

\begin{remark}
	\label{rem:omega_characterization}
	Note that this has the following properties:
	\begin{enumerate}
		\item $\omega_l \neq e^*$ for all $l \in \LL$, where $e^*$ is the edge minimizing $\Lambda_e$.
			\label{it:omega_e0}
		\item If $e \in \edge \setminus \{e^*\}$,  then $\LL(e) := \# \left\{ l \in \LL \middle| \omega_l \preceq e \right\} \le M$ (since $W_e$ is always nonempty).
			\label{it:omega_few}
		\item \label{it:omega_enough}
			On the other hand, letting 
			\begin{equation}
				\TL(e)
				:=
				\# \leg(e) \cap \bigcup_{e' \prec e} \left\{ T_1(e'),T_2(e') \right\} 
				=
				\# \left\{ e'  \prec e, j \in \{1,2\} \middle| T_j (e') \succ e \right\}
				\label{eq:TL_def}
			\end{equation}
			and $\EL(e) := \# [\leg(e) \cap \range (\el)]$,
			\begin{equation}
				\LL(e) + \TL(e) + \EL(e)
				\ge
				M +1
				.
				\label{eq:omega_enough}
			\end{equation}
		\item \label{it:omega_preimage}
			For any $\lis \omega \in \omega(\IG)$, 
			$\lis \omega_l \prec v(l) \ \forall l$ and
			$\omega^{-1} (\{\lis \omega\})$ is exactly the set of $K : \leg \to V \cup \edge$ such that
			\begin{itemize}
				\item $\chitree(T,K) =1$, and
				\item $K (l) \prec \lis \omega_l$ (in particular this implies $\chimono=1$), and
				\item If $k,l \in \LL$ with $k > l$, $l \in \leg(\lis \omega_k)$ and $\lis \omega_k \prec \lis \omega_l$, then $K(l) \succ \lis \omega_k$ (otherwise $l$ rather than $k$ will be picked to have $\omega_l (K) = \lis \omega_k$),
			\end{itemize}
			and this set is nonempty.
	\end{enumerate}
\end{remark}

This makes it possible to reorganize the sum without disrupting the Pfaffian form and prove useful bounds:
as a result of \cref{it:omega_preimage} of \cref{rem:omega_characterization}, there exist $\omega^-_l =  \omega^-_l (\lis \omega) \in \edge $ such that
\begin{equation}
	\begin{split}
		\sum_{K : \leg \to \edge  }
		\chitree (K,T)
		\chimono (\ul \Lambda, K)
		&
		\chirel(\ul \Lambda, K)
		\Pf \fP_{\LL,K} (\bs y)
		\\ &
		=
		\chiduo (\ul \Lambda)
		\sum_{\lis \omega \in \omega(\IG)}
		\Pf \wt \fP_{\LL,\ul \Lambda}^{\lis \omega} (\bs y)
	\end{split}
	\label{eq:K_sum_to_omega_sum}
\end{equation}
with
\begin{equation}
	\wt \fP^{\lis \omega}_{kl}
	=
	\begin{cases}
		\wt P^{\lis \omega,\ul \Lambda}_{kl} (\bs y)
		, &
		v(k) = v(l)
		, \\
		0
		, &
		v(k) \neq v(l)
	\end{cases}
	-
	\wt P^{\lis \omega,\ul \Lambda}_{kl} (\bs y)
\end{equation}
and
\begin{equation}
	\begin{split}
		\wt P^{\lis \omega,\ul \Lambda}_{kl} (\bs y)
		&=
		\sum_{\substack{e \in \edge: 
				\\ \omega^-_k \preceq e \prec \lis \omega_k , 
				\\ \omega^-_l \preceq e \prec \lis \omega_l  
		}}
		[P_{\Lambda' (e)}- P_{\Lambda(e)}]
		=
		\int_{\max(\Lambda'(\omega^-_k),\Lambda'(\omega^-_l))}^{\min(\Lambda(\lis \omega_k), \Lambda(\lis \omega_l),\Lambda(k,l))}
		G_\Lambda \dd \Lambda
		\\ & 
		=
		\int_0^\infty
		\ind_{[\Lambda'(\omega^-_k),\Lambda(\lis \omega_k)]}(\Lambda)
		\ind_{[\Lambda'(\omega^-_l),\Lambda(\lis \omega_l)]}(\Lambda)
		\hat e_i (\Lambda) \cdot \hat e_j (\Lambda)
		G_\Lambda
		\dd \Lambda
	\end{split}
	\label{eq:P_tree_Gram_advanced}
\end{equation}
using the same definitions as in \cref{eq:P_tree_Gram_trick_basic}.
This Gram form makes it possible to efficiently bound the individual terms on the right hand side of \cref{eq:K_sum_to_omega_sum} using the methods presented in \cref{sec:Pfaff}.
For this to be useful we also need to check that there are not too many terms in the resulting sum, as expressed in the following bound.
\begin{lemma}
	For any $\tau \in \fT_n$ and $\ul \Lambda$ with $\chiduo=1$,
	\begin{equation}
		\sum_{\el : \intrange1\ell \rightarrowtail \leg}
		\sum_T^* 
		\# \omega (\IG)
		\le
		n^\ell
		2^{n-1}
		(2^{M+1})^{n-2}
		\ul \ell!
		\label{eq:omega_count}
	\end{equation}
	\label{lem:omega_count}
\end{lemma}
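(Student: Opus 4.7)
The plan is to factor the left-hand side of~\eqref{eq:omega_count} via separate estimates on the $(\el,T)$-sum and on $\#\omega(\IG)$ for fixed $(\el,T)$, matching the three blocks $n^\ell\ul\ell!$ and $2^{n-1}(2^{M+1})^{n-2}$ on the right-hand side.

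For the $(\el,T)$-count, I parameterize each injection $\el \colon \intrange1\ell \rightarrowtail \leg$ by first assigning a vertex of $\tau$ to each of the $\ell$ external positions (yielding $n^\ell$ choices), and then, at each vertex $v$, selecting specific legs out of the $\ell_v$ available for the $m_v := \#\{k:\el(k)\text{ sits at }v\}$ external positions and, afterwards, for the $\deg_\tau(v)$ tree-endpoint positions of the incident edges. At each vertex, the number of such assignments is $\ell_v!/(\ell_v-m_v-\deg_\tau(v))!\le \ell_v!$ because $m_v+\deg_\tau(v)\le \ell_v$. Taking the product over $v$ and then summing over vertex assignments gives
\begin{equation*}
\sum_{\el:\intrange1\ell\rightarrowtail\leg}\sum_{T}^{*}1\;\le\;\sum_{\bs v\in V(\tau)^\ell}\prod_{v\in V(\tau)}\ell_v!\;=\;n^\ell\,\ul\ell!.
\end{equation*}

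For the $\omega$-count I fix $(\tau,\ul\Lambda,\ul\ell,\el,T)$ and exploit the characterization of $\omega(\IG)$ from Remark~\ref{rem:omega_characterization}. Combining the upper bound $|\Delta(e)|\le M+1-\#W_e$ built into step~2 of the definition of $\omega$ with the lower bound $\LL(e)+\TL(e)+\EL(e)\ge M+1$ coming from $\chirel\cdot\chitree\ne 0$, one obtains the exact cluster-boundary identity $\LL(e)+\TL(e)+\EL(e)=M+1$ at every non-minimal edge $e$. In particular, $\LL(e)$ and hence $|\Delta(e)|$ are determined by the data $(\tau,\ul\Lambda,\el,T)$ alone, with $|\Delta(e)|\le M+1$ for each of the $n-2$ non-minimal edges. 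I then process these edges in order of increasing $\Lambda_e$ (which $\chiduo=1$ makes a linear extension of $\prec$): at each step, I claim that the number of admissible new loop-leg assignments $\Delta(e)$ is at most $2^{M+1}$, yielding a total of $(2^{M+1})^{n-2}$. The extra factor $2^{n-1}$ absorbs a binary orientation choice per tree edge (e.g.\ which of the two endpoints of $e$ contributes the $T_1(e)$ versus $T_2(e)$ leg in the lex ordering used by the construction).

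\paragraph{Main obstacle.} The delicate step is the per-edge bound $\#\{\Delta(e)\}\le 2^{M+1}$: the nominal candidate pool $(\leg(e)\cap\LL)\setminus W_e$ may be considerably larger than $M+1$, so a naive binomial estimate is insufficient. The required argument is to show that although varying $K$ moves legs in and out of the candidate pool, only the $M+1$ legs that occupy the boundary slots of the cluster at scale $e$ give rise to genuinely distinct $\omega$-assignments, the remaining freedom being absorbed into the choice of $K$ (which does not affect $\omega$). Formally, this should be derived by induction along the Hesse diagram of $\prec$ on $V\cup\edge$: the nested cluster inclusions let one track how legs can migrate between candidate pools at different scales, and the exact cluster-boundary identity above pins the effective pool at each scale to the $\le M+1$ boundary positions. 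The overall bound~\eqref{eq:omega_count} then follows by multiplying the three factors above.
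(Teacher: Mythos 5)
Your decomposition of the counting task — external-leg placements ($n^\ell$), tree-leg orientations ($2^{n-1}$), loop-leg assignments ($(2^{M+1})^{n-2}$), and per-vertex leg permutations ($\ul\ell!$) — matches the paper's, and the vertex-then-leg enumeration giving $n^\ell\prod_v\ell_v!$ is sound. However, the critical bound $(2^{M+1})^{n-2}$ on the loop-leg count is not actually established: you explicitly flag the per-edge bound $\#\{\Delta(e)\}\le 2^{M+1}$ as the ``main obstacle'' and sketch what a proof would need without carrying it out. This is the genuine content of the lemma and cannot be deferred.

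There is also a concrete technical error in the lead-up: the claimed ``exact cluster-boundary identity'' $\LL(e)+\TL(e)+\EL(e)=M+1$ at every non-minimal edge does not hold in general. From the definition of $\omega$ in step 2, one has $\#\{l:\omega_l=e\}=\max(M+1-\#W_e,0)$, so $\LL(e)+\TL(e)+\EL(e)=M+1$ only when $\#W_e\le M$; if $\#W_e>M$ the quantity is $\#W_e>M+1$, and moreover $\LL(e)$ then depends on lower-scale $\omega$-assignments, not just on $(\tau,\ul\Lambda,\el,T)$. So the claim that $|\Delta(e)|$ is determined by the data alone fails, which undermines the proposed per-edge counting strategy.

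More fundamentally, the paper avoids fixing $(\el,T)$ and then counting distinct $\omega$'s per-edge. Instead it enumerates the \emph{multiset} $\mu$ of pairs $(v(l),\omega_l)$ with $\omega_l\ne v(l)$ as step~3, before the specific leg-to-role assignment in step~4 (which supplies the $\ul\ell!$). Each element of $\mu$ is encoded as an upward path in the Hesse diagram $\fH$ of the partial order $\prec$, and the constraint $\LL(e)\le M$ (item~2 of Remark~\ref{rem:omega_characterization}) caps the number of paths starting at or passing through any node at $M$. Scanning $\fH$ upward from $e^*$, at each node with $i$ incoming paths one chooses how they continue (two options each) and how many new paths start and in which direction, giving at most $\sum_{j=i}^M 2^j<2^{M+1}$ local choices. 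This path-multiset encoding is the missing idea; it works precisely because the multiset is counted \emph{independently} of the specific legs, sidestepping the issue you identify about the candidate pool being large.
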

\begin{proof}
	Enumerate as follows:
	\begin{enumerate}
		\item For each $j=1,\dots,\ell$, chose the $v$ for which $\el(j) \in \leg(v)$ ($n^l$ choices).
		\item For each $e$, choose which of the incident vertices provides $T_1(e)$ and which provides $T_2(e)$ ($2^{n-1}$ choices).
		\item Pick the multiset $\mu$ of pairs $(v(l),\omega_l)$ for $l \in \LL$ with $\omega_l \neq v(l)$.
			\label{it:path_enumerate}
		\item Finally, for each vertex $v$, choose which elements of $\leg(v)$ to assign to each of the roles above (at most $\ell_v!$ choices).
	\end{enumerate}
	I now bound the number of choices in step~\ref{it:path_enumerate} by $(2^{M+1})^{n-2}$ to complete the proof.
	It is convenient to visualize the argument in terms of the Hesse diagram $\fH$ of the partial order $\prec$ on $V \cup \edge$ (see \cref{fig:Hesse});
	this is a  tree, and each vertex has at most one line entering from below and two lines leaving above (this resembles a Gallavotti-Nicolò tree set on its side, except that the ordering of the ``endpoints'' is not taken into account). 
	Recalling \cref{it:omega_e0,it:omega_few} of \cref{rem:omega_characterization},
	if I associate each element $(v,u)$ of the multiset $\mu$ with the path from $v$ to $u$ in $\fH$, there are at most $M$ such paths starting at or passing through each $e$ (a n edge of $\tau$ which corresponds to a vertex of $\fH$) and none at $e^*$ (the lowest vertex of $\fH$).  I can generate all of the multisets of upwards paths with this property iteratively, starting at the neighbors of $e^*$ and moving up: at each $e$ I decide for each path arriving from below which of the two upwards directions it continues in, decide how many paths will start at $e$, and which direction each of them starts in; with $i$ incoming paths this gives at most
	\begin{equation}
		\sum_{j=i}^M 2^j
		<
		2^{M+1}
	\end{equation}
	possibilities.
\end{proof}

\subsection{Estimates on tree terms}
\label{sec:tree_bounds}

I will now describe some estimates on the right hand side of \cref{eq:C_ell_tau_explicit}, which show that the integrals are well defined (the sums are all finite, thanks to the restricted choice  of $\ul \ell$ corresponding to the restriction to $\cV \in \pot_{\le M}$) and that the integrals in \cref{eq:C_tau_def} and sums over trees such as those in \cref{eq:dC_dLam_treesum} are absolutely convergent for a certain set of $\cV$.

\begin{remark}
	Unless otherwise stated, all of the numbered constants introduced in this section depend only on $M \in 2\bN+2$, $[\psi], |G|_+, C_{\GH} \in (0,\infty)$, 
	\begin{equation}
		\Dmin^\many := 
		(M+1)[\psi] - d
		> 0
		, \text{ and }
		\Dmin^\inter :=
		\min_{\ell=2,4,\dots,M}
		(\floor{d - \ell [\psi]} + 1 + \ell [\psi] - d)
		> 0,
	\end{equation}
	and can be chosen to be uniform on compact subintervals of $(0,\infty)$, hence uniformly for $\varepsilon \in [0,d/6) $.
	\label{rem:uniform_constants}
\end{remark}

The first step is to estimate the individual terms.  

Consequently, using \cref{lem:Gram_diff}, 
\begin{equation}
	\label{eq:decomp_Gram_bound}
	\left| 
	\Pf \wt \fP_{\LL,\ul \Lambda}^{\lis \omega} (\bs y)
	\right|
	\le
	\prod_{l \in \LL}
	\sqrt 2 [\Lambda(\lis \omega)]^{[\psi]}
	C_{\GH}^{\LL}
	.
\end{equation}
Noting that 
\begin{equation}
	\Lambda(\omega_l)
	=
	\Lambda(v(l))
	\prod_{v(l) \succ e \succeq \omega_l}
	\frac{\Lambda'(e)}{\Lambda(e)}
	\label{eq:Lambda_as_product_quotient}
\end{equation}
this becomes
\begin{equation}
	\begin{split}
		\left| 
		\Pf \wt \fP_{\LL,\ul \Lambda}^{\lis \omega} (\bs y)
		\right|
		&\le
		C_{\GH}^{\LL}
		\prod_{e \in \edge \setminus \{e^*\}}
		\left[ \frac{\Lambda'(e)}{\Lambda(e)} \right]^{[\psi] \LL(e)}
		\prod_{v \in V}
		\left( 
			\sqrt2
			[\Lambda(v)]^{[\psi]}
		\right)^{\LL(v)}
		\\ & \le
		C_{\GH}^{\LL}
		2^{(|\ul \ell| - \ell - 2n + 2)/2}
		\begin{aligned}[t]
			&
			\prod_{e \in \edge \setminus \{e^*\}}
			\left[ \frac{\Lambda'(e)}{\Lambda(e)} \right]^{[\psi] \max(M+1 - \EL(e)-\TL(e),0)}
			\\ & \times \prod_{v \in V}
			[\Lambda(v)]^{[\psi] (\ell_v - \EL(v) - \TL(v))}
		\end{aligned}
	\end{split}
	\label{eq:omega_Pfaff_bound}
\end{equation}
using \cref{it:omega_enough} of \cref{rem:omega_characterization}.  Note that the last bound is uniform in $\lis \omega$.

Using \cref{lem:omega_count,eq:omega_Pfaff_bound} along with \cref{lem:contraction_bound_generic,eq:V_j_scaling} gives
\begin{equation}
	\begin{split}
		\left\|
		S_\ell^{(\Lambda)}[\tau,\cV]
		\right\|
		&\le
		\begin{aligned}[t]
			\frac{\C^n}{n!}
			2^{-\ell/2}
			\sum_{\ul \ell : V(\tau) \to \{2,4,\dots,M\}}
			&
			\max_{\el : \intrange1\ell \rightarrowtail \leg}
			C_{\GH}^{\LL}
			\\
			\times
			\int_{(\Lambda,\infty)^\edge}
			&
			\chiduo(\tau,\ul \Lambda)
			\prod_{e \in \edge} \Lambda(e)^{2[\psi]-1-d}
			\\  \times &
			\prod_{e \in \edge \setminus \{e^*\}}
			\left[ \frac{\Lambda(e)}{\Lambda'(e)} \right]^{[\psi] \min (\EL(e)-\TL(e)-M-1,0)}
			\\  \times &
			\prod_{v \in V}  2^{\ell_v/2} \ell_v ! \Lambda(v)^{d-(\EL(v)+\TL(v))[\psi]} \|V_{\ell(v)}\|
			\dd \ul \Lambda
		\end{aligned}
		\\ & \le
		\begin{aligned}[t]
			\frac{\Cr{C_tree_basic}^n}{n!}
			2^{-\ell/2}
			C_\GH^{-\ell}
			&
			\begin{aligned}[t]
				\int_{(\Lambda,\infty)^\edge}
				&
				\max_{\substack{ \wt \EL : V \to \bN \\ \sum_{v \in V} \wt \EL(v) = \ell}}
				\chiduo(\tau,\ul \Lambda)
				\Lambda(e^*)^{d - 1 - \ell [\psi]}
				\\ & \times
				\prod_{e \in \edge \setminus \{e^*\}}
				\Lambda(e)^{-1}
				\left[ \frac{\Lambda(e)}{\Lambda'(e)} \right]^{-D(e)}
				\dd \ul \Lambda
			\end{aligned}
			\\ &  \times 
			\prod_{v \in V} 
			\sum_{\ell_v = 2,4,\dots,M}
			2^{\ell_v/2} \ell_v ! C_{\GH}^{\ell_v} \|V_{\ell(v)}\|
		\end{aligned}
	\end{split}
	\label{eq:C_sum_integral_bound_basic}
\end{equation}
where $D(e) :=  [\psi] \max(M+1,\wt\EL(e) + \TL(e)) - d > 0$ with $\wt \EL(e) := \sum_{v \succ e} \wt \EL(v)$,
and $\Cl{C_tree_basic} = e 2^{M+1} |G|_+$, which depends only on $M$ and $|G|_+$.
Note that  
\begin{equation}
	D(e) 
	\ge 
	\Dmin
	:=
	\min_{\substack{\ell \in 2 \bN, \\ \ell > M}}
	(\ell [\psi]-d)
	\label{eq:Dmin_def}
\end{equation}
and there is a $\Cl{dim_TL} > 0$ depending only on \Dmin such that 
\begin{equation}
	D(e) \ge \Clast [\psi] \TL (e).
\end{equation}
If $ \ell > M$, then \cref{eq:C_sum_integral_bound_basic} implies
\begin{equation}
	\left\|
	S_\ell^{(\Lambda)}[\tau,\cV]
	\right\|
	\le
	\begin{aligned}[t]
		&
		\frac{\Cr{C_tree_basic}^n}{n!}
		2^{-\ell/2}
		C_\GH^{-\ell}
		\Lambda^{d - 1 - \ell [\psi]}
		\prod_{v \in V} 
		\sum_{\ell_v = 2,4,\dots,M}
		2^{\ell_v/2} \ell_v ! C_{\GH}^{\ell_v} \|V_{\ell(v)}\|
		\\ & \times
		\begin{aligned}[t]
			\int_{(1,\infty)^\edge}
			&
			\chiduo(\tau,\ul \Lambda)
			\left[ \Lambda(e^*) \right]^{-\Cr{dim_TL} [\psi]/2}
			\\ & \times
			\prod_{e \in \edge}
			\Lambda(e)^{-1}
			\left[ \frac{\Lambda(e)}{\Lambda'(e)} \right]^{-\Cr{dim_TL} [\psi] \TL(e)}
			\dd \ul \Lambda
		\end{aligned}
	\end{aligned}
	\label{eq:C_sum_integral_bound}
\end{equation}
note that the minimal edge $e^*$ is the only edge with $\TL(e)=0$.

\begin{lemma}
	For any choice of $M,[\psi]$ there exist
	$\Cl{tree_integral_start}$ and $\Cl{tree_integral} $
	such that for all $\tau \in \fT_n$, $n \ge 2$,
	\begin{equation}
		\begin{split}
			\int_{(1,\infty)^\edge}
			\chiduo(\tau,\ul \Lambda)
			\left[ \Lambda(e^*) \right]^{-\Cr{dim_TL} [\psi]/2}
			\prod_{e \in \edge}
			\frac1{\Lambda(e)}
			\left[ \frac{\Lambda(e)}{\Lambda'(e)} \right]^{-\Cr{dim_TL} [\psi] \TL(e)}
			\dd \ul \Lambda
			& \\[1ex]
			\le
			\Cr{tree_integral_start} 
			\Cr{tree_integral}^{n-2}
			.
			&
		\end{split}
		\label{eq:tree_integral_bound}
	\end{equation}
	\label{lem:tree_integral_bound}
\end{lemma}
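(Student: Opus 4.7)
The plan is to factorize the multivariate integral into one-dimensional integrals via a change of variables adapted to the partial order $\prec$ on $\edge(\tau)$. The $\chiduo$ constraint splits the integration region, up to a measure-zero set, into $n-1$ pieces according to which edge $e^{*} \in \edge(\tau)$ realizes the unique minimum of $\ul \Lambda$; within each piece the partial order, the parent map $e \mapsto e''(e)$ (the neighbour of $e$ in $\tau$ on the $\tau$-path towards $e^{*}$, which satisfies $\Lambda_{e''(e)} = \Lambda'(e)$), and the integer-valued function $\TL(e)$ are all fixed.

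On such a piece I will change variables to $t_{e^{*}} := \Lambda_{e^{*}}$ and $t_{e} := \Lambda_{e}/\Lambda'(e)$ for $e \neq e^{*}$. Because $\Lambda'(e) = \Lambda_{e''(e)}$ depends only on variables $t_{f}$ with $f \prec e$, the Jacobian matrix $\partial \Lambda_{e}/\partial t_{f}$ is triangular in the $\prec$-order, with diagonal entries $\Lambda'(e)$ for $e \neq e^{*}$ and $1$ for $e^{*}$. Hence $\prod_{e \in \edge} \dd\Lambda_{e}/\Lambda_{e} = \prod_{e \in \edge} \dd t_{e}/t_{e}$, the region becomes $\{ t_{e} > 1 : e \in \edge\}$, and, using $\TL(e^{*}) = 0$ (which also sidesteps the convention issue that $\Lambda'(e^{*}) = 0$), the integrand factorises as $t_{e^{*}}^{-\alpha/2} \prod_{e \neq e^{*}} t_{e}^{-\alpha \TL(e)}$ with $\alpha := \Cr{dim_TL}[\psi] > 0$. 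The per-piece integral thus equals
\begin{equation*}
\left( \int_{1}^{\infty} t^{-\alpha/2 - 1} \dd t \right) \prod_{e \neq e^{*}} \int_{1}^{\infty} t^{-\alpha \TL(e) - 1} \dd t = \frac{2}{\alpha} \prod_{e \neq e^{*}} \frac{1}{\alpha \TL(e)}.
\end{equation*}

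The only nontrivial step is to check $\TL(e) \ge 1$ for every $e \neq e^{*}$, without which the one-dimensional integrals would diverge. For this, let $f := e''(e)$ and let $v \in V(\tau)$ be the common vertex of $e$ and $f$; then $v \succ e$ and $v \succ f$ in the partial order, so $\leg(v) \subset \leg(e)$, and by the constraint on $T$ imposed just after \cref{eq:C_ell_tau_explicit_prelim} exactly one of $T_{1}(f), T_{2}(f)$ is a leg at $v$, which therefore lies in $\leg(e)$ and witnesses a contribution to $\TL(e)$ via the relation $f \prec e$. With this in hand each per-piece integral is bounded by $(2/\alpha)(1/\alpha)^{n-2}$; summing over the $n-1$ choices of $e^{*}$ and using $n - 1 \le 2^{n-2}$ (valid for every $n \ge 2$) yields the desired bound with, for instance, $\Cr{tree_integral_start} = \Cr{tree_integral} = 2/\alpha$. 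I expect no serious obstacle beyond this combinatorial verification and a little care with the measure-zero boundaries between pieces of the partition.
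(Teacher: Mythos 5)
Your proof is correct, and it takes a genuinely different (and arguably more transparent) route than the paper's. The paper proves the bound by induction on $n$, pruning a degree-one vertex $v$ and relating the integral for $\tau$ to that for $\tau\setminus\{v\}$ via a three-way case analysis on the size of $\Lambda_{\{v,v'\}}$ relative to the other scales at $v'$; this gets the job done but the bookkeeping around how $e^*$ and $\TL$ change under pruning is somewhat opaque. Your argument instead partitions the $\chiduo$-region by the choice of minimal edge $e^*$, observes that within each piece the partial order $\prec$, the parent map, and hence $\TL$ are fixed, and then changes variables to $t_{e^*}=\Lambda_{e^*}$, $t_e=\Lambda_e/\Lambda'(e)$, under which the Jacobian is triangular in the $\prec$-order, $\prod_e \dd\Lambda_e/\Lambda_e = \prod_e \dd t_e/t_e$, the region becomes $(1,\infty)^\edge$, and the integrand splits into a product of one-dimensional integrals. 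The lower bound $\TL(e)\ge 1$ for $e\neq e^*$ (which you verify cleanly by exhibiting the tree leg of the parent edge sitting at the shared vertex) is exactly what guarantees convergence of each factor; this is the same observation the paper records informally just before the lemma (``the minimal edge $e^*$ is the only edge with $\TL(e)=0$''). Your approach buys explicit constants $\Cr{tree_integral_start}=\Cr{tree_integral}=2/(\Cr{dim_TL}[\psi])$, makes the role of the $\chiduo$ constraint and of $\TL$ completely transparent, and avoids the case analysis; the paper's inductive proof is more in the spirit of the rest of Section~\ref{sec:tree_bounds} but is harder to audit. One small remark: the factor $(n-1)$ from summing over pieces is comfortably swallowed by $2^{n-2}$, so your choice of constants is fine; one could equally absorb it into a slightly larger $\Cr{tree_integral}$ without invoking that inequality.
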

\begin{proof}
	I iterate in $n$, starting with $n=2$ where there is only one edge $e=e^*$, $\TL(e) = 0$ and the estimate reduces to
	\begin{equation}
		\int_1^\infty
		\Lambda(e^*)^{-1-\Cr{dim_TL} [\psi]/2}
		\dd \Lambda
		\le 
		\Cr{tree_integral_start}
	\end{equation}
	for which it suffices to choose $\Cr{tree_integral_start}$ appropriately.

	Let 
	\begin{equation}
		\Cr{tree_integral}
		:=
		2
		\int_1^{\infty}
		\Lambda^{-1 - \Cr{dim_TL}[\psi]/2}
		\dd \Lambda
		=
		\int_0^\infty
		\frac{1}{\Lambda}
		\min \left( \frac{\Lambda'}{\Lambda}, \frac{\Lambda}{\Lambda'} \right)^{\Cr{dim_TL}[\psi]/2};
	\end{equation} the last equality holds for any $\Lambda' \in (0,\infty)$.

	Proceeding iteratively, for any $\tau \in \fT_n$ let $v$ be a vertex of degree one (there are always at least two) and let $v'$ be its (unique) neighbor.  Let $\wt e^*$ be the edge minimizing of $\Lambda_e$ considering only $e \in \edge(\wt \tau)$, and let $\wt \Lambda (v')$ be the maximum of $\Lambda_e$ over $e \in \edge(\wt \tau)$ incident to $v'$, and let $\tilde e$ be the associated edge.
	If $\Lambda_{\{v,v'\}} < \Lambda_{\tilde e}$, then $\{v,v'\} = e^* \prec \tilde e^* \prec \tilde e$, 
	and
	\begin{equation}
		\wt \TL(e) 
		=
		\TL(e)
		-
		\begin{cases}
			1
			, &
			\{v,v'\} \prec e \prec v'
			, \\
			0
			, &
			\text{otherwise}
		\end{cases}
	\end{equation}
	and $\TL(e^*) = 0$,
	so 
	\begin{equation}
		\begin{split}
			\left( \frac{\Lambda}{\Lambda_{e^*}} \right)^{1/2}
			\prod_{e \in \edge(\tau)}
			\left[ \frac{\Lambda(e)}{\Lambda'(e)} \right]^{-\TL(e)}
			&=
			\left( \frac{\Lambda}{\Lambda_{\{v,v'\}}} \right)^{1/2}
			\frac{\Lambda_{\{v,v'\}}}{\Lambda_{\tilde e}}
			\prod_{e \in \edge(\tilde \tau)}
			\left[ \frac{\Lambda(e)}{\Lambda'(e)} \right]^{-\wt\TL(e)}
			\\ &\le
			\left( \frac{\Lambda}{\Lambda_{\tilde e^*}} \right)^{1/2}
			\left( \frac{\Lambda_{\{v,v'\}}}{\Lambda_{\tilde e}} \right)^{1/2}
			\prod_{e \in \edge(\tilde \tau)}
			\left[ \frac{\Lambda(e)}{\Lambda'(e)} \right]^{-\wt\TL(e)}
			.
		\end{split}
	\end{equation}
	If instead $\Lambda_{\{v,v'\}} \in (\wt \Lambda_0, \wt \Lambda (v'))$, $e^* = \tilde e^*$ and the same estimate holds because
	\begin{equation}
		\begin{split}
			\left( \frac{\Lambda}{\Lambda_{e^*}} \right)^{1/2}
			\prod_{e \in \edge(\tau)}
			\left[ \frac{\Lambda(e)}{\Lambda'(e)} \right]^{-\TL(e)}
			&=
			\left( \frac{\Lambda}{\Lambda_{\tilde e^*}} \right)^{1/2}
			\frac{\Lambda_{\{v,v'\}}}{\Lambda_{\tilde e}}
			\prod_{e \in \edge(\tilde \tau)}
			\left[ \frac{\Lambda(e)}{\Lambda'(e)} \right]^{-\wt\TL(e)}
			.
		\end{split}
	\end{equation}

	On the other hand if $\Lambda_{\{v,v'\}} > \wt \Lambda (v')$, then $\TL(\{v,v'\})=1$, $\Lambda'(v) = \Lambda_{\{v,v']\}}$ and $\wt \TL(e) = \TL(e)$ in $\tilde \tau$, and so
	\begin{equation}
		\begin{split}
			\left( \frac{\Lambda}{\Lambda_{e^*}} \right)^{1/2}
			\prod_{e \in \edge(\tau)}
			\left[ \frac{\Lambda(e)}{\Lambda'(e)} \right]^{-\TL(e)}
			&=
			\left( \frac{\Lambda}{\Lambda_{\{v,v'\}}} \right)^{1/2}
			\frac{\Lambda_{\tilde e}}{\Lambda_{\{v,v'\}}}
			\prod_{e \in \edge(\tilde \tau)}
			\left[ \frac{\Lambda(e)}{\Lambda'(e)} \right]^{-\wt\TL(e)}
			\\ &\le
			\left( \frac{\Lambda}{\Lambda_{\tilde e^*}} \right)^{1/2}
			\left( \frac{\Lambda_{\{v,v'\}}}{\Lambda_{\tilde e}} \right)^{1/2}
			\prod_{e \in \edge(\tilde \tau)}
			\left[ \frac{\Lambda(e)}{\Lambda'(e)} \right]^{-\wt\TL(e)}
			.
		\end{split}
	\end{equation}
	Then the desired estimate follows immediately from the case with $n-1$.
\end{proof}

Bearing in mind that $\# \fT_n = n^{n-2}$ (Cayley's formula),
\begin{corollary}
	For some $\Cl{tree_sum_power},\Cl{tree_sum_pref}$,
	for all positive even $\ell > M$,
	\begin{equation}
		\Cr{tree_sum_power}^{\ell}
		\sum_{n=2}^\infty
		\sum_{\tau \in \fT_n}
		\left\|
		S_\ell^{(\Lambda)}[\tau,\cV]
		\right\|
		\le
		\Lambda^{d - \ell [\psi]}
		\sum_{n=2}^\infty
		\left[ 
			\Cr{tree_sum_pref}
			\sum_{k = 2,4,\dots}
			\Cr{tree_sum_power}^k
			k! \ 
			\| V_k \|
		\right]^n
		.
	\end{equation}
	\label{lem:tree_sum_nice}
\end{corollary}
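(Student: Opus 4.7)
The proof plan is to combine the per-tree bound \cref{eq:C_sum_integral_bound} with the integral estimate of \cref{lem:tree_integral_bound} and Cayley's formula, and then absorb the resulting constants by a suitable choice of $\Cr{tree_sum_power}$ and $\Cr{tree_sum_pref}$.

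First I will substitute \cref{eq:tree_integral_bound} directly into \cref{eq:C_sum_integral_bound}. Since the vertex product in the latter depends on $\tau$ only through $n(\tau) = n$, it factorises as
\begin{equation*}
\prod_{v \in V(\tau)} \sum_{\ell_v = 2,4,\dots,M} 2^{\ell_v/2} \ell_v! C_\GH^{\ell_v} \|V_{\ell_v}\|
=
\left[\sum_{k = 2,4,\dots,M} 2^{k/2} k! C_\GH^k \|V_k\|\right]^n .
\end{equation*}
Summing over $\tau \in \fT_n$ then produces a factor $\#\fT_n = n^{n-2}$ by Cayley's formula, and Stirling's inequality gives $n^{n-2}/n! \le e^n$, so that
\begin{equation*}
\sum_{\tau \in \fT_n} \|S_\ell^{(\Lambda)}[\tau,\cV]\|
\le
\Cr{tree_integral_start}\Cr{tree_integral}^{-2} (e \Cr{C_tree_basic}\Cr{tree_integral})^n 2^{-\ell/2} C_\GH^{-\ell} \Lambda^{d - 1 - \ell[\psi]} \left[\sum_{k} 2^{k/2} k! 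C_\GH^k \|V_k\|\right]^n .
\end{equation*}

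Next I will set $\Cr{tree_sum_power} := \sqrt 2 \, C_\GH$. This choice is rigidly forced by the requirement that $\Cr{tree_sum_power}^\ell$ absorb the prefactor $2^{-\ell/2} C_\GH^{-\ell}$ on the left and simultaneously turn the bracket into $\sum_k \Cr{tree_sum_power}^k k! \|V_k\|$, in agreement with the right-hand side of the corollary. Restricting to $\Lambda \ge 1$ (the regime of interest, cf.\ \cref{eq:V_j_scaling}), I will use $\Lambda^{d-1-\ell[\psi]} \le \Lambda^{d - \ell[\psi]}$. Finally, any $\Cr{tree_sum_pref}$ large enough that $(\Cr{tree_sum_pref}/(e \Cr{C_tree_basic} \Cr{tree_integral}))^n \ge \Cr{tree_integral_start} \Cr{tree_integral}^{-2}$ for all $n \ge 2$ (for which $\Cr{tree_sum_pref} := e\Cr{C_tree_basic}\Cr{tree_integral}\max(1,\Cr{tree_integral_start}^{1/2}\Cr{tree_integral}^{-1})$ suffices) will allow the remaining $n$-independent constants to be absorbed into the $n$-th power on the right, and summation over $n \ge 2$ yields the stated inequality.

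The only subtle point is the rigid determination of $\Cr{tree_sum_power}$ by matching the $\ell$-dependent powers of $2$ and $C_\GH$ on both sides; once that choice is fixed, all other factors either depend only on $n$ (and can be absorbed into $\Cr{tree_sum_pref}$) or are absent. The substantive work — the Gram--Hadamard bound \eqref{eq:decomp_Gram_bound}, the regrouping of the $K$-sum via $\omega$ and the combinatorial estimate \cref{lem:omega_count}, and the iterated integral estimate \cref{lem:tree_integral_bound} — has already been carried out, so no new technical obstacle is expected here.
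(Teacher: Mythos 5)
Your proof is correct and follows the approach the paper itself indicates: the corollary is stated without proof, preceded only by the remark ``Bearing in mind that $\#\fT_n = n^{n-2}$ (Cayley's formula),'' so the intended derivation is exactly the one you carry out — substitute \cref{lem:tree_integral_bound} into \cref{eq:C_sum_integral_bound}, sum over $\fT_n$, and absorb constants. The Stirling step $n^{n-2}/n! \le e^n$ and the choice of $\Cr{tree_sum_pref}$ to dominate the $n$-independent prefactor are both sound.

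Two small points. First, the claim that $\Cr{tree_sum_power} = \sqrt{2}\,C_\GH$ is ``rigidly forced'' is an overstatement: any $\Cr{tree_sum_power} \ge \sqrt{2}\,C_\GH$ also works, since $\ell \le \sum_{v\in V}\ell_v$ lets the excess factor $(\Cr{tree_sum_power}/\sqrt{2}C_\GH)^\ell$ be distributed among the vertex factors. The paper in fact relies on this flexibility when it subsequently fixes $\Cr{tree_sum_power} \ge 4 C_\GH^{1/2}$ (just below \cref{lem:tree_sum_nice}), a constraint needed for \cref{eq:QW_rho_bound}; your equality choice is sufficient to prove the present corollary in isolation but would conflict with that later convention whenever $C_\GH < 8$. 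Second, your bound $\Lambda^{d-1-\ell[\psi]} \le \Lambda^{d-\ell[\psi]}$ indeed requires $\Lambda \ge 1$; this is the regime in which the estimate is used (cf.\ \cref{eq:cSl_norm_bound} and the integral $\int_1^\infty$ in \cref{eq:F_inter_def}), so stating the restriction explicitly, as you do, is the right move.
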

For later convenience, I will take $\Cr{tree_sum_power} \ge 4 C_\GH^{1/2}$, although this is larger than necessary here.

As a result of \cref{lem:tree_sum_nice}, the sum over trees defining $\cS(\cV)$ in \cref{eq:cS_def} is absolutely convergent whenever
\begin{equation}
	\rho[\cV]
	:=
	\Cr{tree_sum_pref}
	\sum_{k = 2,4,\dots}
	\Cr{tree_sum_power}^k
	k! \ 
	\| V_k \|
	< 1
	,
	\label{eq:convergence_requisite}
\end{equation}
justifying \cref{eq:cS_stationary}.  Furthermore, with this $\Cr{tree_sum_power}$ all of the sums and integrals defining $\fW^{-1}_\Lambda \cS^{(\Lambda)}[\cV]$ and $\tfrac{\dd}{\dd \Lambda} \fW^{-1}_\Lambda \cS^{(\Lambda)}[\cV]$ are absolutely convergent near $\Lambda=1$.

When $\rho[\cV] \le 1/2$, indeed
\begin{equation}
	\mu[\cS^{(\Lambda)}[\cV]]
	:=
	\sup_{\ell = 2,4,\dots}
	\Cr{tree_sum_power}^{\ell}
	\left\|
	S_\ell^{(\Lambda)}[\cV]
	\right\|
	\le 
	2
	\Lambda^{d - \ell [\psi]}
	\rho[\cV]^2
	.
	\label{eq:cSl_norm_bound}
\end{equation}

\begin{remark}
	This is also true without restricting to $\cV \in \pot_{\le M}$; however the property \cref{eq:convergence_requisite} is not preserved by $\cS$.
	The main advantage of using a tree expansion-based ansatz is that it can accommodate this.
	\label{rem:why_trees}
\end{remark}

Proceeding similarly,
\begin{lemma}
	For any $\cV,\cW \in \pot,$ 
	\begin{equation}
		\begin{split}
			\Cr{tree_sum_power}^{\ell}
			&
			\left\|
			\cS_\ell^{(\Lambda)} [\cV]
			-
			\cS_\ell^{(\Lambda)} [\cW]
			\right\|
			\le 
			\Cr{tree_sum_power}^{\ell}
			\sum_{n=2}^\infty
			\sum_{\tau \in \fT_n}
			\left\|
			S_\ell^{(\Lambda)}[\tau,\cV]
			-
			S_\ell^{(\Lambda)}[\tau,\cW]
			\right\|
			\\ &
			\le
			\Lambda^{d - \ell [\psi]}
			\sum_{n=2}^\infty
			\sum_{m=0}^{n-1}
			\rho[\cV]^m
			\rho[\cV - \cW]
			\rho[\cW]^{n-m-1}
			\\ &
			\le
			\Lambda^{d - \ell [\psi]}
			\left( 
			\sum_{m=0}^\infty
			\rho[\cV]^m
			\sum_{n=0}^\infty
			\rho[\cW]^n
			-1
			\right)
			\rho[\cV - \cW],
		\end{split}
	\end{equation}
	with the same constants as in \cref{lem:tree_sum_nice}.
	\label{lem:cS_diff_estimate}
\end{lemma}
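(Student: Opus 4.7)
The approach exploits the multilinearity of $\cS^{(\Lambda)}[\tau,\cV]$ in $\cV$: from the recursive definition \cref{eq:S_tau_Lambda_trivial_case,eq:S_tau_Lambda_iter_def}, or equivalently from the explicit expansion \cref{eq:C_ell_tau_explicit} in which $\prod_{v \in V(\tau)} \fD_{\Lambda(v)} V_{\ell(v)}(\bs y_v)$ contains exactly one factor of $\cV$ per vertex, $S^{(\Lambda)}_\ell[\tau,\cV]$ is $n(\tau)$-linear in $\cV$, and the scale integration in \cref{eq:C_tau_def} preserves this structure. Fixing any linear ordering $V(\tau) = \{v_1, \dots, v_{n(\tau)}\}$, I would telescope
\begin{equation}
	S^{(\Lambda)}_\ell[\tau,\cV] - S^{(\Lambda)}_\ell[\tau,\cW]
	= \sum_{m=0}^{n(\tau)-1} \wt{S}^{(\Lambda)}_\ell[\tau; m],
\end{equation}
where $\wt{S}^{(\Lambda)}_\ell[\tau; m]$ denotes the expression obtained from $S^{(\Lambda)}_\ell[\tau,\cV]$ by placing $\cV$ at vertices $v_1, \dots, v_m$, placing $\cV - \cW$ at $v_{m+1}$, and placing $\cW$ at $v_{m+2}, \dots, v_{n(\tau)}$.

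I would then re-run the chain of estimates leading from \cref{eq:C_sum_integral_bound_basic} through \cref{lem:tree_sum_nice} on each $\wt{S}^{(\Lambda)}_\ell[\tau; m]$ in place of $S^{(\Lambda)}_\ell[\tau,\cV]$. All the inputs used there --- the counting bound \cref{lem:omega_count}, the Gram--Pfaffian estimate \cref{eq:omega_Pfaff_bound}, the scale-integral bound \cref{lem:tree_integral_bound}, and \cref{lem:contraction_bound_generic} --- are independent of the interaction, while the dependence on $\cV$ enters only through the vertex-local factors $\|V_{\ell(v)}\|$ that appear multiplicatively (one per vertex) in the final product over $v \in V(\tau)$. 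Replacing these factors by $\|V_{\ell(v)}\|$, $\|(V-W)_{\ell(v)}\|$, or $\|W_{\ell(v)}\|$ at the appropriate vertices, the identical argument yields
\begin{equation}
	\Cr{tree_sum_power}^{\ell} \sum_{\tau \in \fT_n}
	\bigl\| \wt{S}^{(\Lambda)}_\ell[\tau; m] \bigr\|
	\le \Lambda^{d - \ell[\psi]}\, \rho[\cV]^m\, \rho[\cV - \cW]\, \rho[\cW]^{n-m-1}.
\end{equation}

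Finally, summing over $m \in \{0, \dots, n-1\}$ and $n \ge 2$ and setting $k = n - m - 1$ converts the double sum into $\sum_{m, k \ge 0,\ m+k \ge 1} \rho[\cV]^m \rho[\cW]^k = \sum_{m \ge 0} \rho[\cV]^m \sum_{k \ge 0} \rho[\cW]^k - 1$, which after multiplying by $\rho[\cV - \cW]$ reproduces the right-hand side of the claim. The only real work is careful bookkeeping of which of the three interactions is assigned to each vertex; since the bounds in \cref{sec:tree_bounds} factorize over $v \in V(\tau)$, no new technical ingredient is needed beyond those already deployed in the proof of \cref{lem:tree_sum_nice}, and there is no substantive obstacle.
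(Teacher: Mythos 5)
Your proof is correct, and it is essentially the argument the paper intends: the paper offers no explicit proof of \cref{lem:cS_diff_estimate}, saying only ``Proceeding similarly'' with reference to \cref{lem:tree_sum_nice}, and the implied route is exactly your multilinearity-plus-telescoping decomposition followed by re-running the vertex-factorizing estimates of \cref{sec:tree_bounds}, as the form of the middle member (a sum over $m$ of $\rho[\cV]^m\,\rho[\cV-\cW]\,\rho[\cW]^{n-m-1}$) already makes evident.
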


\begin{corollary}
	For $\rho[\cV],\rho[\cW] \le 1/3$,
	\begin{equation}
		\mu
		\left[
			\cS [\cV]
			-
			\cS [\cW]
		\right]
		\le 
		2
		(\rho[\cV]+\rho[\cW])
		\rho[\cV-\cW]
		.
		\label{eq:cS_diff_simple_bound}
	\end{equation}
	\label{lem:cS_diff_simple_bound}
\end{corollary}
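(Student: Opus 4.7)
The corollary follows immediately from \cref{lem:cS_diff_estimate} by specializing to $\Lambda = 1$, taking the supremum over $\ell$ built into the definition of $\mu$, and collapsing the resulting geometric series in closed form.

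First, since $\cS := \cS^{(1)}$ and $\Lambda^{d-\ell[\psi]}|_{\Lambda = 1} = 1$ uniformly in $\ell$, \cref{lem:cS_diff_estimate} combined with the definition of $\mu$ in \cref{eq:cSl_norm_bound} gives
\begin{equation*}
\mu\left[\cS[\cV] - \cS[\cW]\right] \le \left(\sum_{m=0}^\infty \rho[\cV]^m \sum_{n=0}^\infty \rho[\cW]^n - 1\right) \rho[\cV - \cW].
\end{equation*}
The hypothesis $\rho[\cV], \rho[\cW] \le 1/3 < 1$ ensures both geometric series converge, so the parenthetical simplifies to $\frac{\rho[\cV]+\rho[\cW]-\rho[\cV]\rho[\cW]}{(1-\rho[\cV])(1-\rho[\cW])}$.

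The remaining step is to verify the elementary scalar inequality $\frac{x+y-xy}{(1-x)(1-y)} \le 2(x+y)$ for $x, y \in [0, 1/3]$. Setting $s = x+y \in [0, 2/3]$ and $p = xy$ and clearing denominators, this is equivalent to $s(1-2s) + p(1+2s) \ge 0$. For $s \le 1/2$ both summands are nonnegative and the claim is trivial. For $s \in (1/2, 2/3]$ the constraints $0 \le x, y \le 1/3$ force $p \ge (s-1/3)/3$, and substituting this lower bound reduces the inequality to checking that $-12 s^2 + 10 s - 1 \ge 0$ on that interval; this holds since the roots of the quadratic are $(5 \pm \sqrt{13})/12 \approx 0.116$ and $0.717$, which straddle $(1/2, 2/3]$. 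Applying the inequality with $x = \rho[\cV]$ and $y = \rho[\cW]$ completes the argument. There is no substantive obstacle, as the corollary is really a packaging of the Lipschitz estimate of \cref{lem:cS_diff_estimate} into the factorized form $(\rho[\cV]+\rho[\cW])\rho[\cV-\cW]$ required by the Banach-type fixed-point argument in \cref{sec:concluding_ansatz}.
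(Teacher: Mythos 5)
Your proof is correct and follows the route the paper intends (the corollary is stated without explicit proof, being a routine specialization of \cref{lem:cS_diff_estimate} to $\Lambda=1$): you correctly note that $\Lambda^{d-\ell[\psi]}=1$ kills the $\ell$-dependence so the supremum defining $\mu$ passes through, and your verification of the elementary inequality $\frac{x+y-xy}{(1-x)(1-y)}\le 2(x+y)$ on $[0,1/3]^2$, via the reduction to $s(1-2s)+p(1+2s)\ge 0$ and the lower bound $p\ge (s-1/3)/3$ when $s>1/2$, checks out. (A slightly quicker route to the scalar bound: expand as $\frac{x}{1-x}+\frac{y}{1-y}+\frac{xy}{(1-x)(1-y)} \le \frac{3}{2}(x+y)+\frac{9}{4}xy$ and use $xy\le\frac{(x+y)^2}{4}\le\frac{x+y}{6}$ to get the constant $15/8<2$.)
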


Recalling \cref{eq:QW_summable} and that $\Cr{tree_sum_power} \ge 4 C_\GH^{1/2}$,
\begin{equation}
	\left\| Q_\ell^\fW (\cV,\cW) \right\|
	\le
	C_\GH^{- ( \ell/2) - 1}
	|G|_+
	\sum_{\substack{j,k \in 2 \bN + 2}}
	\left( \frac{3}{4} \right)^{j+k}
	\Cr{tree_sum_power}^{j+k}
	\| V_j \|
	\| W_k \|
	\le
	C_\GH^{- ( \ell/2) - 1}
	|G|_+
	\mu[\cV]
	\mu[\cW]
	\label{eq:QW_rho_bound}
\end{equation}
which along with \cref{eq:cSl_norm_bound} shows that $\cS[\cV]$ is in the domain of $\cF$ whenever $\rho[\cV] \le 1/2$, and that the sums on the right hand side of \cref{eq:dC_dLam_treesum} are absolutely convergent, justifying \cref{eq:cS_stationary}.

As another consequence of \cref{eq:QW_rho_bound}, there is a $\Cl{Q_rho}$ such that
\begin{equation}
	\rho[ \fL \cQ^\fW(\cV,\cW) ]
	, \ 
	\rho[ \fI_\inter \cQ^\fW(\cV,\cW) ]
	\le 
	\Clast
	\mu[\cV] \mu [\cW]
	\label{eq:Q_rho_bound}
\end{equation}
where I use the fact that $\fL$ and $\fI_\inter$ produce kernels which vanish for degree larger than $M$ to bound the factorial in \cref{eq:convergence_requisite}.  Note that this can be combined with \cref{eq:cSl_norm_bound,lem:cS_diff_estimate}.

\subsection{Concluding the ansatz}
\label{sec:concluding_ansatz}

I will now define the interpolated part $\cI_\inter$ of the fixed point ansatz, which will also be the last missing piece for the definition of $\cI_\many$.  As mentioned in the introduction, it would be possible to incorporate this in the tree expansion as well (as in \cite{DR}) but I find it simpler to construct via a fixed point argument as below.

\begin{theorem}
	There exist $\Cl{cI_cV},\Cl{cI_norm}$ such that for all $\cV \in \pot_{\le M}$ with $\rho[\cV] < \Cr{cI_cV}$ there exists a unique $\cI_\inter =: \cI_\inter(\cV)$ such that
	\begin{equation}
		\cI_\inter
		=
		-
		\int_1^\infty
		\fI_\inter
		\cF_\Lambda ( \cV + \cI_\inter + \cS (\cV + \cI_\inter))
		\dd \Lambda
		\label{eq:cI_stationary}
	\end{equation}
	and
	\begin{equation}
		\rho[\cI_\inter] \le \Cr{cI_norm} \rho[\cV]^2
		.
		\label{eq:cI_inter_quad}
	\end{equation}
	\label{thm:interpolated_stationary}
\end{theorem}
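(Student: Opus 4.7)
The plan is to identify \eqref{eq:cI_stationary} as a fixed-point equation $\cI = T(\cI)$, where
$$T(\cI) \;:=\; -\int_1^\infty \fI_\inter\, \cF_\Lambda\!\bigl(\cV + \cI + \cS(\cV + \cI)\bigr) \dd\Lambda,$$
and to apply the Banach contraction principle on the closed ball $B_\cV := \{\cI \in \fI_\inter(\pot) : \rho[\cI] \le \Cr{cI_norm}\rho[\cV]^2\}$ inside the Banach space $(\{\cJ : \rho[\cJ] < \infty\}, \rho)$.

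For well-definedness, the scaling identity \eqref{eq:F_rescaling} gives $\cF_\Lambda = \Lambda^{-1} \fD_\Lambda \cF$, so since $\fI_\inter$ commutes with $\fD_\Lambda$ and is supported on the finite set of degrees $\ell \le M$, \cref{lem:fI_few_integral_bound} applied degree by degree together with \eqref{eq:Q_rho_bound} (and $\cF = \tfrac12 \cQ^\fW(\cdot,\cdot)$) yields
$$\rho[T(\cI)] \;\le\; C_1 \,\mu\!\bigl[\cV + \cI + \cS(\cV + \cI)\bigr]^2.$$
Using \eqref{eq:cSl_norm_bound} to bound $\mu[\cS(\cV+\cI)] \le 2\rho[\cV+\cI]^2$ (valid once $\rho[\cV+\cI] \le 1/2$), together with the trivial bound $\mu[\cW] \le C\rho[\cW]$ for $\cW \in \pot_{\le M}$ (which follows directly from the definitions of $\mu$ and $\rho$ and the fact that $k! \ge 2$ for $k \ge 2$), the right-hand side is at most $C_2(\rho[\cV] + \rho[\cI])^2$. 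Choosing $\Cr{cI_norm} \ge 4 C_2$ and $\Cr{cI_cV}$ so small that $\Cr{cI_norm}\,\Cr{cI_cV} \le 1$ then ensures $T(B_\cV) \subset B_\cV$, and in particular any fixed point in $B_\cV$ satisfies \eqref{eq:cI_inter_quad}.

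For the contraction estimate, the bilinearity of $\cQ^\fW$ gives
$$\cF(\cW_\cI) - \cF(\cW_\cJ) \;=\; \tfrac12\, \cQ^\fW(\cW_\cI - \cW_\cJ,\; \cW_\cI + \cW_\cJ),$$
with $\cW_\cI := \cV + \cI + \cS(\cV+\cI)$. Combining the same ingredients as above with the Lipschitz bound \cref{lem:cS_diff_simple_bound} produces
$$\rho[T(\cI) - T(\cJ)] \;\le\; C_3\,(\rho[\cV] + \rho[\cI] + \rho[\cJ])\,\rho[\cI - \cJ],$$
so shrinking $\Cr{cI_cV}$ further makes $T$ a strict contraction on $B_\cV$. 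The Banach fixed-point theorem then delivers the unique $\cI_\inter(\cV) \in B_\cV$, and uniqueness in a full $\rho$-ball about zero follows by applying the same contraction estimate to any two candidate fixed points.

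The main technical obstacle is the well-definedness step: the integral must converge absolutely in the $\rho$-norm at $\Lambda \to \infty$, which is exactly what the strict positivity of $\Dmin^\inter$ in \cref{lem:fI_few_integral_bound} guarantees. The essential point is that $\fI_\inter$ restricts the integrand to finitely many degrees, which both eliminates the factorial weight in the definition of $\rho$ and makes it compatible with the purely $\mu$-based bound \eqref{eq:Q_rho_bound}; this is what allows the high-degree contributions coming from $\cS(\cV+\cI)$ to be absorbed without further difficulty.
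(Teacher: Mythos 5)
Your proposal is correct and follows essentially the same route as the paper: the paper also defines the fixed-point map (there called $\Phi_\inter^\cV$), uses \cref{eq:Q_rho_bound} together with \cref{lem:fI_few_integral_bound} to get $\rho[\Phi_\inter^\cV(\cI)] \le C\,\mu[\cV+\cI+\cS(\cV+\cI)]^2$, applies \cref{eq:cSl_norm_bound} and \cref{lem:cS_diff_simple_bound} for the self-map and Lipschitz estimates, and concludes by the Banach fixed-point theorem. Your presentation is somewhat more explicit about the invariant ball $B_\cV$ and the absolute convergence of the $\Lambda$-integral, but structurally it is the same argument.
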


\begin{proof}
	For $\cV, \cI \in \pot_{\le M} $ let
	\begin{equation}
		\Phi_\inter^\cV (\cI)
		:=
		-
		\int_1^\infty \fD_\Lambda \fI_\inter \cQ^\fW(\cV + \cI + \cS (\cV + \cI),  \cV + \cI + \cS (\cV + \cI))
		\frac{\dd \Lambda}{\Lambda}
		.
		\label{eq:F_inter_def}
	\end{equation}
	Then using \cref{eq:Q_rho_bound,lem:fI_few_integral_bound}
	\begin{equation}
		\rho [\Phi_\inter^\cV (\cI)]
		\le
		\C 
		\mu[\cV + \cI + \cS(\cV + \cI)]^2
		;
	\end{equation}
	assuming $\rho[\cV + \cI] \le 1/2$ so as to use \cref{eq:cSl_norm_bound}, this implies
	\begin{equation}
		\rho [\Phi_\inter^\cV (\cI)]
		\le
		\C 
		\rho[\cV+\cI]^2
		,
		\label{eq:Phi_rho_squared_bound}
	\end{equation}
	and using \cref{lem:cS_diff_simple_bound} and proceeding as above also
	\begin{equation}
		\rho [\Phi_\inter^\cV (\cI - \cI')]
		\le
		\C
		\rho[\cV] \rho[\cI - \cI']
	\end{equation}
	and letting $\Cr{cI_norm} < 1/\Clast$ we can apply the Banach fixed point theorem to conclude that $\Phi_\inter^\cV$ has a unique fixed point $\cI_\inter$, which is the unique solution of \cref{eq:cI_stationary}, and the bound on $\rho[\cI_\inter]$ follows from \cref{eq:Phi_rho_squared_bound}.
\end{proof}

\begin{proof}[Proof of \cref{thm:FP_irrel}]
	Letting $\cI_\many (\cW) := \cS (\cW + \cI_\inter(\cW))$, \cref{eq:cI_inter_stationary} follows from \cref{eq:cI_stationary}, and \cref{eq:cI_many_stationary} follows from \cref{eq:cS_stationary}.  

	Recall the definition of $\rho$ in \cref{eq:convergence_requisite}.  Since each $\pot_k$ is a Banach space with norm $\| \cdot \|$, \cref{it:rho_prenorm} clearly holds, and \cref{it:rho_local_finite} is also obviously true since $\fL \cV$ always has only a finite number of nonzero components.
	Using \cref{eq:cI_inter_quad} with the comments around \cref{eq:convergence_requisite}, \cref{it:B_from_rho} holds with $1/(1+\Cr{cI_norm})$.
	Finally, the bounds in \cref{it:remainder_bounds} follow from the estimate in \cref{thm:interpolated_stationary} using \cref{eq:Q_rho_bound,lem:cS_diff_simple_bound}.
\end{proof}

\section{Conclusions}
\label{sec:conclusions}

The proof of \cref{thm:main_FP} can be concluded as follows, combining \cref{thm:FP_irrel} with some considerations on the relevant part from \cref{sec:relevant_parts} and the relationship between different forms of the fixed point problem from \cref{sec:flow_formal,sec:Wick}.  

\begin{proof}[Proof of \cref{thm:main_FP}]
	Using \cref{eq:cI_inter_stationary,eq:cI_many_stationary}, we see that for any $\cW_\fL^* \in \fL(\pot)$ which satisfies
	\begin{equation}
		\Delta \cW_\fL^*
		=
		\fL \cF_1 (\cW_\fL^* + \cI_\inter(\cW_\fL^*) + \cI_\many(\cW_\fL^*))
		\label{eq:rel_stationary_full}
	\end{equation}
	(recall that $\Delta$ is the generator of dilations reintroduced in \cref{eq:Delta_explicit}), $\cW^* := \cW_\fL^* + \cI_\inter(\cW_\fL^*) + \cI_\many(\cW_\fL^*)$ satisfies \cref{eq:Wick_stationary_formal} whenever it is well defined, i.e.\ whenever $\cV_\fL^* \in \sB$.

	To see that \cref{eq:rel_stationary_full} has a nontrivial solution, recalling the basis of $\fL(\pot)$ introduced in \cref{eq:fLpot_smallbasis} we can parameterize
	\begin{equation}
		\label{eq:FP_local}
		\cW_\fL^*
		=
		\wt \nu 
		\cY_\Omega^2
		+
		\lambda
		\cY_\Omega^4
		,
	\end{equation}
	making it clear that \cref{eq:rel_stationary_full} is in fact a finite-dimensional problem.
	Recalling \cref{eq:Y_fL_eig}
	\begin{equation}
		\Delta \cW_\fL^*
		=
		(d - 2[\psi])
		\wt \nu 
		\cY_\Omega^2
		+
		(d - 4 [\psi])
		\lambda
		\cY_\Omega^4
		=
		\tfrac12(d - 2 \varepsilon)
		\wt \nu 
		\cY_\Omega^2
		+
		2 \varepsilon
		\lambda
		\cY_\Omega^4
		.
	\end{equation}
	The right hand side of \cref{eq:rel_stationary_full} is less explicit, but it can be expanded in in powers of $\cW_\fL^*$ as
	\begin{equation}
		\begin{aligned}
			\fL \cF (\cW_\fL^* + & \cI_\inter(\cW_\fL^*) + \cI_\many(\cW_\fL^*))
			\\
			&=
			\begin{aligned}[t]
				\tfrac12
				\fL \cQ^\fW (\cW_\fL^* , \cW_\fL^* )
				&
				+
				\fL \cQ^\fW (\cW_\fL^* ,  \cI_\inter(\cW_\fL^*) + \cI_\many(\cW_\fL^*) )
				\\ &
				+
				\tfrac12
				\fL \cQ^\fW (\cI_\inter(\cW_\fL^*) + \cI_\many(\cW_\fL^*), \cI_\inter(\cW_\fL^*) + \cI_\many(\cW_\fL^*))
				.
			\end{aligned}
			\\ &
			=: 
			\tfrac12
			\fL \cQ^\fW (\cW_\fL^* , \cW_\fL^* )
			+
			\delta \cF(\cW_\fL^*)
			;
		\end{aligned}
		\label{eq:rel_RHS_split}
	\end{equation}
	writing out $\fL \cQ^\fW (\cW_\fL^* , \cW_\fL^* ) $ explicitly in terms of the parameters $\wt \nu, \lambda$ gives the right hand side of \cref{eq:truncated_FP_prelim}, and so \cref{eq:rel_RHS_split} is equivalent to \cref{eq:beta_with_error} with
	\begin{equation}
		\delta \cF(\cW_\fL^*)
		=
		B_{\wt \nu} (\wt \nu, \lambda)
		\cY_\Omega^2
		+
		B_{\lambda} (\wt \nu, \lambda)
		\cY_\Omega^4
		\label{eq:B_implicit_def}
	\end{equation}
	and the bounds \cref{eq:B_small,eq:B_Lipschitz} then follow immediately by applying the bounds in \cref{it:remainder_bounds} of \cref{thm:FP_irrel}. 
	As noted above, this immediately implies that \cref{eq:beta_with_error} has a solution of the form \cref{eq:rel_FP_asymp}, in particular with $\lambda \neq 0$, for small positive $\varepsilon$.  Retracing the steps above, this corresponds to a nonzero solution $\cW_\fL^*$ of \cref{eq:rel_stationary_full}, and so also a nonzero solution $\cW^* = \cW_\fL^* + \cI(\cW_\fL^*)$ of \cref{eq:Wick_stationary_formal}.
	Finally, recalling \cref{it:B_from_rho} of \cref{thm:FP_irrel}, $\cV^* := \fW^{-1} \cW^*$ is well-defined (and nonzero, since $\fW^{-1}$ is invertible) and $\Lambda \mapsto \fD_\Lambda \cV^*$ is indeed a solution of \cref{eq:flow}.  
\end{proof}

Note that the statement that $\cV^* \in \pot$ includes the requirement that it be invariant under the symplectic and rotation symmetries of the model.  The behavior of the local part of $\cV^*$ for small $\varepsilon$ can be obtained as discussed in \cref{sec:example}.  This gives $\rho[\cV^*]= \cO(\varepsilon), \ \varepsilon \to 0^+$, which together with \cref{eq:cI_inter_quad} implies
\begin{equation}
	\| \fI \cV^* \|_{\ell} = \cO(\varepsilon^2)
	, \quad
	\varepsilon \to 0_+
	, \ 
	\ell = 2,4
	.
\end{equation}
Furthermore, since the sum defining $S_\ell[\tau,\ul \Lambda,\cW^*_\cL]$ in \cref{eq:C_ell_tau_explicit} is empty unless the tree $\tau$ has at least $(\ell -2)/2$ vertices,
\begin{equation}
	\| \cW^* \|_\ell
	=
	\cO \left( \rho[\cW^*_\fL]^{(\ell-2)/2} \right)
	=
	\cO \left( \varepsilon^{(\ell-2)/2} \right)
	,
	\quad
	\varepsilon \to 0_+
	, \ 
	\ell = 6,8,\dots
\end{equation}
which in turn, recalling \cref{eq:fW_inverse}, implies bounds of the same order for $\cV^*$.  Further bounds can also be obtained in a similar fashion.

\bigskip

Although the correlation functions associated with the fixed point (which are automatically invariant under rescaling) can presumably be constructed using the methods of \cite{GMRS.correlations} (indeed, if the fixed point I construct here can be shown to be the element of \Grassmann\ equivalent to the fixed point kernel constructed there, the correlation functions are necessary the same), it should also be possible to provide a construction based directly on the Polchinski equation.  It will be interesting to see whether it is possible to provide a proof which involves the linearized flow near the fixed point, giving a more systematic treatment of all relevant observables.

The construction of the correlation functions is probably a prerequisite for approaching a number of open problems, such as the relationship between different choices of the cutoff and conformal invariance.  Recall that the cutoff, in the setup of this article, corresponds to choosing a particular $G_\Lambda$ among those corresponding to the same $P_\infty = \int_0^\infty G_\Lambda \dd \Lambda$.  Different choices of the cutoff are expected to be unimportant, at least in the sense of producing the same behavior at long distances, but this is not evident at the level of the effective action corresponding to the fixed point, which depends on the form of the cutoff, as can be seen from \cref{eq:rel_FP_asymp}.  The clearest notion of ``long range behavior'' is the asymptotic behavior of the correlation functions; for the fixed point studied in \cite{GMR,GMRS.correlations} (presumably the same one constructed here) there is no obvious dependence on the cutoff (in particular the explicit term in \cite[Equation~(2.42)]{GMRS.correlations} is independent of the cutoff $\chi$) but it has yet to be conclusively demonstrated even in this case \cite[Footnote~4]{GMRS.correlations}.

Conformal invariance is also a property of the correlation functions, and although it is very often the case that quantum field theories which are invariant under rescaling are also invariant under general conformal transforms, this is not always the case (see \cite{GNR24.noconformal,Nakayama24.noconformal} and the references therein).  
Conformal invariance can be approached via renormalization group equations \cite{Schafer.conformal}; the starting point would be to consider solutions where the dilation operator is extended to include general conformal transformations, which also involves a problem similar to studying a variety of cutoffs, since the usual classes of cutoffs are generated by scaling only.  
There is no obvious obstacle to reformulating the methods given here for a larger class of cutoffs, and it will be interesting to see whether this produces the desired invariance.

\section*{Acknowledgements}
I wish to thank Margherita Disertori, Paweł Duch, Luca Fresta, Alessandro Giuliani, and an anonymous referee for discussions or comments which led to various clarifications, improvements, and corrections.

The present work was supported in part by the MIUR Excellence Department Project MatMod@TOV awarded to the Department of Mathematics, University of Rome Tor Vergata, CUP E83C23000330006, and during the revision of the manuscript also by the ERC Starting Grant MaTCh (Grant agreement 101117299, P.I.\ Serena Cenatiempo).

The author is a member of GNFM-INdAM (Italy).

\appendix
\section{The Steiner diameter}
\label{sec:Steiner}

For $x,y \in \bR^d$, let $\lis{xy}$ denote the closed line segment from $x$ to $y$.  
Then for $\bs x = (x_1,\dots,x_\ell) \in \bR^{\ell d}$ 
let $T(\bs x)$ be the set of spanning trees $\bs x$, that is
\begin{equation}
	T(\bs x)
	:=
	\left\{ t \subset \{x_1,\dots,x_{\ell}\}^2 : \ \bigcup_{e \in t} \lis{e_1 e_2} \textup{ is simply connected} \right\},
\end{equation}
and for $t \in T(\bs x)$ let
\begin{equation}
	|t| 
	:=
	\sum_{e \in t}
	|e_1 - e_2|
	.
\end{equation}
Then the Steiner diameter of the set $\{ x_1,\dots,x_\ell\}$ or equivalently the tuple $\bs x$ is  
\begin{equation}
	\cT(\bs x)
	:=
	\cT(x_1,\dots,x_\ell)
	:=
	\inf_{k =0,1,\dots}
	\ 
	\inf_{x_{\ell+1},\dots,x_{\ell+k} \in \bR^d}
	\ 
	\inf_{t \in T(\bs x)} |t|
	,
	\label{eq:Steiner_def}
\end{equation}
that is the infimum of the length of all trees which connect all of the points in $\bs x$, with additional vertices allowed.  In fact the infimum is realized, but this is not important here.
What is important is that this obviously has the following properties.
For brevity, I will write $\cT(\bs x_1,\dots,\bs x_n)$ for the Steiner diameter of the sequence obtained by concatenating $\bs x_1,\dots,\bs x_n$.
Then for any tuples $\bs x, \bs y, \bs z$,
\begin{enumerate}
	\item For $\ell =2$, $\cT(x_1,x_2) = |x_1 - x_2|$,
	\item $\cT(\bs x) \le \cT(\bs x, \bs y)$,
	\item $\cT(\bs x, \bs y, \bs z) \le \cT(\bs x, \bs  y) + \cT(\bs y, \bs z)$.
\end{enumerate}
The inequalities follow immediately from inclusion relationships over the sets of trees involved.

\section{Pfaffians and the Gram-Hadamard bound}
\label{sec:Pfaff}

Let $g : X^2 \to \bC$ and $K \in (0,\infty)$; in this section, I will say that $g$ \emph{has a Gram decomposition with constant $K$} (or has Gram constant $K$) if there is a Hilbert space $\cH$ with inner product $ \left\langle \cdot,\cdot \right\rangle$ containing families of vectors $\{\gamma_x\}_{x \in X}, \ \{\tilde \gamma_y\}_{y \in X}$ such that
\begin{equation}
	g(x,y) = \left\langle \gamma_x, \tilde \gamma_y \right\rangle
	\text{ and }
	|\gamma_x|^2 , |\tilde \gamma_y|^2 
	\le 
	K
	\quad
	\forall x,y \in X
	.
	\label{eq:gram_def}
\end{equation}
For example, for the $g^\hk$ defined in \cref{eq:gt_def} can be written as an inner product in $L^2(\bR^d)$ as, for example,
\begin{equation}
	\begin{split}
		g^\hk(x-y)
		& =
		\frac{2 }{\Gamma\left( 1+\tfrac{d}4 + \tfrac{\varepsilon}2 \right)}
		\int \frac{\dd^d k}{(2 \pi)^d} 
		k^2
		e^{i \; k \cdot (x-y)}
		\\ & =
		\int 
		\begin{aligned}[t]
			&
			\left(  
				\sqrt{
					\frac{2 }{\Gamma\left( 1+\tfrac{d}4 + \tfrac{\varepsilon}2 \right)}
				}
				|k| e^{-k^2/2}
				e^{i k \cdot x}
			\right)
			\\ & \times
			\left(  
				\sqrt{
					\frac{2 }{\Gamma\left( 1+\tfrac{d}4 + \tfrac{\varepsilon}2 \right)}
				}
				|k| e^{-k^2/2}
				e^{- i k \cdot y}
			\right)
			\dd^d k 
		\end{aligned}
	\end{split}
\end{equation}
giving $ x,y \mapsto g^\hk(x-y)$ a decomposition with constant
\begin{equation}
	\C
	=
	\frac{2 }{\Gamma\left( 1+\tfrac{d}4 + \tfrac{\varepsilon}2 \right)}
	\int
	k^2 e^{-k^2}
	\dd^d k
	;
\end{equation}
also noting that
\begin{equation}
	\partial_x^{\ul m} \partial_y^{\ul n}g^\hk(x-y)
	=
	\frac{2 }{\Gamma\left( 1+\tfrac{d}4 + \tfrac{\varepsilon}2 \right)}
	\int \frac{\dd^d k}{(2 \pi)^d} 
		k^2
		(ik)^{\ul m} e^{i k \cdot x}
		(-ik)^{\ul n} e^{-i k \cdot y}
\end{equation}
there is a similar decomposition for $(x, \ul m),(y, \ul n) \mapsto\partial_x^{\ul m} \partial_y^{\ul n}g^\hk(x-y) $ (restricting to $|\ul m|, |\ul n| \le 2$) with a different constant $\Cl{gram_g_base}$. 
Following the same procedure with the decomposition $ \wt \xi = \sqrt{\wt \xi}\sqrt{\wt \xi}$ gives a similar decomposition for $g^\xi$

Given two functions $g_1 : X_1^2 \to \bC$, $g_2 : X_2^2 \to \bC$ with Gram compositions with respective constants $K_1, K_2$ the function $ (x_1,x_2),(y_1,y_2)\mapsto g_1(x_1,y_1)g_2(x_2,y_2)$ clearly has a Gram decomposition given by simply taking the tensor products, which has constant $K_1 K_2$.
For example, since $\Omega_{ab}$ defined in \cref{eq:Omega_matrix} trivially has a Gram decomposition with constant 1, $G = \Omega g$ has a Gram decomposition with the same constant $\Cr{gram_g_base}$ as $g$.

More generally, if we assume (as in \cref{sec:formalism}) that  $G$ (with up to $R$ derivatives on each argument) has Gram constant $\Cr{gram_g}$, then for $\Lambda \in [0,1]$, it is easy enough to see that $G_\Lambda(x,y) = \Lambda^{2[\psi]-1} G (\Lambda x, \Lambda y)$ does with $\Cr{gram_g} \Lambda^{2[\psi]-1}$ (for $\Lambda > 1$ the derivatives make this larger).
Then
\begin{lemma}
	For a given $R,s \in \bN+1$, $[\psi] >0$, and  $G \in \diffy_2$ with Gram constant $\Cr{gram_g}$, 
	all of the functions
	\begin{equation}
		(x, a, \ul m),(y, b, \ul n)
		\mapsto 
		\partial_x^{\ul m}
		\partial_y^{\ul n}
		\int_0^1
		f(\Lambda)
		G_\Lambda (x,y)
		\dd \Lambda
	\end{equation}
	with $f:[0,1] \to \bC$, $\sup_{x \in [0,1]} |f(x)| \le 1$
	have a Gram decomposition with constant (at most) $C_\GH =\Cr{gram_g}/2[\psi] $.
	\label{lem:gram_Lambda_integral}
\end{lemma}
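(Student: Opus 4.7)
My plan is to realize the desired Gram decomposition inside a direct integral over $\Lambda$ of copies of the Hilbert space $\cH$ in which $G$ and its derivatives admit a Gram decomposition with constant $\Cr{gram_g}$. The key observation is that the integral $\int_0^1 \dots \dd\Lambda$ appearing in the definition of the kernel is naturally absorbed into an $L^2$ inner product on $[0,1]$, and with $[\psi]>0$ the resulting integral of $\Lambda^{2[\psi]-1}$ is precisely what produces the factor $1/(2[\psi])$ in the final constant.

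More concretely, I would first bring the derivatives inside the $\Lambda$-integral using dominated convergence (routine, given the Gaussian/Gevrey decay of $G$ and $|f|\le 1$) and use the scaling $G_\Lambda(x,y) = \Lambda^{2[\psi]-1}G(\Lambda x,\Lambda y)$ to write
\begin{equation*}
\partial_x^{\ul m}\partial_y^{\ul n}G_\Lambda(x,y)
= \Lambda^{2[\psi]-1+|\ul m|+|\ul n|}
\bigl\langle \tilde\gamma_{a,\ul m}(\Lambda x),\,\gamma_{b,\ul n}(\Lambda y)\bigr\rangle_{\cH}.
\end{equation*}
Next I would introduce $\cH' := L^2([0,1],\dd\Lambda;\cH)$, with its natural inner product $\langle\phi,\psi\rangle_{\cH'} := \int_0^1 \langle\phi(\Lambda),\psi(\Lambda)\rangle_{\cH}\dd\Lambda$, and split the scalar prefactor between the two sides, placing $f$ entirely on one side (permitted because $|f|\le 1$):
\begin{equation*}
\Gamma_{x,a,\ul m}(\Lambda) := \Lambda^{[\psi]-\tfrac12+|\ul m|}\tilde\gamma_{a,\ul m}(\Lambda x),
\qquad
\tilde\Gamma_{y,b,\ul n}(\Lambda) := f(\Lambda)\,\Lambda^{[\psi]-\tfrac12+|\ul n|}\gamma_{b,\ul n}(\Lambda y).
\end{equation*}
By construction $\langle\Gamma_{x,a,\ul m},\tilde\Gamma_{y,b,\ul n}\rangle_{\cH'}$ equals the left hand side of the claimed identity, giving a Gram decomposition in $\cH'$.

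Finally the norm bound is a direct calculation: since $|\ul m|,|\ul n|\ge 0$ and $\Lambda\in[0,1]$, the factor $\Lambda^{2|\ul m|}$ (resp.\ $\Lambda^{2|\ul n|}$) is at most $1$, so
\begin{equation*}
|\Gamma_{x,a,\ul m}|_{\cH'}^2
\le \Cr{gram_g}\int_0^1 \Lambda^{2[\psi]-1+2|\ul m|}\dd\Lambda
\le \Cr{gram_g}\int_0^1 \Lambda^{2[\psi]-1}\dd\Lambda
= \frac{\Cr{gram_g}}{2[\psi]} = C_\GH,
\end{equation*}
and the identical bound for $|\tilde\Gamma_{y,b,\ul n}|_{\cH'}^2$ thanks to $\|f\|_\infty\le 1$. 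There is no real obstacle here; the only subtlety worth flagging is that the splitting of the $\Lambda$-exponent must be done asymmetrically in $|\ul m|$ and $|\ul n|$ (each side carries its own derivative exponent) rather than symmetrically, because $\ul m$ and $\ul n$ can differ in $|\cdot|$. The hypothesis $[\psi]>0$ is precisely what is needed for the integral $\int_0^1\Lambda^{2[\psi]-1}\dd\Lambda$ to converge at $0$.
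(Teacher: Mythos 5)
Your proof is correct and takes essentially the same approach as the paper: the paper's proof simply says the decomposition is obtained "trivially" on $\cH \otimes L^2(0,1)$, and you spell out exactly that construction, including the (correct) asymmetric split of the $\Lambda$-power between the two Gram vectors and the observation that $\Lambda^{2|\ul m|} \le 1$ on $[0,1]$ gives the uniform bound $\Cr{gram_g}/2[\psi]$.
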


This includes $P$ and also some other related functions introduced in \cref{sec:regroup}.

\begin{proof}
	Denoting the Gram decomposition of $G_\Lambda$ by
	\begin{equation}
		\partial_x^{\ul m}
		\partial_y^{\ul n}
		[G_\Lambda(x,y)]_{ab}
		=
		\left\langle \gamma^{(\Lambda)}_{x,a,\ul m} , \tilde \gamma^{(\Lambda)}_{y,b,\ul n}\right\rangle
		=
		\Lambda^{2 [\psi] + |\ul m| + |\ul n| -1}
		\left\langle \gamma_{\Lambda x,a,\ul m} , \tilde \gamma_{\Lambda y,b,\ul n}\right\rangle
	\end{equation}
	(I write this out to emphasize that these are all on the same Hilbert space $\cH$ independent of $\Lambda$),
	the function of interest trivially has a Gram decomposition on $\cH \otimes L^2 (0,1)$ with the advertised constant.
\end{proof}

\begin{lemma}
	If $f, g : X^2 \to \bC$ have Gram decompositions with the same constant $K$, then $f-g$ has a Gram decomposition with constant $2K$.
	\label{lem:Gram_diff}
\end{lemma}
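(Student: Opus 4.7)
The plan is to combine the two given Gram decompositions on a direct sum Hilbert space, flipping a sign on one factor so that the inner product produces a difference rather than a sum.

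Concretely, let $\cH_1, \cH_2$ be Hilbert spaces with families $\{\gamma_x\}, \{\tilde\gamma_y\} \subset \cH_1$ and $\{\delta_x\}, \{\tilde\delta_y\} \subset \cH_2$ realizing the Gram decompositions of $f$ and $g$ respectively, so that
\begin{equation*}
	f(x,y) = \langle \gamma_x, \tilde\gamma_y\rangle_{\cH_1}, \qquad
	g(x,y) = \langle \delta_x, \tilde\delta_y\rangle_{\cH_2},
\end{equation*}
with $|\gamma_x|^2, |\tilde\gamma_y|^2, |\delta_x|^2, |\tilde\delta_y|^2 \le K$ for all $x,y \in X$. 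On $\cH := \cH_1 \oplus \cH_2$ with its natural inner product, I would define
\begin{equation*}
	\Gamma_x := (\gamma_x, \delta_x), \qquad
	\tilde\Gamma_y := (\tilde\gamma_y, -\tilde\delta_y).
\end{equation*}
Then $\langle \Gamma_x, \tilde\Gamma_y\rangle_{\cH} = \langle \gamma_x, \tilde\gamma_y\rangle_{\cH_1} - \langle \delta_x, \tilde\delta_y\rangle_{\cH_2} = f(x,y) - g(x,y)$, while
\begin{equation*}
	|\Gamma_x|^2 = |\gamma_x|^2 + |\delta_x|^2 \le 2K,
	\qquad
	|\tilde\Gamma_y|^2 = |\tilde\gamma_y|^2 + |\tilde\delta_y|^2 \le 2K,
\end{equation*}
which is exactly the required Gram decomposition with constant $2K$.

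There is no real obstacle here; the only thing to be careful about is not to identify the two Hilbert spaces, since the decompositions of $f$ and $g$ need not live on the same space, and to use the orthogonal direct sum so that the cross terms vanish and one gets a clean difference.
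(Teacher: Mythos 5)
Your proof is correct and uses exactly the same direct-sum-with-sign-flip construction as the paper (which writes $(f-g)(x,y) = \langle \varphi_x \oplus \gamma_x,\ \tilde\varphi_y \oplus (-\tilde\gamma_y)\rangle$). You merely make explicit the norm check and the fact that the two original Hilbert spaces need not coincide, both of which the paper leaves implicit.
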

\begin{proof}
	Writing $f(x,y) = \left\langle \varphi_x , \tilde \varphi_y \right\rangle$ and $g(x,y) = \left\langle \gamma_x , \tilde \gamma_y \right\rangle$, then
	$$(f-g)(x,y) = \left\langle \varphi_x \oplus \gamma_x , \tilde \varphi_y \oplus (-\tilde \gamma_y) \right\rangle.$$	
\end{proof}

The key application of these decompositions is the following bound:
\begin{theorem}[Gram-Hadamard bound]
	Let $M$ be a $2n \times 2n$ antisymmetric matrix such that $M_{ij} = g(x_i,x_j)$ for some $x_1,\dots,x_{2n} \in X$, $g: X^2 \to \bC$ with Gram constant $K$.

	Then $\Pf M \le K^n$.
	\label{thm:gram}
\end{theorem}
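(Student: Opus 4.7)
The plan is to reduce the Pfaffian bound to the classical Hadamard bound for determinants, via the identity $(\Pf M)^2 = \det M$ valid for any antisymmetric matrix. Since this identity gives $|\Pf M| = \sqrt{|\det M|}$, it suffices to establish $|\det M| \le K^{2n}$.

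The first step is to rewrite the matrix using the Gram decomposition. By hypothesis there exist vectors $\gamma_1,\ldots,\gamma_{2n}, \tilde\gamma_1,\ldots,\tilde\gamma_{2n}$ in a Hilbert space $\cH$ (abbreviating $\gamma_i := \gamma_{x_i}$, $\tilde\gamma_i := \tilde\gamma_{x_i}$) with $|\gamma_i|^2,|\tilde\gamma_i|^2 \le K$ and $M_{ij} = \langle \gamma_i,\tilde\gamma_j\rangle$. By restricting to the (at most $2n$-dimensional) closed subspace $\cH_0 := \overline{\mathrm{span}}\{\gamma_i,\tilde\gamma_i\}_i$ and picking any orthonormal basis, we can encode the two families as the columns of two $2n \times 2n$ matrices $\Gamma$ and $\tilde\Gamma$ so that $M = \Gamma^* \tilde\Gamma$ (if $\dim \cH_0 < 2n$, pad with zero columns; this does not affect the product).

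The second step is to apply Hadamard's inequality in the form $|\det A| \le \prod_i \|a_i\|$ where $a_i$ is the $i$-th column of $A$. This gives
\begin{equation*}
|\det M| = |\det \Gamma^*|\,|\det \tilde\Gamma| \le \Bigl(\prod_{i=1}^{2n} |\gamma_i|\Bigr)\Bigl(\prod_{j=1}^{2n} |\tilde\gamma_j|\Bigr) \le K^{n}\cdot K^{n} = K^{2n},
\end{equation*}
using $|\gamma_i|\le K^{1/2}$ and $|\tilde\gamma_j|\le K^{1/2}$ for each of the $2n$ indices. Taking square roots and invoking the Pfaffian-determinant identity yields $|\Pf M|\le K^n$, as required.

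There is no real obstacle here; the only points to verify carefully are the Pfaffian-determinant identity (which depends on the antisymmetry of $M$ explicitly included in the hypothesis, and is the reason the bound scales as $K^n$ rather than $K^{2n}$) and the legitimacy of reducing to a finite-dimensional ambient space before applying Hadamard. The latter is standard: all determinants and inner products involved depend only on the finitely many vectors $\{\gamma_i,\tilde\gamma_i\}$, so projecting onto their span changes nothing. Alternatively, one can bypass Hadamard by directly bounding the Pfaffian expansion using Gram-Schmidt orthogonalization of the $\gamma_i$'s (writing $\gamma_i = \sum_{k\le i} c_{ik} e_k$ with $\sum_k |c_{ik}|^2 = |\gamma_i|^2$), but the determinant route is shorter.
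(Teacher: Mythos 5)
Your proof is correct and follows essentially the same route as the paper: write $M$ as a product of two $2n\times 2n$ matrices via the Gram decomposition after projecting onto the finite-dimensional span, apply Hadamard's inequality to each factor to get $|\det M|\le K^{2n}$, and conclude from $(\Pf M)^2=\det M$. The only (inessential) difference is that the paper projects onto the span of the $\gamma$'s alone, while you project onto the span of both families; both give matrices with column norms at most $\sqrt K$.
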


\begin{proof}
	Let $G : \bC^{2n} \to \cH$, $z \mapsto \sum_{j=1}^{2n} z_j \gamma_{x_j}$ and similarly $\tilde G$, so that $M = G^* \tilde G$.
	Letting $P$ be the projection onto the span of $\left\{ \gamma_{x_1},\dots,\gamma_{x_{2n}} \right\}$,
	\begin{equation}
		M 
		= G^* \tilde G 
		= G^* P \tilde G
		= (G^* P) (P \tilde G)
		=: \hat G^* \hat{ \tilde G}
	\end{equation}
	where $\hat G, \hat{\tilde G}$ are $2n \times 2n$ matrices whose rows are vectors with norm at most $\sqrt K$.  Then by Hadamard's inequality
	\begin{equation}
		|\det \hat G| , |\det \hat{\tilde G}|
		\le 
		K^{n}
		;
	\end{equation}
	thus $|\det M| \le K^{2n}$, and since $(\Pf M)^2 = \det M$ this gives the desired bound.
\end{proof}

\section*{Data availability}
Data sharing is not applicable to this article as no new data were created or
analyzed in this study.

\section*{Conflict of interest}
I am unaware of any conflicts of interest relevant to this article.

\defbibnote{preprints}{Note that for the preprints cited below I indicate the latest version I have consulted as of writing the relevant passages.}
\printbibliography[heading=bibintoc,prenote=preprints]
\end{document}